\newtheorem{theorem}{Theorem}
\newtheorem{corollary}{Corollary}
\newtheorem{lemma}{Lemma}
\newtheorem{remark}{Remark}
\newtheorem{definition}{Definition}
\newcommand{\Poly}{{\textbf{P}}}
\newcommand{\NP}{{\textbf{NP}}}
\newcommand{\coNP}{{\textbf{coNP}}}
\newtheorem{proof}{Proof}
\title{Robustness of Complex Networks \\ with Implications for Consensus and Contagion 
\thanks{This material is based upon work supported in part by the Natural Sciences and Engineering Research Council of Canada, and by the Waterloo Institute for Complexity and Innovation. Parts of this paper were presented at the 51st IEEE Conference on Decision and Control \cite{zhang2012robustness}.}}
\author{Haotian~Zhang, Elaheh~Fata~and~Shreyas~Sundaram%
\thanks{The authors are with the Department of Electrical and Computer Engineering at the University of Waterloo. E-mail for corresponding author: {\tt ssundara@uwaterloo.ca}.}}
\begin{document}
\maketitle

\begin{abstract}
We study a graph-theoretic property known as robustness, which plays a key role in certain classes of dynamics on networks (such as resilient consensus, contagion and bootstrap percolation). This property is stronger than other graph properties such as connectivity and minimum degree in that one can construct graphs with high connectivity and minimum degree but low robustness.
However, we show that the notions of connectivity and robustness coincide on common random graph models for complex networks (Erd\H os-R\'enyi, geometric random, and preferential attachment graphs).  More specifically, the properties share the same threshold function in the Erd\H os-R\'enyi model, and have the same values in one-dimensional geometric graphs and preferential attachment networks. This indicates that a variety of purely local diffusion dynamics will be effective at spreading information in such networks.  Although graphs generated according to the above constructions are inherently robust, we also show that it is $\coNP$-complete to determine whether any given graph is robust to a specified extent.
\end{abstract}

\begin{IEEEkeywords}
Random graphs, robustness, complex networks, matching-cut
\end{IEEEkeywords}

\maketitle
\thispagestyle{empty}
\pagestyle{empty}

\section{Introduction}
\label{sec:intro}
The emergence of collective dynamics in large networks of interacting agents has inspired the study of complex networks; classical examples of such networks abound in both the natural world (e.g., ecological systems, biological systems, and social systems), and in engineered applications (e.g., the Internet, the power grid, large-scale sensor networks).   Due to their prevalence, a topic of particular interest has been the {\it robustness} of such networks to disruptions, both in the structure and in the dynamics that are occurring on the network.  Studies of structural robustness have characterized the ability of different networks to withstand the loss of nodes, either due to accidental failure or to targeted attack \cite{Albert00,Callaway00,Bollobas04}, with the connectivity of the network being of primary interest.
Studies of the robustness of dynamics, on the other hand, focus on the ability of the nodes in the network to achieve certain objectives even when some nodes deviate from expected behavior.  For instance, various dynamics for information diffusion have been studied in the context of synchronization (or consensus)  \cite{Lynch96}, information cascades \cite{Morris00,Goldenberg01}, and broadcasting \cite{Hromkovic05}.  In such cases, a fundamental challenge is to identify topological properties that allow legitimate information to propagate throughout the network, while limiting the effects of illegitimate actions.
In the case of distributed consensus in computer networks, algorithms have been proposed to overcome adversaries when all nodes know the network topology and the {\it connectivity} of the network is sufficiently high \cite{Lynch96,SundaramHadjicostis11,Pasqualetti12}. 
However, it was shown in \cite{ZhangSundaram2012,leblanc2013resilient} that connectivity is no longer sufficient to guarantee resilient consensus when the nodes use a natural class of algorithms that only require each node to know its own neighborhood.
Instead, \cite{ZhangSundaram2012} introduced a definition of network robustness to deal with such dynamics and showed that resilient consensus can be reached without requiring global information in graphs that are sufficiently robust.
This notion of robustness is stronger than other topological properties such as connectivity (in that highly robust graphs require high connectivity) and as we describe later, also plays a role in the study of contagion and bootstrap percolation in networks.

Motivated by the role that this notion of robustness plays in the above dynamics, in this paper we study this property in three common random graph models for complex networks: Erd\H os-R\'enyi graphs, geometric random graphs, and preferential attachment graphs.  In Erd\H os-R\'enyi graphs, it was established in \cite{ErdosRenyi1961} that the properties of connectivity and minimum degree share the same threshold function; we show that robustness also shares this threshold. 
This is perhaps surprising, given the existence of pathological graphs where robustness and connectivity are far apart (as shown later in this paper), and indicates that the graphs gain a richer structure at this threshold than simply being $r$-connected.
For the other two models,  we show that robustness and connectivity are equivalent in one-dimensional geometric graphs and in certain preferential attachment models. Our results reveal that a variety of diffusion dynamics (that are agnostic of the network structure) will be effective at spreading information in such networks.
Finally, as a counterpoint to the random graph analysis above, we provide a negative result for determining whether any {\it given} graph is $r$-robust, showing that this problem is $\coNP$-complete for any $r \ge 2$.

The rest of this paper is organized as follows. In Section~\ref{sec:motivation}, we provide the definition of robustness that we consider in this paper and give motivating applications. In Sections~\ref{sec:E-R}, \ref{sec:geometric}, and \ref{sec:preatt}, we study the robustness of Erd\H os-R\'enyi graphs, one-dimensional geometric graphs, and preferential attachment graphs, respectively.   In Section~\ref{sec:complexity}, we provide the complexity analysis of the robustness problem in general graphs, and conclude in Section~\ref{sec:summary}.


\section{Robustness of Networks}
\label{sec:motivation}

Consider a network modeled by the {\it undirected} graph $\mathcal{G}=\{\mathcal{V},\mathcal{E}\}$, where $\mathcal{V}=\{1,...,n\}$ is the set of nodes and $\mathcal{E} \subseteq \mathcal{V} \times \mathcal{V}$ is the set of edges in the network. An edge $(i,j)\in \mathcal{E}$ indicates that nodes $i$ and $j$ can communicate with each other.
The set of {\it neighbors} of node $i$ is defined as $\mathcal{V}_i=\{j\in\mathcal{V}\colon (i,j)\in \mathcal{E}\}$, the {\it degree} of node $i$ is denoted by $d_i =| \mathcal{V}_i |$, and the {\it minimum degree} of the network is $\min_{i\in\mathcal{V}}d_i$. 
The {\it connectivity} of the network is the largest integer $r$ such that every pair of nodes has at least $r$ pairwise node-disjoint paths between them; this is a fundamental metric in networks and captures information redundancy across the network through independent paths. By Menger's theorem~\cite{West01}, the connectivity of a network is also equal to the smallest number of nodes that have to be removed in order to disconnect the graph. A graph is $r$-connected if its connectivity is at least $r$. In order to capture another form of information redundancy,  the following topological properties were proposed in \cite{ZhangSundaram2012}.

\begin{definition}[$r$-Reachable Set]
For a graph $\mathcal{G}$ and a subset $\mathcal{S}$ of nodes of $\mathcal{G}$,  $\mathcal{S}$ is an \textbf{$r$-reachable set} if $\exists i\in \mathcal{S}$ such that $| \mathcal{V}_i\setminus\mathcal{S}| \ge r$, where $r\in\mathbb{Z}_{\geq0}$.\footnote{Note that $\mathbb{Z}_{\geq{a}}$ represents the set of integers bigger than or equal to $a$.}
\label{def:r_reachable}
\end{definition}

\begin{definition}[$r$-Robust Graph]
A graph $\mathcal{G}$ is \textbf{$r$-robust} if for every pair of nonempty, disjoint subsets of $\mathcal{V}$, at least one of the subsets is $r$-reachable, where $r\in\mathbb{Z}_{\geq0}$.
\label{def:r_robust}
\end{definition}

In words, a set $\mathcal{S}$ is $r$-reachable if it contains a node that has at least $r$ neighbors outside that set. 
Intuitively, the $r$-reachability property captures the idea that some node inside the set is influenced by a sufficiently large number of nodes from outside. 
While $r$-connectedness implies that given any two disjoint sets, the nodes in at least one of the sets collectively have $r$ neighbors outside, $r$-robustness indicates that there is at least one node in one of the sets that {\it by itself} has $r$ neighbors outside.  Since all graphs are trivially $0$-robust, we will primarily focus on the cases where $r\ge{1}$ in the rest of the paper.

We will be using the following important properties from \cite{leblanc2013resilient} to relate robustness with the concepts of connectivity and minimum degree (the result in \cite{leblanc2013resilient} applies to directed graphs and encompasses undirected graphs as a special case).

\begin{lemma}[\cite{leblanc2013resilient}]
For any $r\in \mathbb{Z}_{\geq0}$, if $\mathcal{G}$ is $r$-robust, then $\mathcal{G}$ is at least $r$-connected and has minimum degree at least $r$.
\label{lem:robust_connectivity}
\end{lemma}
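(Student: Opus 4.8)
The statement has two independent parts: that $r$-robustness forces the minimum degree to be at least $r$, and that it forces the connectivity to be at least $r$. The plan is to prove each part separately by exhibiting a specific pair of disjoint sets and invoking the defining property of $r$-robustness. Throughout I would assume $\mathcal{G}$ has at least two nodes (for a single node both claims are degenerate), and I would dispatch the case $r=0$ as trivial since every graph is $0$-robust and has minimum degree and connectivity at least $0$.

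For the minimum-degree claim, I would fix an arbitrary node $i$ and apply the definition to the pair $\mathcal{S}_1 = \{i\}$ and $\mathcal{S}_2 = \mathcal{V} \setminus \{i\}$, which are nonempty and disjoint. The key observation is that every node $j \in \mathcal{S}_2$ has at most one neighbor outside $\mathcal{S}_2$ (namely $i$), so $|\mathcal{V}_j \setminus \mathcal{S}_2| \le 1$; hence $\mathcal{S}_2$ can be $r$-reachable only when $r \le 1$ and $i$ actually has a neighbor. Since $r$-robustness guarantees that at least one of $\mathcal{S}_1, \mathcal{S}_2$ is $r$-reachable, a short case analysis on which of the two realizes the reachability yields $|\mathcal{V}_i| \ge r$ in every case (for $r \ge 2$ the set $\mathcal{S}_2$ cannot be $r$-reachable, forcing $\mathcal{S}_1$ to be; for $r \le 1$ either option still produces a neighbor of $i$). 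Thus $d_i \ge r$, and since $i$ was arbitrary the minimum degree is at least $r$.

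For the connectivity claim, I would argue by contradiction. Suppose $\mathcal{G}$ is $r$-robust but not $r$-connected. By Menger's theorem there is a vertex cut $\mathcal{C} \subseteq \mathcal{V}$ with $|\mathcal{C}| \le r-1$ whose removal disconnects $\mathcal{G}$; I would then write $\mathcal{V} \setminus \mathcal{C}$ as a disjoint union of two nonempty sets $\mathcal{A}$ and $\mathcal{B}$ with no edges between them, taking $\mathcal{A}$ to be one connected component of $\mathcal{G} \setminus \mathcal{C}$ and $\mathcal{B}$ the union of the remaining components. Applying $r$-robustness to the pair $\mathcal{A}, \mathcal{B}$, one of them, say $\mathcal{A}$, is $r$-reachable, so some $i \in \mathcal{A}$ satisfies $|\mathcal{V}_i \setminus \mathcal{A}| \ge r$. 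But because there are no edges from $\mathcal{A}$ to $\mathcal{B}$, every neighbor of $i$ lying outside $\mathcal{A}$ must lie in $\mathcal{C}$, giving $|\mathcal{V}_i \setminus \mathcal{A}| \le |\mathcal{C}| \le r-1$, a contradiction. The symmetric argument rules out $\mathcal{B}$, so $\mathcal{G}$ must be $r$-connected.

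Neither step involves a deep idea, so the real work is boundary bookkeeping rather than a central obstacle: making sure the chosen pair of sets is genuinely nonempty and disjoint, verifying that the decomposition of $\mathcal{G} \setminus \mathcal{C}$ into $\mathcal{A}$ and $\mathcal{B}$ is legitimate whenever connectivity is deficient, and checking the small cases in $r \in \{0,1\}$ and in $n \le 2$. One genuine edge case to note is the complete graph, where no disconnecting cut exists; there the contradiction hypothesis is vacuous because a complete graph on $n$ nodes is never $r$-robust for $r \ge n$ (no singleton can have $r$ neighbors outside it). Finally, I would remark that the minimum-degree conclusion can alternatively be recovered from the connectivity conclusion via the standard inequality that vertex connectivity never exceeds the minimum degree; presenting the direct partition argument keeps the two claims self-contained.
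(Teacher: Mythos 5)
The paper does not prove this lemma at all---it is imported verbatim from \cite{leblanc2013resilient} as a known result---so there is no in-paper argument to compare yours against. Your proof is correct and is the standard one: the singleton-versus-complement pair for the degree bound, and the component-versus-rest pair across a small vertex cut for connectivity, with the complete-graph and $n=1$ degeneracies correctly flagged. Nothing further is needed.
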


\begin{lemma}[\cite{leblanc2013resilient}]
A graph $\mathcal{G}$ is 1-robust if and only if it is 1-connected.
\label{lem:1_robust_connectivity}
\end{lemma}

The above results show that $r$-robustness is a stronger property than $r$-connectivity (except for the case where $r=1$ and the trivial case where $r=0$). In fact, there exist graphs that are very highly connected but have very low robustness. For example, consider the network shown in Figure~\ref{fig:Counterexample}. Sets $\mathcal{S}_1$ and $\mathcal{S}_2$ have $\frac{n}{2}$ nodes (suppose $n$ is even), and induce complete subgraphs (i.e., each node in each set is connected to all other nodes in its set). Each node has exactly one neighbor from the other set.  
This graph is $\frac{n}{2}$-connected and has minimum degree $\frac{n}{2}$ but is only 1-robust since both $\mathcal{S}_1$ and $\mathcal{S}_2$ are only 1-reachable.

In the rest of this section, we will motivate the study of reachable sets and robustness with several specific examples of diffusion dynamics (resilient consensus, contagion and bootstrap percolation).

\begin{figure}[Counterexample]
\centering
\includegraphics[width=5.5cm]{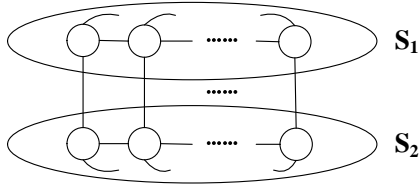}
\caption{Example of a graph that has minimum degree $\frac{n}{2}$ and connectivity $\frac{n}{2}$, but is only 1-robust.  Sets $\mathcal{S}_1$ and $\mathcal{S}_2$ induce complete graphs on $\frac{n}{2}$ nodes, and the edges between the sets form a perfect matching.} 
\label{fig:Counterexample}
\end{figure}

\subsection{Resilient Consensus Using Only Local Information}
\label{subsec:conn}

Consider a setting where each node in the network holds some private information (an opinion, a measurement, etc.). The network operates synchronously, and at each time-step, each normally operating node updates its value (information) as a weighted average of its neighbors' values and its own value. However, there may exist misbehaving nodes which do not follow this pre-specified rule. Under certain fault models (which define the distribution and behavior of the misbehaving nodes), an algorithm is said to achieve {\it resilient asymptotic consensus} if the values held by all normal nodes asymptotically converge to the same value, for any choice of initial values.

As mentioned in the Introduction, the {\it connectivity} of the network has traditionally been viewed as the key metric with regard to resilience of consensus algorithms (and information diffusion algorithms in general).
If the connectivity of the network is $2F$ or less (where $F\in\mathbb{Z}_{\geq1}$), then there exists at least one set of $F$ coordinated misbehaving nodes that can prevent the network from reaching consensus on certain functions of the initial values {\it regardless} of the mechanism that is used to achieve consensus  \cite{Lynch96,Hromkovic05}.  On the other hand, if the connectivity is $2F+1$ or higher, various algorithms have been proposed to overcome misbehaving  nodes under the local broadcast model of communication (i.e., where each misbehaving node is restricted to send the same value to all of its neighbors at each time-step) \cite{Lynch96,SundaramHadjicostis11,Pasqualetti12}.

While the above connectivity bounds provide fundamental limitations on the resilience of networks to misbehaving nodes, the mechanisms proposed to overcome misbehavior typically make the assumption that all nodes know the entire network topology (which is unrealistic in large networks). To remedy this, under the \textbf{$F$-local model} (where there are at most $F$ misbehaving nodes in each normal node's neighborhood and there is no restriction on their behavior), consider the following {\it Weighted-Mean-Subsequence-Reduced (W-MSR)} algorithm: at each time-step, each normal node disregards the largest and smallest $F$ nodes in its neighborhood (breaking ties arbitrarily) and updates its state to be a weighted average of the remaining values.\footnote{We refer to \cite{ZhangSundaram2012,leblanc2013resilient,Azad94,Nitin12,Dolev86} for a more complete description of this and other similar algorithms.} 
To see why connectivity is no longer an appropriate metric for studying such an algorithm, consider again the network in Figure~\ref{fig:Counterexample} and suppose that nodes in $\mathcal{S}_1$ and $\mathcal{S}_2$ have initial values $a$ and $b$, respectively. When $a\ne b$, by using the W-MSR algorithm, each node will throw away the value of its neighbor from the opposite set and thus its own value will remain unchanged, even when there are no misbehaving nodes.  Thus, consensus will not be reached in this network, indicating that even networks with a large degree or connectivity are not sufficient to guarantee consensus under such algorithms.

Taking a closer look at Figure~\ref{fig:Counterexample}, we see that the reason for the failure of consensus in this graph is that no node in either of the two sets receives enough information from {\it outside} its own set. However, if a graph is $r$-robust (for sufficiently large $r$), new information will penetrate at least one out of any two subsets of nodes, preventing stalemates of this form.
The following result from \cite{leblanc2013resilient} formalizes this property.

\begin{theorem}[\cite{leblanc2013resilient}]
Under the $F$-local model, the W-MSR algorithm achieves resilient asymptotic consensus if the network is $(2F+1)$-robust.
\label{thm:local_sufficient}
\end{theorem}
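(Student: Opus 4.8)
The plan is to track the maximum and minimum values held by the \emph{normal} nodes over time and to show that these two quantities converge to a common limit. Let $\mathcal{N}$ denote the set of normal nodes, let $x_i[t]$ be the value of a normal node $i\in\mathcal{N}$ at time-step $t$, and define $M[t]=\max_{i\in\mathcal{N}}x_i[t]$ and $m[t]=\min_{i\in\mathcal{N}}x_i[t]$. I will assume, as is standard for this class of algorithms, that every weight a normal node places on a retained value is bounded below by some constant $\alpha>0$.

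First I would establish a safety (invariance) property: for every normal node $i$, every value that survives $i$'s W-MSR filter at time $t$ lies in the interval $[m[t],M[t]]$. The argument is a counting one that uses the $F$-local model. Suppose some retained neighbor $j$ had $x_j[t]>M[t]$. Since $i$ discards the largest $F$ values, the $F$ discarded values would all exceed $M[t]$ and hence could not be normal; together with $j$ this would give $F+1$ misbehaving nodes in $i$'s neighborhood, contradicting the $F$-local assumption. The symmetric argument handles the lower end. Because each normal update is a convex combination of its own value and such retained values, $x_i[t+1]\in[m[t],M[t]]$, so $M[t]$ is non-increasing and $m[t]$ is non-decreasing. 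Being monotone and bounded, they converge, say $M[t]\to M^\ast$ and $m[t]\to m^\ast$ with $m^\ast\le M^\ast$.

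The core of the argument is then to show $M^\ast=m^\ast$ by contradiction. Assuming $M^\ast>m^\ast$, I would fix $\epsilon>0$ small enough that the \emph{high} and \emph{low} threshold sets $\mathcal{X}_M(t)=\{i\in\mathcal{N}:x_i[t]>M^\ast-\epsilon\}$ and $\mathcal{X}_m(t)=\{i\in\mathcal{N}:x_i[t]<m^\ast+\epsilon\}$ are disjoint; past a time $T$ at which $M[t]<M^\ast+\epsilon$ and $m[t]>m^\ast-\epsilon$, both sets are nonempty (they contain the nodes attaining $M[t]$ and $m[t]$). This is exactly the configuration to which $(2F+1)$-robustness applies: one of the two sets, say $\mathcal{X}_M(t)$, is $(2F+1)$-reachable, so it contains a node $i$ with at least $2F+1$ neighbors outside $\mathcal{X}_M(t)$. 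Of these, at most $F$ are misbehaving, leaving at least $F+1$ normal neighbors whose values are $\le M^\ast-\epsilon$; since $i$'s filter discards only $F$ low values, at least one such moderate value is retained. Combined with the lower bound $\alpha$ on the weights and the bound $M[t]<M^\ast+\epsilon$ on all other retained values, this forces a strict, quantified drop in $x_i[t+1]$ away from $M^\ast$.

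The final and most delicate step is to promote this single-node, single-step contraction into a global one. Here I would set up a nested sequence of shrinking thresholds $\epsilon_0>\epsilon_1>\cdots$ and argue inductively that, after each bounded block of time-steps, the number of normal nodes whose values remain within the top (respectively bottom) band strictly decreases: a node pulled below the current threshold cannot return above it, by the same robustness-plus-filtering mechanism applied to the updated configuration. Since there are only finitely many normal nodes, iterating drives $M[t]$ strictly below $M^\ast$ (or $m[t]$ strictly above $m^\ast$) after finitely many steps, contradicting convergence to $M^\ast$ and $m^\ast$. I expect this bookkeeping---carefully coordinating the shrinking value bands with the repeated applications of robustness, and verifying that exactly $r=2F+1$ (and not less) guarantees a surviving moderate value after removing the $2F$ extremes and discarding up to $F$ adversaries---to be the main obstacle; the safety property and the first application of robustness are comparatively routine.
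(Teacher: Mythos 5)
The paper does not prove this theorem itself---it is imported verbatim from \cite{leblanc2013resilient}---so there is no in-paper proof to compare against; your outline is, however, essentially the argument given in that reference (monotone bounds $M[t]$, $m[t]$ via the $F$-local counting/safety step, disjoint threshold sets $\mathcal{X}_M$, $\mathcal{X}_m$ to which $(2F{+}1)$-robustness is applied, a retained ``moderate'' value after discarding at most $F$ adversaries and $F$ filtered extremes, and the iterated contraction over shrinking $\epsilon_j$-bands using the weight lower bound $\alpha$). The approach and all the key ideas are correct; the only part left as a sketch is the bookkeeping you already flag, which in the reference is handled by choosing $\epsilon_j = \alpha\epsilon_{j-1} - (1-\alpha)\epsilon_0$ so that a node leaving the top (or bottom) band cannot re-enter the next, strictly smaller band.
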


\subsection{Contagion and Bootstrap Percolation}
\label{sec:cascading}

Consider another class of diffusion dynamics, where a subset of nodes wish to spread their `status' through the whole network. 
Specifically, assume that each node in the network can be in one of two states: infected (e.g., with an idea or innovation) or uninfected. Starting with an initial set of infected nodes (which can be chosen deterministically or randomly), the infection spreads (or cascades) in discrete steps according to the following rule: each node becomes infected if at least $r$ of its neighbors have been infected and once infected, stays infected forevermore.  Here, $r$ is called the {\it threshold} for cascading. 
This class of dynamics appears in the study of contagion \cite{Easley10}, bootstrap percolation \cite{janson2012bootstrap,balogh2012sharp}, best response dynamics in strategic complements games \cite{Jackson08}, and resilient broadcasting \cite{ZhangSundaram2012}.

For some $m<n$ (where $n$ is the number of nodes in the network), we say there is {\it contagion from any $m$ nodes} \cite{Jackson08} (or equivalently, {\it bootstrap percolation succeeds from any $m$ nodes}) if {\it any} set of initially infected nodes with size $m$ causes the whole network to be eventually infected.
Note that for cascading with threshold $r$, $m\ge r$ is necessary to facilitate contagion; otherwise, no nodes will be infected besides the nodes in the initial set.
Further note that the minimum degree of the network is a fundamental limitation for the emergence of contagion, i.e., if the minimum degree is less than $r$, there always exists an initial set of size $m<n$ such that cascading with threshold $r$ fails to cause contagion. However, minimum degree by itself is not sufficiently useful to capture these dynamics. Consider again the network in Figure~\ref{fig:Counterexample}. Even though the network has large minimum degree, we can choose an initial set with $\frac{n}{2}$ nodes (either $\mathcal{S}_1$ or $\mathcal{S}_2$) such that cascading with any threshold bigger than 1 fails to produce contagion.
The following result from \cite{Easley10, Jackson08} (cast in the language of reachable sets) provides the condition for contagion to succeed.\footnote{The work in \cite{Easley10, Jackson08} considers a slightly different scenario where the cascading threshold is based on the {\it fraction} of a node's neighbors that are infected, but the extension to the absolute threshold case is trivial.}

\begin{theorem}
\label{thm:cascading}
For cascading with threshold $r$, contagion from any $m$ nodes occurs if and only if every subset of $\mathcal{V}$ with size up to $n-m$ is $r$-reachable, where $m\ge r$.
\end{theorem}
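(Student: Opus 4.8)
The plan is to prove both implications by contraposition, resting everything on one structural observation about the terminal state of the cascade. I would model cascading with threshold $r$ as the monotone set-valued recursion $\mathcal{X}_{t+1}=\mathcal{X}_t\cup\{v\in\mathcal{V}\colon |\mathcal{V}_v\cap\mathcal{X}_t|\ge r\}$, where $\mathcal{X}_0$ is the initially infected set; since $\mathcal{X}_t\subseteq\mathcal{X}_{t+1}\subseteq\mathcal{V}$, the iteration reaches a fixed point $\mathcal{X}_\infty$ in finitely many steps, with uninfected remainder $\mathcal{U}_\infty=\mathcal{V}\setminus\mathcal{X}_\infty$. The key remark is that a nonempty terminal uninfected set is exactly a nonempty set that is \emph{not} $r$-reachable: the cascade has halted iff no $u\in\mathcal{U}_\infty$ has $r$ or more infected neighbors, i.e. iff every $u\in\mathcal{U}_\infty$ satisfies $|\mathcal{V}_u\setminus\mathcal{U}_\infty|<r$, which is precisely the negation of $r$-reachability of $\mathcal{U}_\infty$.

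For the ``if'' direction, I would assume every subset of $\mathcal{V}$ of size at most $n-m$ is $r$-reachable and take an arbitrary initial set $\mathcal{X}_0$ with $|\mathcal{X}_0|=m$. Suppose, for contradiction, that contagion fails, so $\mathcal{U}_\infty\neq\emptyset$. By the remark above, $\mathcal{U}_\infty$ is not $r$-reachable. But monotonicity gives $\mathcal{X}_0\subseteq\mathcal{X}_\infty$, hence $\mathcal{U}_\infty\subseteq\mathcal{V}\setminus\mathcal{X}_0$ and $|\mathcal{U}_\infty|\le n-m$. Thus $\mathcal{U}_\infty$ is a nonempty non-$r$-reachable set of size at most $n-m$, contradicting the hypothesis. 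Hence $\mathcal{U}_\infty=\emptyset$ and contagion occurs from every $m$-set.

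For the ``only if'' direction I would argue the contrapositive: suppose some nonempty $\mathcal{U}\subseteq\mathcal{V}$ with $|\mathcal{U}|\le n-m$ is not $r$-reachable, and exhibit an initial set of size $m$ from which contagion fails. Since $|\mathcal{V}\setminus\mathcal{U}|=n-|\mathcal{U}|\ge m$, I can choose $\mathcal{X}_0\subseteq\mathcal{V}\setminus\mathcal{U}$ with $|\mathcal{X}_0|=m$. The crux is then to show that $\mathcal{U}$ is never penetrated, i.e. $\mathcal{U}\cap\mathcal{X}_\infty=\emptyset$. I would do this by a first-infection-time argument: if some node of $\mathcal{U}$ ever became infected, let $t$ be the earliest such step and $u\in\mathcal{U}$ a node infected at time $t$. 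At step $t-1$ no node of $\mathcal{U}$ is yet infected, so all of the at-least-$r$ infected neighbors that trigger $u$ lie in $\mathcal{V}\setminus\mathcal{U}$, giving $|\mathcal{V}_u\setminus\mathcal{U}|\ge r$ and contradicting the assumption that $\mathcal{U}$ is not $r$-reachable. Therefore $\mathcal{U}\subseteq\mathcal{U}_\infty$, so the network never becomes fully infected and contagion fails from this $m$-set.

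The only genuinely delicate step is the barrier argument in the last paragraph; the rest is bookkeeping about monotonicity and set sizes. I expect the main obstacle to be stating the first-infection-time induction cleanly, in particular being careful that at time $t-1$ the infected neighbors of $u$ are all outside $\mathcal{U}$ (which is what converts ``$u$ gets infected'' into ``$\mathcal{U}$ is $r$-reachable''). I would also note that the standing hypothesis $m\ge r$ (hence $m\ge 1$, so $n-m\le n-1$) merely ensures that we quantify over proper subsets, consistent with the fact that $\mathcal{V}$ itself is never $r$-reachable; it is not otherwise needed in the above arguments.
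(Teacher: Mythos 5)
Your proof is correct. Note that the paper does not actually prove Theorem~\ref{thm:cascading}; it is imported from the cited references on contagion, so there is no in-paper argument to compare against. Your route is the standard one and is sound: the fixed-point characterization (the terminal uninfected set is a non-$r$-reachable set of size at most $n-m$) gives the ``if'' direction, and the barrier/first-infection-time argument (a non-$r$-reachable set disjoint from the seed set is never penetrated) gives the ``only if'' direction. Your closing remark about $m\ge r$ is also accurate -- indeed, when $m<r$ both sides of the equivalence fail automatically (any set of size exactly $n-m$ has at most $m<r$ outside neighbors per node), so the hypothesis is harmless rather than essential.
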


Given the fundamental role of the notions of reachable sets and robustness in the applications discussed above, we will start by studying the robustness of three common random graph models: Erd\H os-R\'enyi graphs, geometric random graphs, and preferential attachment graphs.  After that, we will analyze the computational complexity of determining the extent to which a given graph is robust.

\section{Robustness of Erd\H os-R\'enyi Random Graphs}
\label{sec:E-R}
Erd\H os-R\'enyi random graphs \cite{ErdosRenyi1961,Bollobas01} are one of the most common models for large-scale networks. 
The version we study here is denoted as $\mathcal{G}_{n,p}$: it consists of $n$ nodes and each possible (undirected) edge between two nodes is present independently with probability $p$ (which may be a function of $n$), and absent with probability $q =1-p$. 
Let the probability of an event be denoted by $\mathbb{P}(\cdot)$.  Recall that a {\it graph property} can be regarded as a class of graphs that is closed under isomorphism. A key feature of the $\mathcal{G}_{n,p}$ model is that we can explore properties that are shared by {\it almost all} graphs, a notion that is defined as follows.

\begin{definition}
Assume $\mathcal{P}$ is a graph property and $p=p(n)$ is a function of $n$. We say that \textbf{almost all} $G \in \mathcal{G}_{n,p}$ have property $\mathcal{P}$ if $\mathbb{P}(\mathcal{G}_{n,p}\in \mathcal{P}) \to 1$ as $n \to \infty$, and \textbf{almost no} $G \in \mathcal{G}_{n,p}$ has property $\mathcal{P}$ if $\mathbb{P}(\mathcal{G}_{n,p}\in \mathcal{P}) \to 0$ as $n \to \infty$.
\end{definition}

One important feature of  $\mathcal{G}_{n,p}$, which was demonstrated by Erd\H os and R\'enyi, is that the model exhibits  {\it phase transitions}.  More precisely, we define a {\it threshold function} as follows.

\begin{definition}[Threshold Function]
\label{def:Threshold_function}
Consider a function $t(n)$ of the form $\frac{g(n)}{n}$ where $g(n)\to\infty$ as $n\to\infty$, and a function $x=o(g(n))$ satisfying $x\to\infty$ as $n\to\infty$. We say $t(n)$ is a \textbf{threshold function} for a graph property $\mathcal{P}$ if $p(n)= \frac{g(n)+x}{n}$ implies that almost all $G \in \mathcal{G}_{n,p}$ have property $\mathcal{P}$ and $p(n)= \frac{g(n)-x}{n}$ implies that almost no $G \in \mathcal{G}_{n,p}$ has property $\mathcal{P}$. 
\end{definition}

All of the properties we are going to study in this section have threshold functions of the above form.\footnote{There are also various other definitions of threshold functions, but they are typically more conservative than the one we consider here \cite{Bollobas01}.} 
Loosely speaking, if the probability of adding an edge is `larger' than $t(n)$ in the sense indicated by Definition~\ref{def:Threshold_function}, then almost all $G \in \mathcal{G}_{n,p}$ will have property $\mathcal{P}$, and if the probability is `smaller' than $t(n)$, almost no $G \in \mathcal{G}_{n,p}$ will have this property.

\begin{definition}
For $G\in \mathcal{G}_{n,p}$ and constant $r\in \mathbb{Z}_{\geq1}$, define the properties of \textbf{being $r$-connected}, \textbf{$r$-robust} and \textbf{having minimum degree $r$} by \textbf{$K_r$}, \textbf{$R_r$} and \textbf{$D_r$}, respectively.
\end{definition}

\begin{lemma}[\cite{ErdosRenyi1961}]
For any constant $r\in \mathbb{Z}_{\geq1}$, $t(n)= \frac{\ln n + (r-1)\ln\ln n}{n}$ is a threshold function for property $K_r$. It is also a threshold function for property $D_r$.
\label{lem:threshold_connectivity}
\end{lemma}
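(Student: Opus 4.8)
The plan is to prove both threshold statements by reducing the analysis to the count of low-degree vertices, since at this density the essential obstruction to $r$-connectivity is the presence of a vertex with fewer than $r$ neighbors. Let $X$ denote the number of vertices whose degree is at most $r-1$ in $\mathcal{G}_{n,p}$. Writing $p(n) = \frac{\ln n + (r-1)\ln\ln n + c(n)}{n}$, I would first establish the key asymptotic $\mathbb{E}[X] \to \frac{e^{-c}}{(r-1)!}$ when $c(n)\to c$ is constant, with the dominant contribution coming from vertices of degree exactly $r-1$ and the lower-degree terms being of strictly smaller order. Using $\mathbb{E}[X_{r-1}] = n\binom{n-1}{r-1}p^{r-1}(1-p)^{n-r}$ together with $\binom{n-1}{r-1} \sim \frac{n^{r-1}}{(r-1)!}$ and $(1-p)^{n-r} \sim e^{-pn}$, the asymptotic follows by direct substitution.

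For property $D_r$ (minimum degree at least $r$), the first moment suffices above the threshold: when $p(n) = \frac{g(n)+x}{n}$ with $g(n) = \ln n + (r-1)\ln\ln n$ and $x\to\infty$, we get $\mathbb{E}[X]\to 0$, so Markov's inequality gives $\mathbb{P}(X\ge 1)\to 0$, i.e. almost all $G\in\mathcal{G}_{n,p}$ have minimum degree at least $r$. Below the threshold, $p(n) = \frac{g(n)-x}{n}$ yields $\mathbb{E}[X]\to\infty$; here I would invoke the second moment method, computing $\mathbb{E}[X^2]$ via the near-independence of the degrees of distinct vertices and showing $\mathrm{Var}(X) = o(\mathbb{E}[X]^2)$, so that Chebyshev's inequality gives $\mathbb{P}(X=0)\to 0$. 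Hence almost no graph has minimum degree at least $r$, which completes the threshold claim for $D_r$.

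For property $K_r$ ($r$-connectivity), the below-threshold direction is immediate, since a vertex of degree $<r$ can be separated from the rest of the graph by deleting its neighbors; thus $X\ge 1$ forces the graph not to be $r$-connected, and by the previous paragraph this occurs almost surely. The above-threshold direction is the crux. Because $r$-connectivity deterministically implies minimum degree at least $r$, it remains to show that, conditioned on minimum degree at least $r$ (which holds a.a.s.), the graph is a.a.s. $r$-connected. I would establish this with a union bound over all partitions $\mathcal{V} = A\cup S\cup B$ with $A,B$ nonempty, $|S|\le r-1$, and no edges between $A$ and $B$: each such configuration has probability at most $(1-p)^{|A||B|}$, giving a total bound of roughly $\sum_{s\le r-1}\sum_{a}\binom{n}{a}\binom{n}{s}(1-p)^{a(n-a-s)}$.

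The main obstacle is controlling this sum over the intermediate values of $a=|A|$. The terms with $a=1$ merely reproduce the low-degree count already handled, whereas for $a\ge 2$ one must verify that the exponential decay from $(1-p)^{a(n-a-s)}\approx e^{-pan}$ overwhelms the polynomial growth of the binomial coefficients across the entire range $2\le a\le \lfloor n/2\rfloor$. I expect this to require splitting the range into a small-$a$ regime, where the exponent is $\approx a\ln n$ and dominates $\binom{n}{a}\approx n^a$, and a regime with $a$ near $n/2$, where the product $a(n-a)$ is quadratically large, and then bounding each piece separately to conclude the total is $o(1)$. Once this is done, non-trivial small cuts are a.a.s. absent, so minimum degree at least $r$ and $r$-connectivity coincide a.a.s. above the threshold, establishing that $t(n)$ is a threshold function for $K_r$ as well.
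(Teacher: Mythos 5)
First, note that the paper does not prove this lemma at all: it is imported verbatim from Erd\H os--R\'enyi, so there is no in-paper proof to compare against. Your sketch of the $D_r$ threshold and of the below-threshold direction for $K_r$ is the standard and correct route (first moment above, second moment below, and the observation that a vertex of degree $<r$ destroys $r$-connectivity).

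The gap is in the above-threshold direction for $K_r$. The union bound you propose, $\sum_{s\le r-1}\sum_{a}\binom{n}{a}\binom{n}{s}(1-p)^{a(n-a-s)}$, does \emph{not} tend to zero, and the failure is precisely at the small values of $a$ that you dismiss. For $a=1$ and $s=r-1$ the term is about $n\cdot\frac{n^{r-1}}{(r-1)!}\cdot\frac{e^{-x}}{n(\ln n)^{r-1}}=\frac{n^{r-1}e^{-x}}{(r-1)!(\ln n)^{r-1}}\to\infty$ for $r\ge 2$. This does \emph{not} ``reproduce the low-degree count'': the expected number of vertices of degree $\le r-1$ carries a factor $p^{r-1}$ from requiring the $r-1$ edges into $S$ to actually be present, whereas the event ``no edges from $A$ to $B$'' does not, so the union-bound term overshoots by $\Theta\bigl((n/\ln n)^{r-1}\bigr)$. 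The same divergence occurs for every constant $a$, and indeed for all $a=O(\ln n/\ln\ln n)$: the term behaves like $n^{r-1}e^{-ax}/(\ln n)^{a(r-1)}$. The part you flag as the obstacle (intermediate and large $a$) is actually the easy part, since there $(1-p)^{a(n-a-s)}$ decays superpolynomially. The real work in the classical proof is the small-$a$ regime, where one must use the event $\delta(G)\ge r$: each vertex of $A$ has all of its $\ge r$ neighbors inside $A\cup S$, which has at most $a+r-2$ other vertices, forcing $A\cup S$ to be nearly complete and contributing extra factors of $p$ that rescue the bound. Without this conditioning step the argument fails.

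As an aside, the paper's own Lemma~\ref{lem:reachable_size} sidesteps this difficulty for the related robustness property: it bounds the probability that a $k$-set is not $r$-reachable by $\bigl(\sum_{i=0}^{r-1}\binom{n-k}{i}q^{n-k-i}p^{i}\bigr)^{k}$, and the $k$-th power (every vertex of the set must simultaneously have few outside neighbors, with independent edge sets) is exactly what makes the union bound summable for all $k\le(1-\alpha)n$. Since $r$-robustness implies $r$-connectivity (Lemma~\ref{lem:robust_connectivity}), that computation yields the above-threshold direction of $K_r$ without ever enumerating separators, as the paper notes in its remark following Theorem~\ref{thm:threshold_robust}.
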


The following is one of our main results: it establishes that the above threshold function for $r$-connectivity (and minimum degree $r$) is {\it also} a threshold function for $r$-robustness in Erd\H os-R\'enyi random graphs.

\begin{theorem}
For any constant $r\in \mathbb{Z}_{\geq1}$, $t(n)= \frac{\ln n + (r-1)\ln\ln n}{n}$ is a threshold function for property $R_r$.
\label{thm:threshold_robust}
\end{theorem}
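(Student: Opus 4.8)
The plan is to verify both halves of Definition~\ref{def:Threshold_function} for the property $R_r$. The half below the threshold is immediate: by Lemma~\ref{lem:robust_connectivity} every $r$-robust graph is $r$-connected, so $\mathbb{P}(\mathcal{G}_{n,p}\in R_r)\le \mathbb{P}(\mathcal{G}_{n,p}\in K_r)$, and since $t(n)$ is a threshold for $K_r$ by Lemma~\ref{lem:threshold_connectivity}, the right-hand side tends to $0$ when $p(n)=\frac{g(n)-x}{n}$. Hence almost no $G\in\mathcal{G}_{n,p}$ is $r$-robust below the threshold. All the work lies in the half above the threshold, where I must show $\mathbb{P}(\mathcal{G}_{n,p}\notin R_r)\to 0$ when $p(n)=\frac{g(n)+x}{n}$, with $g(n)=\ln n+(r-1)\ln\ln n$ and $x\to\infty$, $x=o(\ln n)$.

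The key structural reduction is the following. If $G$ is not $r$-robust, it is witnessed by a pair of nonempty disjoint sets, neither of which is $r$-reachable; the smaller of the two has size at most $\lfloor n/2\rfloor$ and is itself not $r$-reachable. This restriction to small sets is essential, since for $r\ge 2$ every set of size $n-1$ is trivially non-$r$-reachable, so summing over all set sizes would not converge. Therefore $\mathbb{P}(\mathcal{G}_{n,p}\notin R_r)$ is at most the probability that some nonempty $\mathcal{S}$ with $|\mathcal{S}|\le\lfloor n/2\rfloor$ fails to be $r$-reachable. For a fixed $\mathcal{S}$ with $|\mathcal{S}|=k$, failure to be $r$-reachable means each of its $k$ nodes has at most $r-1$ neighbors among the $n-k$ outside nodes; because these node-level events depend on disjoint sets of potential edges, they are independent, giving probability $P_k^{\,k}$ where $P_k=\mathbb{P}\big(\mathrm{Bin}(n-k,p)\le r-1\big)$. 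A union bound over all such sets then yields
\[
\mathbb{P}(\mathcal{G}_{n,p}\notin R_r)\ \le\ \sum_{k=1}^{\lfloor n/2\rfloor}\binom{n}{k}P_k^{\,k},
\]
and the remaining task is to show this sum tends to $0$.

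To estimate $P_k$ I would bound the binomial lower tail by its dominant term, $P_k\le (1+o(1))\frac{\lambda^{r-1}}{(r-1)!}e^{-\lambda}$ with $\lambda=(n-k)p=np-kp$, using $\binom{n-k}{j}p^j\le \lambda^j/j!$ and $q^{n-k-j}\le (1+o(1))e^{-\lambda}$. Substituting $np=\ln n+(r-1)\ln\ln n+x$ gives, for small $k$, $P_k\approx \frac{e^{-x}}{(r-1)!\,n}\,e^{kp}$, so that $\binom{n}{k}P_k^{\,k}\le \big(\frac{e^{\,1-x+kp}}{k(r-1)!}\big)^{k}$ after applying $\binom{n}{k}\le (en/k)^k$. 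I would then split the range of $k$: for bounded or slowly growing $k$ the factor $e^{kp}=1+o(1)$ and the sum is dominated by the $k=1$ term, which is $O(e^{-x})\to0$; for the largest $k$ (up to $\lfloor n/2\rfloor$) the ratio $n/k$ is bounded while $e^{kp}\le e^{(n/2)p}=O(\sqrt n)$, so the base $\frac{e^{1-x+kp}}{k(r-1)!}=O(n^{-1/2})$ and its $k$-th power is super-polynomially small; the intermediate range is squeezed between these two regimes.

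The main obstacle is precisely this multi-regime tail estimate: one must control the growth of the factor $e^{kp}$, which reaches $\Theta(\sqrt n)$ near $k=n/2$, against the binomial coefficient $\binom{n}{k}$ uniformly in $k$, while carefully tracking the $(1+o(1))$ corrections from the Poisson approximation of the binomial tail (these rely on $x=o(\ln n)$, so that $np\sim\ln n$ throughout). Once the entire sum is shown to be $O(e^{-x})\to 0$, it follows that almost all $G\in\mathcal{G}_{n,p}$ are $r$-robust above the threshold, and together with the easy direction this completes the verification that $t(n)$ is a threshold function for $R_r$.
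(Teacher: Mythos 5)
Your proposal is correct and follows essentially the same route as the paper: the below-threshold direction via $r$-robust $\Rightarrow$ $r$-connected, and the above-threshold direction by reducing to the statement that every set of size at most $\lfloor n/2\rfloor$ is $r$-reachable (the paper's Lemma~\ref{lem:reachable_size} with $\alpha=\tfrac{1}{2}$), then a union bound with $\binom{n}{k}\bigl(\mathbb{P}(\mathrm{Bin}(n-k,p)\le r-1)\bigr)^k$ and control of the factor $e^{kp}/k$ over the full range of $k$ (the paper handles your ``multi-regime'' step by noting $f(k)=e^{kp}/k$ is unimodal, hence maximized at $k=1$ or $k=\lfloor n/2\rfloor$). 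The only cosmetic difference is that the paper proves the reachability lemma for sets up to size $(1-\alpha)n$ for general $\alpha$, which it then reuses for the contagion corollary.
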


From Lemmas~\ref{lem:threshold_connectivity} and \ref{lem:1_robust_connectivity}, the above theorem is immediately true for $r = 1$ since $1$-connectedness and $1$-robustness are equivalent.  To prove the theorem for $r \ge 2$, we will first need the following lemma showing that all subsets of nodes up to a certain size will be $r$-reachable when the probability is above the given threshold.

\begin{lemma}
Let $\alpha=\alpha(n)$ be a positive function satisfying $\sup_n{\alpha(n)} < 1$ and $\ln{\ln n}=o(\alpha \ln{n})$.   For any constant $r \in \mathbb{Z}_{\geq1}$, let $S_{r}^{1-\alpha}$ be the property that every subset of $\mathcal{V}$ with size up to $\lfloor(1-\alpha)n\rfloor$ is $r$-reachable. If $p(n)= \frac{\ln n + (r-1)\ln\ln n + x}{n}$, where $x=x(n)$ is some function satisfying $x=o(\ln{\ln{n}})$ and $x\to\infty$ as $n\to\infty$, then almost all $G \in \mathcal{G}(n,p)$ have property $S_{r}^{1-\alpha}$.
\label{lem:reachable_size}
\end{lemma}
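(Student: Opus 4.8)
The plan is to prove the lemma by a first-moment (union bound) argument over all potentially ``bad'' sets, which suffices because we only need to show that the failure probability tends to $0$. First I would reformulate the property: a set $\mathcal{S}$ of size $k$ fails to be $r$-reachable precisely when every node $i \in \mathcal{S}$ has at most $r-1$ neighbors in $\mathcal{V}\setminus\mathcal{S}$. Since the potential edges from distinct nodes of $\mathcal{S}$ to the complement $\mathcal{V}\setminus\mathcal{S}$ form disjoint sets, the corresponding per-node events are mutually independent. Hence, writing $\beta_k=\sum_{j=0}^{r-1}\binom{n-k}{j}p^j(1-p)^{n-k-j}$ for the probability that a single fixed node has at most $r-1$ neighbors among a fixed set of $n-k$ others, the probability that a fixed set of size $k$ is not $r$-reachable equals exactly $\beta_k^{\,k}$. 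A union bound over the $\binom{n}{k}$ sets of each size then gives
\begin{equation}
\mathbb{P}\big(G \notin S_r^{1-\alpha}\big) \;\le\; \sum_{k=1}^{\lfloor(1-\alpha)n\rfloor}\binom{n}{k}\,\beta_k^{\,k},
\end{equation}
and the whole task reduces to showing that this sum tends to $0$.

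The next step is a clean upper bound on $\beta_k$. Using $\binom{n-k}{j}\le (n-k)^j/j!$ and $(1-p)^{n-k-j}\le e^{-p(n-k-j)}$, together with $p\to 0$ and the fact that once $\lambda_k:=(n-k)p\ge r-1$ the $j=r-1$ term dominates the sum, I would derive $\beta_k \le (1+o(1))\,\frac{r\,\lambda_k^{\,r-1}}{(r-1)!}\,e^{-\lambda_k}$, uniformly in $k$ (note $\lambda_k\ge \alpha np\to\infty$ guarantees the domination). Combining this with $\binom{n}{k}\le (en/k)^k$, the $k$-th summand is at most $\exp\!\big(k\ln(en/k)+k\ln\beta_k\big)$, which I would analyze by splitting the range of $k$ into a ``small'' regime (up to a constant fraction of $n$) and a ``large'' regime.

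In the small regime the complement has size $n-k\sim n$, so $\lambda_k \sim np = \ln n + (r-1)\ln\ln n + x$; then $e^{-\lambda_k}\sim n^{-1}(\ln n)^{-(r-1)}e^{-x}$, and the $\lambda_k^{\,r-1}\sim(\ln n)^{r-1}$ factor cancels the logarithmic power, leaving $\beta_k = O\!\big(n^{-1}e^{-x}\big)$. The leading $\ln n$ inside $\binom{n}{k}$ then cancels the $\ln n$ coming from $\beta_k$, so each summand is bounded by $(ec'/(ke^{x}))^k$ for a constant $c'$, and summing the resulting geometric-type series shows the small-regime contribution is driven to $0$ by $x\to\infty$. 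The most delicate point, which I expect to be the main obstacle, is the $k=1$ boundary term: there the leading $\ln n$ contributions cancel exactly, so one must track the $(r-1)\ln\ln n$ and $x$ corrections precisely (this single-node term is exactly the statement that the minimum degree is at least $r$, consistent with Lemma~\ref{lem:threshold_connectivity}).

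In the large regime the complement satisfies $n-k\ge \alpha n$, so $\lambda_k \ge \alpha np \sim \alpha\ln n$, making $\beta_k$ roughly $n^{-\alpha}$ up to polylogarithmic factors. Raised to the power $k$ (a constant fraction of $n$), the factor $\beta_k^{\,k}$ decays like $e^{-\Theta(\alpha n\ln n)}$, which must overcome the crude count $\binom{n}{k}\le 2^n$. This is precisely where the hypotheses on $\alpha$ enter: the condition $\ln\ln n = o(\alpha\ln n)$ forces $\alpha\ln n\to\infty$, hence $\alpha n\ln n \gg n$, so the exponential decay of $\beta_k^{\,k}$ dominates the $2^n$ count (the requirement $\sup_n\alpha(n)<1$ keeps the range of $k$ nondegenerate). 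Adding the two regimes, the right-hand side of the displayed inequality tends to $0$, establishing that almost all $G\in\mathcal{G}(n,p)$ have property $S_r^{1-\alpha}$.
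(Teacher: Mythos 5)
Your overall strategy is the same first-moment argument the paper uses: the per-set failure probability is exactly $\beta_k^k$ by independence of the edge sets leaving distinct nodes of $\mathcal{S}$, a union bound over the $\binom{n}{k}$ sets reduces everything to showing $\sum_{k=1}^{\lfloor(1-\alpha)n\rfloor}\binom{n}{k}\beta_k^k\to 0$, and the dominant term of $\beta_k$ is isolated as $\Theta(\lambda_k^{r-1}e^{-\lambda_k})$. The large-regime analysis is also essentially right, and you correctly locate where the hypotheses on $\alpha$ must enter (though the binding constraint there is beating the polylogarithmic factor $(\ln n)^{(r-1)k}=e^{(r-1)k\ln\ln n}$ rather than the $2^n$ count, since $\ln\ln n\to\infty$; this is exactly what $\ln\ln n=o(\alpha\ln n)$ buys).

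The genuine gap is in your ``small'' regime. You define it as $k$ up to a constant fraction of $n$, but the approximation $\lambda_k\sim np$, and hence $e^{-\lambda_k}\sim n^{-1}(\ln n)^{-(r-1)}e^{-x}$ and $\beta_k=O(n^{-1}e^{-x})$, is false there: $e^{-\lambda_k}=e^{-np}\,e^{kp}$, and at $k=cn$ the correction factor is $e^{kp}=n^{c+o(1)}$, so $\beta_k$ is really of order $n^{c-1+o(1)}e^{-(1-c)x}$, not $n^{-1}e^{-x}$. Consequently your claimed summand bound $(ec'/(ke^x))^k$ is off by a factor of $e^{kp}$ per node, which is superpolynomial over the intermediate range (roughly $n/\ln n\lesssim k\lesssim (1-\alpha)n$), and the geometric-series conclusion does not follow as written. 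This intermediate range, not the $k=1$ term you single out (that term is just $n\beta_1=O(e^{-x})\to 0$), is the actual delicate part of the proof. The paper resolves it by keeping the factor $f(k)=e^{kp}/k$ intact, observing that $f$ decreases on $(0,1/p)$ and increases on $(1/p,\infty)$ so that $\max_{1\le k\le n_c}f(k)=\max\{f(1),f(n_c)\}$, and then using the hypotheses $\sup_n\alpha<1$ and $\ln\ln n=o(\alpha\ln n)$ to show $f(n_c)=\frac{1}{1-\alpha}\exp\{-\alpha\ln n+(1-\alpha)(r-1)\ln\ln n+(1-\alpha)x\}=o(1)$, leaving $f(k)\le f(1)<e$ uniformly and hence $\mathbb{P}_k\le(c_4e^{1-x})^k$ over the entire range. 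Your two-regime split can be repaired to cover the middle range (e.g., by a separate computation for $n/\ln n\le k\le \epsilon n$, or by moving the regime boundary and carrying $e^{kp}/k$ explicitly), but as stated the argument has a hole precisely where the single unified bound in the paper does its work.
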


\begin{proof}
Let $\mathbb{P}_{0}$ be the probability that some set of cardinality up to $n_c=\lfloor(1-\alpha)n\rfloor$ is not $r$-reachable. We need to prove that $\mathbb{P}_0=o(1)$ when $p(n)= \frac{\ln n + (r-1)\ln\ln n + x}{n}$.
Denote the probability that some set $\mathcal{S}\subset\mathcal{V}$ with cardinality $k$ (i.e., $\rvert \mathcal{S} \rvert =k$) is not $r$-reachable as $\mathbb{P}_k$.  By the union bound, we know that $\mathbb{P}_0 \le \sum_{k=1}^{n_c}\mathbb{P}_k$. 
For fixed $\mathcal{S}$ of cardinality $k$, the probability that a node $v\in \mathcal{S}$ has less than $r$ neighbors outside is $\sum_{i=0}^{r-1}\binom{n-k}{i}q^{n-k-i}p^{i}$, and the probability that $\mathcal{S}$ is not $r$-reachable is $(\sum_{i=0}^{r-1}\binom{n-k}{i}q^{n-k-i}p^{i})^k$, where $q=1-p$. 
Since there are $\binom{n}{k}$ such sets $\mathcal{S}$, we know that $\mathbb{P}_k\le \binom{n}{k}(\sum_{i=0}^{r-1}\binom{n-k}{i}q^{n-k-i}p^{i})^k$.
In the rest of the proof, we focus on the cases where $k\le n_c$.
Using the fact that $\binom{n}{k}\le (\frac{en}{k})^k$ and $\binom{n}{k} \le n^k$, we obtain the following upper bound for $\mathbb{P}_k$:
\begin{align}
\mathbb{P}_k
&\le \binom{n}{k}\left(\sum_{i=0}^{r-1}\binom{n-k}{i}q^{n-k-i}p^{i}\right)^k \nonumber \\
&\le \left(\frac{en}{k}\sum_{i=0}^{r-1}(np)^{i}(1-p)^{n-k-i}\right)^k  \nonumber \\
&\le \left(\frac{en}{k} (1-p)^{n-k} r\left(\frac{np}{1-p}\right)^{r-1} \right)^k \nonumber \\
&= \left(\frac{er}{(1-p)^{r-1}} \frac{n(np)^{r-1}(1-p)^{n}}{k(1-p)^k }\right)^k \nonumber \\
&\le \left(\frac{c_1 n(np)^{r-1}(1-p)^{n}}{k(1-p)^k }\right)^k \nonumber .
\end{align}
In the last step above, $c_1$ is some constant upper bound for $\frac{er}{(1-p)^{r-1}}$ satisfying $0<c_1<2er$ for {\it sufficiently large} $n$. By saying that a property holds for sufficiently large $n$, we mean that there exists some $n_0\in \mathbb{Z}_{\ge1}$ such that this property holds for all $n>n_0$. The notion of ``for sufficiently large $n$'' will be implicitly used throughout the proof. 
Note that $1-p\le e^{-p}$ and recall that $p(n)= \frac{\ln n + (r-1)\ln\ln n + x}{n}$. Thus,
\begin{align}
\mathbb{P}_k
&\le \left( \frac{c_1n(np)^{r-1}e^{-\ln{n}-(r-1)\ln{\ln{n}}-x}} {k(1-p)^k}\right)^k \nonumber \\
&= \left( \frac{c_1n(np)^{r-1}e^{-x}}{k(1-p)^kn(\ln{n})^{r-1}}\right)^k \nonumber \\
&= \left( c_1 \left(\frac{\ln{n}+(r-1)\ln{\ln{n}+x}}{\ln{n}}\right)^{r-1} \frac{e^{-x}}{k(1-p)^k}\right)^k \nonumber \\
&\le \left( \frac{c_2 e^{-x}}{k(1-p)^k}\right)^k \nonumber .
\end{align}
Note that $\frac{\ln{n}+(r-1)\ln{\ln{n}+x}}{\ln{n}}<2$ for sufficiently large $n$ and thus $0<c_2<c_1 2^{r-1}$. Next, note that $\ln{(1-p)}=-\sum_{i=1}^{\infty}\frac{p^i}{i}$ for $p\in [0,1)$, and thus,
\begin{align}
\mathbb{P}_k
&\le \left(\frac{c_2 e^{-x}}{k}\exp\{k\sum_{i=1}^{\infty}\frac{p^i}{i}\}\right)^k \nonumber \\
&= \left(\frac{c_2 e^{-x}}{k}\exp\{kp+ kp^2\sum_{i=2}^{\infty}\frac{p^{i-2}}{i}\}\right)^k \label{lem_first_appro} \\
&\le \left(c_2 e^{c_3}\frac{e^{-x}e^{kp}}{k}\right)^k = \left(\frac{c_4 e^{-x}e^{kp}}{k}\right)^k \nonumber .
\end{align}
Note that in (\ref{lem_first_appro}), since $\sum_{i=2}^{\infty}\frac{p^{i-2}}{i}< \sum_{i=2}^{\infty} p^{i-2} =\frac{1}{1-p}$ and $kp^2<np^2 = o(1)$, $0<c_3<1$ for sufficiently large $n$. Further note that $c_4=c_2 e^{c_3}$ and thus $0<c_4<re^2 2^r$.

Let $f(k)=\frac{e^{kp}}{k}$ be a function of $k$, where $k \in \mathbb{R}_{>0}$. Then $\frac{df}{dk}=\frac{e^{kp}(kp-1)}{k^2}$.   Since $\frac{df}{dk}<0$ if $k<\frac{1}{p}$ and $\frac{df}{dk} >0$ if $k> \frac{1}{p}$, $f(k)\le \max\{ f(1), f(n_c)\}$ for $k \in \{1, 2, \ldots, n_c\}$.
We know that $f(n_c)=\frac{\exp\{n_c p\}}{n_c}\le \frac{\exp\{(1-\alpha){np}\}}{(1-\alpha)n} = \frac{1}{1-\alpha}\exp\{ (1-\alpha)np - \ln{n} \} = \frac{1}{1-\alpha}\exp\{ -\alpha\ln{n} + (1-\alpha)(r-1)\ln{\ln{n}} +(1-\alpha)x \}$. Since $\alpha(n)$ is positive, strictly bounded below $1$ and $\ln{\ln n}=o(\alpha \ln{n})$, we know that $f(n_c)=o(1)$. Further note that $f(1)=e^p>1$. 
Thus, for sufficiently large $n$, $f(k)\le f(1) <e$ and $\mathbb{P}_k\le \big(c_4 e^{1-x}\big)^k$.
We now have
\begin{align}
\mathbb{P}_0
&\le \hspace{0.1cm} \sum_{k=1}^{n_c}\mathbb{P}_k \le \sum_{k=1}^{\infty} \big(c_4 e^{1-x}\big)^k = \frac{c_4 e^{1-x}}{1- c_4 e^{1-x}} = o(1), \nonumber
\end{align}
since $x\to\infty$ as $n\to\infty$, completing the proof.
\end{proof}

This lemma immediately leads to a proof of Theorem~\ref{thm:threshold_robust}.


\begin{IEEEproof}[Proof of Theorem~\ref{thm:threshold_robust}]
For the first part of the proof,  we show that for any constant $r\in \mathbb{Z}_{\ge1}$, if $p(n)= \frac{\ln n + (r-1)\ln\ln n + x}{n}$, where $x=x(n)$ is some function satisfying $x=o(\ln{\ln{n}})$ and $x\to\infty$ as $n\to\infty$, then almost all $G \in \mathcal{G}(n,p)$ are $r$-robust.  Applying Lemma~\ref{lem:reachable_size} with $\alpha = \frac{1}{2}$, we immediately see that in almost all $G\in\mathcal{G}_{n,p}$, any set of nodes with size up to $\lfloor \frac{n}{2} \rfloor$ will be $r$-reachable.  Thus, for almost all $G \in \mathcal{G}_{n,p}$, given any two disjoint and nonempty subsets of nodes of $G$, at least one of them will be $r$-reachable, and thus $G$ will be $r$-robust.

For the second part of the proof, we need to show that for any constant $r\in  \mathbb{Z}_{\ge1}$, if $p(n)= \frac{\ln n + (r-1)\ln\ln n - x}{n}$, where $x=x(n)$ is some function satisfying $x=o(\ln{\ln{n}})$ and $x\to\infty$ as $n\to\infty$, then almost no $G \in \mathcal{G}(n,p)$ is $r$-robust. The result is obtained by combining Lemma~\ref{lem:robust_connectivity} and Lemma~\ref{lem:threshold_connectivity}.
\end{IEEEproof}

\begin{remark}
The above theorem shows that Erd\H os-R\'enyi graphs gain a great deal more structure at the threshold $t(n)= \frac{\ln n + (r-1)\ln\ln n}{n}$ than simply being $r$-connected (or having minimum degree $r$). As argued earlier, whereas $r$-connectedness implies that given any two disjoint and nonempty sets, the nodes in at least one of the sets collectively have $r$ neighbors outside, the above result shows that there is (at least) one node in one of the sets that {\it by itself} has $r$ neighbors outside.  As an aside, note that the somewhat direct proof of $r$-robustness given above immediately yields a proof of $r$-connectedness of Erd\H os-R\'enyi graphs for connection probabilities above the given threshold.
\end{remark}

\subsection{Implications for Consensus and Contagion}
We now show what the above result means for resilient asymptotic consensus and the emergence of contagion in the $\mathcal{G}_{n,p}$ model.

\begin{definition}
For $G\in \mathcal{G}_{n,p}$ and constant $F\in \mathbb{Z}_{\geq1}$, define \textbf{$RAC_F$} to be the property that resilient asymptotic consensus is reached under the $F$-local model using the W-MSR algorithm for any initial values.
\end{definition}

\begin{corollary}
For any constant $F\in \mathbb{Z}_{\geq1}$, $t(n)= \frac{\ln n + 2F\ln\ln n}{n}$ is a threshold function for property $RAC_F$.
\label{cor:threshold_resilient_consensus}
\end{corollary}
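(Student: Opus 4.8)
The plan is to reduce everything to the case $r = 2F+1$ and to observe that the claimed $t(n)$ is exactly the threshold function supplied by Theorem~\ref{thm:threshold_robust} for property $R_{2F+1}$, and by Lemma~\ref{lem:threshold_connectivity} for property $D_{2F+1}$ (minimum degree $2F+1$). The corollary then amounts to checking the two conditions of Definition~\ref{def:Threshold_function}: that $p(n)=\frac{\ln n + 2F\ln\ln n + x}{n}$ forces property $RAC_F$ almost surely, and that $p(n)=\frac{\ln n + 2F\ln\ln n - x}{n}$ forces its negation almost surely. I would set up the proof as these two directions, matching the class of $x$ used in the proof of Theorem~\ref{thm:threshold_robust}.

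For the above-threshold direction I would simply chain two results already available. With $r=2F+1$, Theorem~\ref{thm:threshold_robust} gives that almost all $G\in\mathcal{G}_{n,p}$ are $(2F+1)$-robust when $p(n)=\frac{\ln n + 2F\ln\ln n + x}{n}$. Theorem~\ref{thm:local_sufficient} then states that $(2F+1)$-robustness is sufficient for the W-MSR algorithm to achieve resilient asymptotic consensus under the $F$-local model, i.e.\ it implies $RAC_F$. Hence almost all such graphs have property $RAC_F$. This direction is essentially bookkeeping, linking the robustness threshold to the sufficiency theorem.

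The substantive direction is the below-threshold claim, where I expect the real work to lie. The idea is to exhibit a simple structural obstruction to consensus that is present almost surely below the threshold. Since $t(n)$ is also the threshold for $D_{2F+1}$ by Lemma~\ref{lem:threshold_connectivity}, when $p(n)=\frac{\ln n + 2F\ln\ln n - x}{n}$ almost every $G$ fails to have minimum degree $2F+1$; equivalently, almost every $G$ contains a node $u$ with $d_u\le 2F$. I would then argue that any such node makes the W-MSR algorithm fail for a suitable initial condition: because $u$ discards the $F$ largest and $F$ smallest values in its neighborhood and $d_u\le 2F$, it discards \emph{all} of its neighbors and updates using only its own value, so it is frozen. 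Taking initial values in which $u$ holds $1$ and every other node holds $0$, node $u$ stays at $1$ permanently, while each neighbor $v\ne u$ sees $u$'s value as the unique maximum among its neighbors and, since $F\ge 1$, discards it; a short induction (or the fact that each update is a convex combination that cannot exceed the current maximum among the non-$u$ nodes) then shows every node other than $u$ remains at $0$. Thus the normal nodes never reach a common value, $RAC_F$ fails for this initial condition, and almost no $G$ has property $RAC_F$.

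The main obstacle is precisely this failure argument: one must confirm that a node of degree at most $2F$ is genuinely frozen under the stated W-MSR update, and verify that the remaining network cannot rise to match its value, being careful about tie-breaking in the discard step and about the non-expansion property that keeps every updated value within the current minimum and maximum. Everything else is an immediate assembly of Theorems~\ref{thm:local_sufficient} and~\ref{thm:threshold_robust} together with Lemma~\ref{lem:threshold_connectivity}.
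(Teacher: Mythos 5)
Your overall decomposition is correct, and the above-threshold direction is exactly the paper's: set $r=2F+1$, apply Theorem~\ref{thm:threshold_robust} to get $(2F+1)$-robustness almost surely, then Theorem~\ref{thm:local_sufficient}. Where you diverge is the below-threshold direction. The paper does not construct any counterexample dynamics: it invokes the fact (stated in Section~\ref{subsec:conn}, from the cited literature) that $(2F+1)$-\emph{connectivity} is necessary for \emph{any} algorithm to achieve resilient consensus under the $F$-local model, and combines this with Lemma~\ref{lem:threshold_connectivity} for $K_{2F+1}$. You instead use the $D_{2F+1}$ half of Lemma~\ref{lem:threshold_connectivity} and argue directly that a node $u$ of degree at most $2F$ stalls W-MSR from the initial condition $(1$ at $u$, $0$ elsewhere$)$ with no misbehaving nodes at all. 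This buys a self-contained, elementary failure certificate that does not lean on an external impossibility theorem, and it exhibits failure already in the adversary-free case; the paper's route is shorter and proves something stronger (no mechanism whatsoever succeeds below the threshold, not just W-MSR).

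One caution about your argument: the freezing of $u$ is valid under the paper's literal description of W-MSR (``disregards the largest and smallest $F$ nodes in its neighborhood''), under which a node of degree at most $2F$ indeed discards every neighbor. But in the standard formulation used in the cited references, a node only discards values strictly larger (resp.\ smaller) than its own, up to $F$ of each; there, $u$ with value $1$ and $2F$ neighbors at $0$ discards only $F$ of them, retains $F$ zeros, and its value decays geometrically to $0$, so your particular initial condition yields consensus after all. You flag this sensitivity yourself, which is the right instinct, but it means your below-threshold argument is tied to the precise discard rule in a way the paper's connectivity-based argument is not. If you want your route to be robust to the algorithm variant, you should either fall back on the connectivity necessity as the paper does, or augment the degree argument with actual misbehaving neighbors of the low-degree node.
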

\begin{IEEEproof}
As discussed in Section~\ref{subsec:conn} and Theorem~\ref{thm:local_sufficient}, $(2F+1)$-connectedness is necessary and $(2F+1)$-robust is sufficient, respectively, for the W-MSR algorithm to achieve resilient asymptotic consensus under the $F$-local model. Thus, by Lemma~\ref{lem:threshold_connectivity} and Theorem~\ref{thm:threshold_robust}, the result follows.
\end{IEEEproof}

\begin{definition}
For $G\in \mathcal{G}_{n,p}$, constant $r\in \mathbb{Z}_{\geq1}$, and positive function $\alpha=\alpha(n)$ satisfying $\sup_n{\alpha(n)} < 1$ and $\ln{\ln n}=o(\alpha \ln{n})$, define $C_{r}^{\alpha}$ to be the property that contagion from any $\lceil \alpha n \rceil$ nodes occurs when cascading with threshold $r$.
\end{definition}

\begin{corollary}
For any constant $r\in \mathbb{Z}_{\geq1}$ and positive function $\alpha=\alpha(n)$ satisfying $\sup_n{\alpha(n)} < 1$ and $\ln{\ln n}=o(\alpha \ln{n})$, $t(n)= \frac{\ln n + (r-1)\ln\ln n}{n}$ is a threshold function for property $C_{r}^{\alpha}$.
\label{cor:threshold_cascade}
\end{corollary}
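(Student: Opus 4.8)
The plan is to reduce the corollary to machinery already developed, namely Theorem~\ref{thm:cascading} and Lemma~\ref{lem:reachable_size}, together with the minimum-degree threshold of Lemma~\ref{lem:threshold_connectivity}. The first observation I would make is a purely arithmetic one: for integer $n$, the initial set size $m=\lceil \alpha n\rceil$ satisfies $n-m=n-\lceil \alpha n\rceil=\lfloor (1-\alpha)n\rfloor$. By Theorem~\ref{thm:cascading}, contagion from any $m$ nodes occurs precisely when every subset of $\mathcal{V}$ of size at most $n-m=\lfloor(1-\alpha)n\rfloor$ is $r$-reachable (provided $m\ge r$). In other words, the contagion property $C_{r}^{\alpha}$ is exactly the reachability property $S_{r}^{1-\alpha}$ of Lemma~\ref{lem:reachable_size}, once I have checked that $m\ge r$ for sufficiently large $n$. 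The latter holds because the hypothesis $\ln\ln n=o(\alpha\ln n)$ forces $\alpha n\to\infty$, so that $m=\lceil\alpha n\rceil\ge r$ eventually, $r$ being a constant.

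For the upper part of the threshold, suppose $p(n)=\frac{\ln n+(r-1)\ln\ln n+x}{n}$ with $x=o(\ln\ln n)$ and $x\to\infty$. The hypotheses imposed on $\alpha$ in the statement of the corollary are exactly those required by Lemma~\ref{lem:reachable_size}, so that lemma directly yields that almost all $G\in\mathcal{G}_{n,p}$ have property $S_{r}^{1-\alpha}$. By the equivalence established above, almost all such $G$ therefore have property $C_{r}^{\alpha}$.

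For the lower part, suppose $p(n)=\frac{\ln n+(r-1)\ln\ln n-x}{n}$ with the same conditions on $x$. Here I would argue via the minimum-degree necessary condition for contagion noted before Theorem~\ref{thm:cascading}: if some node $v$ has degree less than $r$, then choosing the $m$ initially infected nodes so as to avoid $v$ (possible since $m<n$ for large $n$) leaves $v$ forever uninfected, so contagion fails. Hence property $C_{r}^{\alpha}$ implies property $D_r$ (minimum degree at least $r$), giving $\mathbb{P}(\mathcal{G}_{n,p}\in C_{r}^{\alpha})\le\mathbb{P}(\mathcal{G}_{n,p}\in D_r)$. Since $t(n)$ is a threshold function for $D_r$ by Lemma~\ref{lem:threshold_connectivity}, below the threshold we have $\mathbb{P}(\mathcal{G}_{n,p}\in D_r)\to0$, whence $\mathbb{P}(\mathcal{G}_{n,p}\in C_{r}^{\alpha})\to0$ as well, so almost no $G$ has property $C_{r}^{\alpha}$.

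The argument is essentially a repackaging of the proof of Theorem~\ref{thm:threshold_robust}, so I do not expect serious obstacles. The one point requiring care is the boundary bookkeeping: verifying the identity $n-\lceil\alpha n\rceil=\lfloor(1-\alpha)n\rfloor$ and confirming that the side condition $m\ge r$ of Theorem~\ref{thm:cascading} is met for large $n$, which is precisely where the assumption $\ln\ln n=o(\alpha\ln n)$ enters to control how fast $\alpha$ may decay. Everything else follows by directly invoking the cited results.
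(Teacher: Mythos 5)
Your proof is correct and follows essentially the same route as the paper: identify $C_r^{\alpha}$ with $S_r^{1-\alpha}$ via Theorem~\ref{thm:cascading}, invoke Lemma~\ref{lem:reachable_size} above the threshold, and use the minimum-degree necessary condition together with Lemma~\ref{lem:threshold_connectivity} below it. The extra bookkeeping you supply (the identity $n-\lceil\alpha n\rceil=\lfloor(1-\alpha)n\rfloor$ and the check that $m\ge r$ eventually) is left implicit in the paper but is verified correctly.
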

\begin{IEEEproof}
Note that by Theorem~\ref{thm:cascading}, the properties $S_{r}^{1-\alpha}$ (defined in Lemma~\ref{lem:reachable_size}) and $C_{r}^{\alpha}$ are equivalent. Thus, the result follows by combining Lemma~\ref{lem:threshold_connectivity}, Lemma~\ref{lem:reachable_size} and the discussions in Section~\ref{sec:cascading}.
\end{IEEEproof}

The above corollary indicates that at the threshold $t(n)= \frac{\ln n + (r-1)\ln\ln n}{n}$, Erd\H os-R\'enyi graphs gain the ability to allow information initially held by any $\alpha$-fraction of nodes to cascade through the network to all other nodes.
The fraction $\alpha$ can go to $0$ at a sufficiently slow rate; for example, $\alpha$ can be some function in $\Omega(\frac{1}{(\ln n)^{\epsilon}})$, $0 < \epsilon < 1$, that satisfies $\sup_n{\alpha(n)} < 1$.
Theorem~\ref{thm:threshold_robust} (together with Corollaries \ref{cor:threshold_resilient_consensus} and \ref{cor:threshold_cascade}) implies that the `worst-case' networks (such as in Figure~\ref{fig:Counterexample}) will not appear (with probability tending to $1$) in Erd\H os-R\'enyi graphs.

\begin{remark}
Note that Corollary~\ref{cor:threshold_cascade} also applies to bootstrap percolation (due to the identical dynamics under the two scenarios). In fact, more general results for bootstrap percolation in  Erd\H os-R\'enyi graphs can be found in \cite{janson2012bootstrap}, where probability functions of the form $t(n)=\frac{g(n)}{n}$ are also explored. Our approach explicitly highlights the underlying graph properties of reachability and robustness and shows their relationship to connectivity and minimum degree in Erd\H os-R\'enyi graphs, leading to a fairly direct proof of the phase transition at the given threshold,  with corresponding implications for consensus and contagion (or bootstrap percolation).
\end{remark}

\section{Robustness of Geometric Random Graphs}
\label{sec:geometric}

Another widely used model for large networks is the {\it geometric random graph}, which captures edges between nodes that are in close (spatial) proximity to each other. We consider the geometric graph $\mathcal{G}_{n,\rho,l}^{d} = \{\mathcal{V}, \mathcal{E}\}$, which is an undirected graph generated by first placing $n$ nodes (according to some mechanism) in a region $\Omega_d=[0,l]^d$, where $d\in \mathbb{Z}_{\geq1}$.  We denote the position of node $i \in \mathcal{V}$ with $x(i) \in \Omega_d$.  Nodes $i, j \in \mathcal{V}$ are connected by an edge if and only if $\|x(i)-x(j)\| \le \rho$ for some threshold $\rho$, where $\|\cdot\|$ indicates an appropriate norm (often taken to be the standard Euclidean norm).
When the node positions are generated randomly (e.g., uniformly and independently) in the region, one obtains a {\it geometric random graph}.  In the widely-studied model $\mathcal{G}_{n,\rho}^{d}$, the parameter $l$ is fixed and graph properties are typically explored when $n\to\infty$ and $\rho\to 0$, leading to dense random networks \cite{Penrose03}.\footnote{Note that properties of networks with a finite number of nodes have also been explored (e.g., see \cite{Desai2002Connectivity}).}
In the more general model $\mathcal{G}_{n,\rho,l}^{d}$, however, the length $l$ is also allowed to increase and the density $\frac{n}{l^d}$ can converge to some constant, making it suitable for capturing both dense and sparse random networks \cite{Paolo2003}.

In Section~\ref{sec:E-R}, we showed that the properties of connectivity and robustness have the same threshold function in Erd\H os-R\'enyi graphs.  In this section, we will prove similar results for one-dimensional geometric random graphs (i.e., $d=1$).
We start by providing a result showing that connectivity and robustness cannot be very different in one-dimensional geometric graphs, and are in fact equal when the nodes are sufficiently spread out (regardless of how the node positions are generated and the relationships between $\rho$, $n$ and $\l$).  In the following, we assume that the nodes are ordered such that if $i, j \in \mathcal{V}$ and $i < j$ then $x(i) \le x(j)$.

\begin{theorem}
\label{thm:1-D}
In $\Omega_1=[0,l]$, if $\mathcal{G}_{n,\rho,l}^{1}$ is $r$-connected, then it is at least $\lfloor\frac{r}{2}\rfloor$-robust.  Furthermore, if $x(n)-x(1) > 3\rho$, then the graph is $r$-connected if and only if it is $r$-robust.
\end{theorem}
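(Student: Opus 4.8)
The plan is to reduce everything to the interval structure of one-dimensional geometric graphs. Ordering the nodes by position ($x(1)\le\cdots\le x(n)$), the basic structural fact is that the neighborhood of any node is an index-interval: writing $L_i$ and $R_i$ for the number of neighbors of node $i$ with smaller, resp.\ larger, index, node $i$ is adjacent to exactly $\{i-L_i,\dots,i-1,i+1,\dots,i+R_i\}$. Two elementary consequences drive the argument. First, removing the $R_i$ right-neighbors of node $i$ (or the $L_{i+1}$ left-neighbors of node $i+1$) disconnects the graph whenever there is a node more than $\rho$ to the right of $x(i)$ and a node more than $\rho$ to the left of $x(i+1)$; call such a gap \emph{valid}. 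Hence the connectivity $\kappa$ satisfies $\kappa\le\min(R_i,L_{i+1})$ at every valid gap, so both $R_i\ge\kappa$ and $L_{i+1}\ge\kappa$ there, and in particular any node that has some node more than $\rho$ away on \emph{each} side has degree at least $2\kappa$. Second, the ``only if'' direction of the second claim (robust $\Rightarrow$ connected) is immediate from Lemma~\ref{lem:robust_connectivity}, so only ``$r$-connected $\Rightarrow$ $r$-robust'' under the spread hypothesis, together with the first claim, requires work.

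I would argue both by contrapositive, starting from disjoint nonempty $\mathcal{S}_1,\mathcal{S}_2$ that are both non-$r$-reachable (every node of $\mathcal{S}_j$ has at most $r-1$ neighbors outside $\mathcal{S}_j$) and manufacturing either a small cut or an overfull neighborhood. The first key step is an extremal-node observation: the rightmost node $q_j$ of $\mathcal{S}_j$ has all its right-neighbors outside $\mathcal{S}_j$, so $R_{q_j}\le r-1$; if the gap at $q_j$ is valid this is a cut of size at most $r-1$, contradicting $r$-connectivity. The only escape is that $q_j$ ``reaches the right end'' ($x(q_j)\ge x(n)-\rho$), and symmetrically the leftmost node $p_j$ must reach the left end. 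Applying the same cut to a hypothetical internal gap of width exceeding $\rho$ inside $\mathcal{S}_j$ rules such gaps out. Thus, in any bad pair, each of $\mathcal{S}_1,\mathcal{S}_2$ is a $\rho$-dense chain (consecutive members within $\rho$) stretching from within $\rho$ of node $1$ to within $\rho$ of node $n$.

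The final step is where the spread hypothesis enters, and it is the main obstacle. The inequality $x(n)-x(1)>3\rho$ is exactly what makes the open interval $(x(1)+\rho,\,x(n)-\rho)$ have width greater than $\rho$; since consecutive positions differ by at most $\rho$ in a connected graph, this interval must contain a node $v$, and $v$ then has a node more than $\rho$ away on each side, so $\deg(v)\ge 2\kappa$. The same room forces the two disjoint, $\rho$-dense, end-to-end chains to interleave in this deep region, yielding a pair consisting of one node of each set sharing a common neighborhood $W$. Counting memberships in $W$, non-reachability makes $\mathcal{S}_1$ claim more than half of $W$ and $\mathcal{S}_2$ claim more than half of $W$; disjointness then gives $|W\cap\mathcal{S}_1|+|W\cap\mathcal{S}_2|>|W|$, a contradiction, so no bad pair exists and $\mathcal{G}$ is $r$-robust. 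I expect the delicate point to be making this count robust under non-uniform placements: one must anchor $W$ near a minimum-cut gap, where a $\rho$-window carries only $\kappa$ nodes, so that $W$ is small enough relative to the degree-$2\kappa$ bound to force the strict overcount. The two ``apex'' configurations (span below $2\rho$) where $\kappa$ and robustness genuinely differ show that some length hypothesis is unavoidable and pinpoint why $3\rho$, rather than $2\rho$, is the right threshold.

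The factor of two in the first claim comes from running only the one-sided version of the extremal-gap analysis, bounding $\kappa$ by $2r-1$ rather than $r-1$; this needs no control on the length, which is why no spread hypothesis appears there. I would present that as a warm-up and then sharpen it to equality under $x(n)-x(1)>3\rho$ via the deep-node counting above.
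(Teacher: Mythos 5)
Your setup---index-interval neighborhoods, the cut obtained by deleting the right-neighbors of a node at a ``valid'' gap, and the resulting facts that every closed $\rho$-window inside $(x(1),x(n))$ carries at least $r$ mutually adjacent nodes and that a set missing such a window is $r$-reachable---is essentially the same machinery the paper uses. It does deliver the first claim: in a window $I$ with $|I|\ge r$ that meets both sets, a representative of $\mathcal{S}_1$ sees $|I|-|I\cap\mathcal{S}_1|$ nodes outside its set and a representative of $\mathcal{S}_2$ sees $|I|-|I\cap\mathcal{S}_2|$, and these sum to at least $|I|\ge r$, so one set is $\lceil r/2\rceil$-reachable. (Your phrasing ``bounding $\kappa$ by $2r-1$ via the one-sided extremal-gap analysis'' is too compressed to check as written, but the ingredients are present.)

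The genuine gap is in the second part, precisely at the point you flag as delicate and then leave unresolved. Your count needs a single set $W$ contained simultaneously in the neighborhood of some $i\in\mathcal{S}_1$ and of some $j\in\mathcal{S}_2$ with $|W|\ge 2r-1$: only then does non-$r$-reachability give $|W\setminus\mathcal{S}_1|\le r-1<|W|/2$ and $|W\setminus\mathcal{S}_2|\le r-1<|W|/2$, contradicting disjointness. Two disjoint $\rho$-dense end-to-end chains need not contain a co-located pair, and as soon as $x(j)<x(i)$ the common neighborhood $[x(i)-\rho,\,x(j)+\rho]$ is only guaranteed to contain the $\ge r$ nodes of a single $\rho$-window, which is too few; anchoring $W$ at a minimum-cut window of exactly $\kappa$ nodes makes $W$ smaller still and recovers only the $\lfloor r/2\rfloor$ count. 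The paper closes this with an asymmetric argument that your sketch does not contain: pick $i\in\mathcal{S}_1$ in the middle window $[a+\rho,a+2\rho]$ of a $3\rho$-interval, take $j\in\mathcal{S}_2$ nearest to $i$ so that only nodes of $\mathcal{X}=\mathcal{V}\setminus(\mathcal{S}_1\cup\mathcal{S}_2)$ lie between them, and count $j$'s outside-$\mathcal{S}_2$ neighbors separately in $[x(i)-\rho,x(i))$ (at least $r-n_{\mathcal{S}_2}$, because that window has at least $r$ nodes while non-$r$-reachability of $\mathcal{S}_1$ at $i$ caps its $\mathcal{S}_2\cup\mathcal{X}$ population) and in $[x(i),x(j)+\rho]$ (at least $n_{\mathcal{S}_2}+n_{\mathcal{X}}-n'_{\mathcal{X}}$, because $(x(j),x(j)+\rho]$ has at least $r$ nodes), the two contributions summing to at least $r$. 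Without this step or an equivalent, the implication that $x(n)-x(1)>3\rho$ and $r$-connectivity force $r$-robustness is not established.
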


\begin{IEEEproof}
First, note that if $x(n)-x(1) \le \rho$ then the graph is complete and therefore $(n-1)$-connected and $\lceil\frac{n}{2}\rceil$-robust, and thus the claim holds.    In the rest of the proof, we assume that $x(n)-x(1) > \rho$.  In this case, if the graph is $r$-connected, the following two properties hold.
\begin{enumerate}
\item Every interval of the form $(a,a+\rho] \subset (x(1),x(n))$ must have at least $r$ nodes, because otherwise, removing the nodes in that interval would disconnect the nodes in the interval $[x(1),a]$ from those in the interval $(a+\rho,x(n)]$.  The same is true for every interval of the form $[a,a+\rho) \subset (x(1),x(n))$, and thus for every closed interval of length $\rho$ contained in $(x(1),x(n))$.
\item Consider any nonempty set $\mathcal{S} \subset \mathcal{V}$.  If there exists an interval $[a, a+\rho] \subset (x(1),x(n))$ with no nodes from $\mathcal{S}$, then there must be a node from $\mathcal{S}$ in the interval $[x(1),a)$ or in the interval $(a+\rho,x(n)]$.  By symmetry, assume that $\mathcal{S}$ has nodes in $[x(1),a)$ and let $i$ be the node in $\mathcal{S}$ that is closest to $a$ from this interval.  Then the interval $(x(i),x(i)+\rho]$ contains no nodes from $\mathcal{S}$, but contains at least $r$ nodes, and thus $\mathcal{S}$ is $r$-reachable.
\end{enumerate}

Now consider any two disjoint and nonempty subsets $\mathcal{S}_1, \mathcal{S}_2 \subset \mathcal{V}$, and any interval $[a, a+\rho] \subset (x(1),x(n))$.  If $\mathcal{S}_1$ (resp. $\mathcal{S}_2$) has no nodes in $[a, a+\rho]$, then $\mathcal{S}_1$ (resp. $\mathcal{S}_2$) is $r$-reachable.  Thus, suppose both $\mathcal{S}_1$ and $\mathcal{S}_2$ have nodes in $[a, a+\rho]$.  If $\mathcal{S}_1$ is not $\lfloor\frac{r}{2}\rfloor$-reachable, there are fewer than $\lfloor\frac{r}{2}\rfloor$ nodes from $\mathcal{S}_2$ in  $[a, a+\rho]$.  Choose any node $i$ from $\mathcal{S}_2$ in the interval.  There are at least $r-1$ remaining nodes in the interval, and at most $\lfloor\frac{r}{2}\rfloor - 1$ of them are in $\mathcal{S}_2$.  Thus $i$ has at least $r-1-\lfloor\frac{r}{2}\rfloor + 1 \ge \lfloor\frac{r}{2}\rfloor$ neighbors in the interval that are not in $\mathcal{S}_2$.  Therefore, for any two disjoint and nonempty subsets $\mathcal{S}_1, \mathcal{S}_2 \subset \mathcal{V}$, at least one of them is $\lfloor\frac{r}{2}\rfloor$-reachable.  Thus the graph is at least $\lfloor\frac{r}{2}\rfloor$-robust, proving the first part of the theorem.

For the second part of the theorem, assume $x(n)-x(1) > 3\rho$.  Then there exists an interval $[a,a+3\rho] \subset (x(1),x(n))$.  Consider any two nonempty and disjoint subsets $\mathcal{S}_1, \mathcal{S}_2 \subset \mathcal{V}$, and denote $\mathcal{X} = \mathcal{V} \setminus \left(\mathcal{S}_1 \cup \mathcal{S}_2\right)$.  By the argument above, if either $\mathcal{S}_1$ or $\mathcal{S}_2$ does not have any nodes in some closed interval of length $\rho$ within $(x(1),x(n))$, that set will be $r$-reachable.  Thus, suppose that both $\mathcal{S}_1$ and $\mathcal{S}_2$ have nodes in all closed intervals of length $\rho$ within $(x(1),x(n))$.  Pick any node $i$ from $\mathcal{S}_1$ in the interval $[a+\rho,a+2\rho]$, and let $j \in \mathcal{S}_2$ be the node in $[a+\rho,a+2\rho]$ that is closest to $i$.  We assume without loss of generality that $x(j) \le x(i)$ and that if $x(j) < x(i)$, then there are only nodes from $\mathcal{X}$ between $i$ and $j$ (the latter can always be enforced by redefining $i$ to be the node in $\mathcal{S}_1$ that is closest to $j$ in $[a+\rho,a+2\rho]$).

Suppose that $\mathcal{S}_1$ is not $r$-reachable.  Then there are fewer than $r$ nodes from $\mathcal{S}_2 \cup \mathcal{X}$ in the interval $[x(i)-\rho,x(i)+\rho]$.  If $x(j) = x(i)$, then $j$ has at least $2r$ neighbors in $[x(i)-\rho,x(i)+\rho]$ and since at most $r$ of them are from $\mathcal{S}_2$, the set $\mathcal{S}_2$ will be $r$-reachable.  Thus assume that $x(j) < x(i)$.  Let $n_{\mathcal{S}_2}$ and $n_{\mathcal{X}}$ be the number of nodes in $[x(i)-\rho,x(i))$ from $\mathcal{S}_2$ and $\mathcal{X}$, respectively.
Then the combined number of nodes from $\mathcal{S}_2$ and $\mathcal{X}$ in $[x(i),x(i)+\rho]$ must be less than $r - n_{\mathcal{S}_2} - n_{\mathcal{X}}$.  Let the number of nodes from $\mathcal{X}$ strictly between $j$ and $i$ be $n'_{\mathcal{X}}$, where $n'_{\mathcal{X}} \le n_{\mathcal{X}} < r$.  Since the interval $(x(j),x(j)+\rho]$ contains at least $r$ nodes, $x(j)$ has at least $r-n'_{\mathcal{X}}$ neighbors in the interval $[x(i),x(j)+\rho]$.  Since there are fewer than $r-n_{\mathcal{S}_2}-n_{\mathcal{X}}$ nodes from $\mathcal{S}_2$ in $[x(i),x(i)+\rho]$ (and thus in the smaller interval $[x(i),x(j)+\rho]$), node $j$ has at least $r-n'_{\mathcal{X}}-(r-n_{\mathcal{S}_2}-n_{\mathcal{X}}) = n_{\mathcal{S}_2}+n_{\mathcal{X}}-n'_{\mathcal{X}}$ neighbors outside its set in $[x(i),x(j)+\rho]$.  Furthermore, since there are at least $r-1$ nodes other than $j$ in $[x(i)-\rho,x(i))$, and $n_{\mathcal{S}_2}-1$ of them are from $\mathcal{S}_2$, node $j$ has at least $r-1-(n_{\mathcal{S}_2}-1) = r-n_{\mathcal{S}_2}$ neighbors outside its set in $[x(i)-\rho,x(i))$.  Thus, in the interval $[x(i)-\rho,x(j)+\rho]$, node $j$ has at least $r-n_{\mathcal{S}_2}+n_{\mathcal{S}_2}+n_{\mathcal{X}}-n'_{\mathcal{X}} \ge r$ nodes outside its set, making set $\mathcal{S}_2$ at least $r$-reachable.  Thus, the graph is $r$-robust.  The converse statement follows from Lemma~\ref{lem:robust_connectivity}
\end{IEEEproof}

Once again, note that the result in Theorem~\ref{thm:1-D} does not depend on {\it how} the positions of the nodes are generated.  Unfortunately, the theorem does not extend to geometric graphs in higher-dimensions.  For example, the graph shown in Figure~\ref{fig:Counterexample} can be viewed as a geometric graph in two dimensions, where the nodes in each set are all clustered horizontally within a distance $\rho$, and the two sets are vertically separated by a distance just below $\rho$ so that each node is within a distance $\rho$ of exactly one node in the opposite set.  Clearly that graph is only $1$-robust, despite having a connectivity of $\frac{n}{2}$.

Next we will present an asymptotic approach to analyzing one-dimensional random graphs (complementary to the analysis in Theorem~\ref{thm:1-D}).
We first define properties for {\it almost all} graphs in $\mathcal{G}_{n,\rho,l}^{d}$ as follows, similar to the $\mathcal{G}_{n,p}$ model.

\begin{definition}
Assume $\mathcal{P}$ is a graph property. We say that \textbf{almost all} $G \in \mathcal{G}_{n,\rho,l}^{d}$ have property $\mathcal{P}$ if $\mathbb{P}(\mathcal{G}_{n,\rho,l}^{d}\in \mathcal{P}) \to 1$ as $l \to \infty$, and \textbf{almost no} $G \in \mathcal{G}_{n,\rho,l}^{d}$ has property $\mathcal{P}$ if $\mathbb{P}(\mathcal{G}_{n,\rho,l}^{d}\in \mathcal{P}) \to 0$ as $l \to \infty$.
\end{definition}

Note that we study these properties in $\mathcal{G}_{n,\rho,l}^{d}$ as $l\to\infty$, and take $n$ and $\rho$ to be functions of $l$, i.e., $n=n(l)$ and $\rho=\rho(l)$.
We will use the following result from \cite{Paolo2003}.

\begin{theorem}[\cite{Paolo2003}]
\label{thm:1-D-Paolo}
Assume that $\rho n=kl\ln l$ for some constant $k>0$ and $\rho=\Omega(\frac{1}{\ln{l}})$.
\begin{itemize}
\item If $k>2$, or $k=2$ and $\rho\to \infty$, then almost all $G \in \mathcal{G}_{n,\rho,l}^{1}$ are connected.
\item If $\rho\in\Theta(l^{\epsilon})$ and $k\le (1-\epsilon)$ for some constant $0<\epsilon<1$, then almost no $G \in \mathcal{G}_{n,\rho,l}^{1}$ is connected.
\end{itemize}
\end{theorem}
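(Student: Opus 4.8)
The plan is to reduce connectivity of a one-dimensional geometric graph to a statement about the spacings between consecutive nodes, and then to run a first-moment argument for the positive (connectivity) direction and a second-moment argument for the negative (disconnection) direction. I assume throughout that the $n$ node positions are drawn independently and uniformly from $[0,l]$, and I sort them as $x(1)\le\cdots\le x(n)$. The key structural observation is that in one dimension the graph is connected if and only if every gap between consecutive nodes is at most $\rho$: if all gaps are at most $\rho$ then the path $1-2-\cdots-n$ spans the graph, and if some gap exceeds $\rho$ then the nodes to its left and right lie in different components. Hence, writing $N$ for the number of indices $i\in\{1,\dots,n-1\}$ with $x(i+1)-x(i)>\rho$, the graph is connected exactly when $N=0$, and the whole problem becomes the study of the random variable $N$.

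For the first bullet I would bound $\mathbb{P}(\text{disconnected})=\mathbb{P}(N\ge 1)\le\mathbb{E}[N]$ by the union bound and show $\mathbb{E}[N]\to 0$. Using the classical fact that a single normalized spacing of $n$ uniform points on $[0,1]$ satisfies $\mathbb{P}(S>s)=(1-s)^n$, and rescaling by $l$, each internal gap exceeds $\rho$ with probability $(1-\rho/l)^n$, so $\mathbb{E}[N]=(n-1)(1-\rho/l)^n$. Since $1-\rho/l\le e^{-\rho/l}$, this is at most $n\exp(-n\rho/l)=n\,l^{-k}$ after substituting $n\rho=kl\ln l$. Solving $n=kl\ln l/\rho$ with $\rho=\Omega(1/\ln l)$ gives $\ln n\le(1+o(1))\ln l$, so $\mathbb{E}[N]\le l^{\,1-k+o(1)}\to 0$ whenever $k>2$ (indeed with room to spare); the borderline case $k=2$ with $\rho\to\infty$ is handled by retaining the factor $1/\rho$ in $n$, which forces the bound $\mathbb{E}[N]\le 2\ln l/(\rho l)\to 0$. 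This yields connectivity for almost all $G$.

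For the second bullet I need the stronger conclusion that $N\ge 1$ with probability tending to one, for which a first-moment divergence $\mathbb{E}[N]\to\infty$ does not suffice; I would use the second moment method. By exchangeability of spacings, the joint survival of two distinct internal spacings has the closed form $\mathbb{P}(S_i>s,S_j>s)=(1-2s)^n$, giving $\mathbb{E}[N(N-1)]=(n-1)(n-2)(1-2\rho/l)^n$. The crux is then to show $\operatorname{Var}(N)=o(\mathbb{E}[N]^2)$, which reduces to controlling the ratio $(1-2\rho/l)^n/(1-\rho/l)^{2n}=\big(1-\tfrac{(\rho/l)^2}{(1-\rho/l)^2}\big)^n$. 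In the regime $\rho=\Theta(l^\epsilon)$ one has $n(\rho/l)^2=k\ln l\cdot(\rho/l)\to 0$, so this ratio tends to $1$ and $\mathbb{E}[N(N-1)]/\mathbb{E}[N]^2\to 1$; combined with $1/\mathbb{E}[N]\to 0$ (which holds because $k\le 1-\epsilon$ forces $\mathbb{E}[N]=(n-1)(1-\rho/l)^n\to\infty$ in this regime), Chebyshev's inequality gives $\mathbb{P}(N=0)\to 0$, i.e.\ almost no $G$ is connected.

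The step I expect to be the main obstacle is exactly this variance control: the spacings are not independent (they are constrained to sum to $l$ and are negatively correlated), so the second-moment computation hinges on the exact two-spacing survival probability and on verifying that the correlation correction $n(\rho/l)^2$ vanishes. An alternative that sidesteps the dependence is to replace the $n$ uniform points by a Poisson process of intensity $n/l$, under which consecutive gaps become i.i.d.\ exponentials and $N$ becomes essentially a sum of independent indicators with mean $\approx n\,l^{-k}$; both directions then follow from a Poisson-approximation/Chernoff estimate, after which a de-Poissonization argument transfers the conclusion back to the fixed-$n$ model. Either route must also absorb the boundary effects (the two end-spacings, which do not affect connectivity but do enter the spacing bookkeeping) and confirm insensitivity to the precise placement mechanism, as the surrounding discussion emphasizes.
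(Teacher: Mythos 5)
The paper does not actually prove this statement --- it is imported verbatim from \cite{Paolo2003} --- so the only in-paper point of comparison is the proof of Theorem~\ref{thm:1-Drandom}, which ``builds upon and generalizes'' the technique of \cite{Paolo2003}: subdivide $[0,l]$ into segments of length $\rho/(r+1)$ and apply Markov's inequality to the number of empty segments, and for the disconnection direction simply cite Theorem~\ref{thm:1-D-Paolo} again. Your route is genuinely different and, as far as I can tell, correct (assuming, as intended, $n$ i.i.d.\ uniform points on $[0,l]$): you reduce connectivity exactly to the event that no consecutive order-statistic spacing exceeds $\rho$, use the Dirichlet spacing facts $\mathbb{P}(S_i>s)=(1-s/l)^n$ and $\mathbb{P}(S_i>s,S_j>s)=(1-2s/l)^n$, and run first- and second-moment arguments on the count $N$ of oversized gaps. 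This buys two things the segment method does not: (i) sharpness --- your bound $\mathbb{E}[N]\le n\,l^{-k}$ gives connectivity already for $k>1+o(1)$ and renders the $\rho\to\infty$ proviso at $k=2$ unnecessary, whereas the half-length-segment sufficient condition inherently loses a factor of $2$ and only reaches $k>2$; and (ii) a self-contained proof of the disconnection bullet, which the paper never supplies. Your variance control is even easier than you fear: since $(1-2\rho/l)\le(1-\rho/l)^2$, the spacings' negative association gives $\mathbb{E}[N(N-1)]\le\mathbb{E}[N]^2$ outright, hence $\operatorname{Var}(N)\le\mathbb{E}[N]$ and Chebyshev needs only $\mathbb{E}[N]\to\infty$, which $k\le 1-\epsilon$ provides. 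What the cruder segment method buys in exchange is exactly what the paper needs downstream: requiring every segment to be nonempty generalizes immediately to requiring $r$ nodes per $\rho$-interval (hence $r$-connectivity and $r$-robustness in Theorem~\ref{thm:1-Drandom}), whereas your spacing argument would have to be redone with $r$-th nearest-neighbor spacings. The only items to tighten are explicit justification of the two spacing identities and the (harmless) exclusion of the two boundary spacings from $N$.
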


We now present the following conditions under which the one-dimensional geometric random graph becomes $r$-connected and $r$-robust; the proof of this result builds upon and generalizes the proof of Theorem~\ref{thm:1-D-Paolo} from \cite{Paolo2003}.  Note that if $\rho(l) \ge l$, the graph will be $(n(l)-1)$-connected and $\lceil\frac{n(l)}{2}\rceil$-robust, and thus we focus on the case where $\rho(l) < l$ in the theorem below.

\begin{theorem}
\label{thm:1-Drandom}
Assume that $\rho n=kl\ln l$ for some constant $k>0$. 
\begin{itemize}
\item If $\rho < l$ and $\rho \in \Omega(l)$, then almost all $G \in \mathcal{G}_{n,\rho,l}^{1}$ are $r$-connected and $r$-robust for all $r \in \mathbb{Z}_{\ge{1}}$.
\item If $\rho = o(l) $ and $\rho l^{\frac{k}{r+1} -1} \to\infty$ for some $r \in \mathbb{Z}_{\ge{1}}$, then almost all $G \in \mathcal{G}_{n,\rho,l}^{1}$ are $r$-connected and $r$-robust.
\item If $\rho\in\Theta(l^{\epsilon})$ and $k\le (1-\epsilon)$ for some constant $0<\epsilon<1$, then almost no $G \in \mathcal{G}_{n,\rho,l}^{1}$ is $r$-connected or $r$-robust.
\end{itemize}
\end{theorem}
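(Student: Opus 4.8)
The plan is to reduce all three statements to occupancy estimates for the random placement of nodes into small cells, passing between $r$-connectivity and $r$-robustness via Theorem~\ref{thm:1-D}, and handling the negative case through Theorem~\ref{thm:1-D-Paolo} together with Lemma~\ref{lem:robust_connectivity}. The starting observation is a reformulation of connectivity: extending property~1 in the proof of Theorem~\ref{thm:1-D}, when $x(n)-x(1)>\rho$ the graph is $r$-connected if and only if every closed interval of length $\rho$ contained in $(x(1),x(n))$ holds at least $r$ nodes. Necessity is exactly that property; for sufficiency, any vertex cut of size less than $r$ would leave a gap between surviving nodes of length exceeding $\rho$, and the at least $r$ interior nodes of a length-$\rho$ subwindow of that gap would all have to lie in the cut, a contradiction. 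Thus I will control $r$-connectivity purely through an occupancy condition.

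To obtain it, I will partition $[0,l]$ into $M=\lceil (r+1)l/\rho\rceil$ cells of length $\rho/(r+1)$. Since any length-$\rho$ window contains at least $r$ complete cells, the event that every cell is nonempty forces every window to hold $\ge r$ nodes, hence $r$-connectivity. The probability that a fixed cell is empty is $(1-\tfrac{\rho}{(r+1)l})^{n}\le e^{-\rho n/((r+1)l)}=l^{-k/(r+1)}$ using $\rho n=kl\ln l$, so a union bound gives $\mathbb{P}(\text{some cell empty})\le M\,l^{-k/(r+1)}=O\big(\rho^{-1}l^{1-k/(r+1)}\big)+o(1)$. The choice of cell length $\rho/(r+1)$ is what makes the exponent come out to exactly $k/(r+1)$, matching the hypothesis of the second bullet. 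When $\rho=o(l)$ and $\rho l^{k/(r+1)-1}\to\infty$ this bound tends to $0$, so almost all graphs have every cell nonempty and are $r$-connected; nonemptiness of the extreme cells also forces $x(n)-x(1)\ge l-2\rho/(r+1)>3\rho$ once $l/\rho$ is large (guaranteed by $\rho=o(l)$), and Theorem~\ref{thm:1-D} then upgrades $r$-connectivity to $r$-robustness.

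For the first bullet, $\rho\in\Omega(l)$ with $\rho<l$ gives $n=\Theta(\ln l)\to\infty$. Here the same partition into $s+1$ cells of length $\rho/(s+1)$ has only $O(1)$ cells, each empty with probability $(1-\Theta(1))^{n}\to 0$, so for every fixed $s$ almost all graphs are $s$-connected. Taking $s=r$ yields $r$-connectivity; taking $s=2r$ and applying the unconditional first part of Theorem~\ref{thm:1-D} ($2r$-connected $\Rightarrow \lfloor 2r/2\rfloor=r$-robust) yields $r$-robustness. This detour through a doubled connectivity is forced because the $x(n)-x(1)>3\rho$ equivalence can fail when $\rho$ is a constant fraction of $l$. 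Finally, the third bullet is immediate: under $\rho\in\Theta(l^{\epsilon})$ and $k\le 1-\epsilon$, Theorem~\ref{thm:1-D-Paolo} already asserts that almost no $G$ is connected, hence almost no $G$ is $r$-connected for any $r\ge 1$, and since $r$-robustness implies $r$-connectivity by Lemma~\ref{lem:robust_connectivity}, almost no $G$ is $r$-robust.

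I expect the occupancy step to be the main obstacle, both in pinning down the cell length $\rho/(r+1)$ so that the union bound reproduces the sharp threshold exponent, and in treating the two density regimes with genuinely different robustness arguments, since the clean $3\rho$-equivalence is unavailable when $\rho=\Omega(l)$ and one must instead sacrifice a factor of two in connectivity and invoke the first part of Theorem~\ref{thm:1-D}. The connectivity-to-window equivalence and the boundary bookkeeping near $x(1)$ and $x(n)$ are routine but must be stated with care.
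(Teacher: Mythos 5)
Your proposal is correct and follows essentially the same route as the paper's proof: partition $[0,l]$ into cells of length $\rho/(r+1)$, bound the probability of an empty cell by a first-moment/union-bound computation yielding $O(\rho^{-1}l^{1-k/(r+1)})$, invoke Theorem~\ref{thm:1-D} to pass to robustness (using the $\lfloor r/2\rfloor$ version for all $r$ in the dense regime, which is your ``double $r$'' trick), and dispatch the third bullet via Theorem~\ref{thm:1-D-Paolo} and Lemma~\ref{lem:robust_connectivity}. The only difference is that you spell out the window-occupancy characterization of $r$-connectivity explicitly, whereas the paper obtains $r$-connectivity from $r$-robustness via Lemma~\ref{lem:robust_connectivity}.
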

\begin{IEEEproof}
Fix any $r \in \mathbb{Z}_{\ge{1}}$.  In order to prove the first two parts, we will show that any interval of length $\rho$ contains at least $r$ nodes; the results will then follow from the arguments in the proof of Theorem~\ref{thm:1-D}. Let $\Omega_1=[0,l]$ be subdivided into non-overlapping segments of length $h=\frac{\rho}{r+1}$. Then $\Omega_1$ has $c=\lfloor \frac{(r+1)l}{\rho} \rfloor$ whole segments and potentially a fraction of a segment.  Any interval of length $\rho$ in $\Omega_1$ will contain at least $r$ whole segments and thus we just need to show every whole segment contains at least one node. 

Let $\omega$ be a random variable representing the number of empty whole segments. Since $\omega$ is a nonnegative integer random variable, by Markov's inequality we know $\mathbb{P}(\omega>0)\le\mathbb{E}(\omega)$, where $\mathbb{E}(\omega)=c(1- \frac{h}{l})^n$ is the expected value of $\omega$.  Since $1-x\le\exp(-x)$, we have
\begin{align}
\mathbb{E}(\omega)
&= c\left( 1- \frac{h}{l}\right)^n \le c \exp\left(- \frac{nh}{l}\right) \nonumber \\
&\le \frac{(r+1)l}{\rho}   \exp\left( - \frac{n\rho}{(r+1)l  } \right)  \label{equ:scaling_geo} \\
&= \frac{(r+1)l}{\rho} \exp\left( - \frac{k}{r+1} \ln{l} \right) \nonumber \\
&= \frac{(r+1)}{\rho}  l^{1- \frac{k}{r+1}} \nonumber .
\end{align}
Note that in~(\ref{equ:scaling_geo}), we replaced $n$ by $\frac{kl\ln l}{\rho}$.

Under the conditions in the first part of the theorem, $\rho l^{\frac{k}{r+1} -1} \to\infty$ regardless of the choice of $r \in \mathbb{Z}_{\ge{1}}$.  Thus $\mathbb{E}(\omega) \rightarrow 0$ and Theorem~\ref{thm:1-D} indicates that almost all graphs will be $\lfloor\frac{r}{2}\rfloor$-robust for all $r \in \mathbb{Z}_{\ge{1}}$ (or equivalently, $r$-robust for all $r \in \mathbb{Z}_{\ge{1}}$).  By Lemma~\ref{lem:robust_connectivity}, almost all graphs will be $r$-connected for all $r \in \mathbb{Z}_{\ge{1}}$.  Similarly, for the second part, $\mathbb{E}(\omega)\to 0$ as $l\to\infty$ if $k$ and $r$ satisfy the given conditions, indicating that the graph will be $r$-robust and $r$-connected  (again, using Theorem~\ref{thm:1-D}).

For the third part, Theorem~\ref{thm:1-D-Paolo} indicates that almost no $G \in\mathcal{G}_{n,\rho,l}^{d}$ is connected under the given conditions, and thus almost no graph is $r$-connected or $r$-robust for any $r \ge 1$.
\end{IEEEproof}



\section{Robustness of Preferential Attachment Networks}
\label{sec:preatt}

Before discussing the preferential attachment model for complex networks, we start by reviewing the following construction method for $r$-robust graphs from \cite{ZhangSundaram2012, leblanc2013resilient}.

\begin{theorem}[\cite{ZhangSundaram2012, leblanc2013resilient}]
Let $\mathcal{G}=\{\mathcal{V},\mathcal{E}\}$ be an $r$-robust graph. Then graph
$$
\mathcal{G}'=\{\{\mathcal{V},v_{\text{new}}\},\{\mathcal{E},\mathcal{E}_{\text{new}}\}\},
$$
where $v_{\text{new}}$ is a new node added to $\mathcal{G}$ and $\mathcal{E}_{\text{new}}$ is the edge set related to $v_{\text{new}}$, is $r$-robust if $d_{v_{\text{new}}} \ge r$.
\label{thm:design robust}
\end{theorem}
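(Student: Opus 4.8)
The plan is to verify the defining condition of $r$-robustness (Definition~\ref{def:r_robust}) directly for $\mathcal{G}'$ through a short case analysis, exploiting the fact that forming $\mathcal{G}'$ only \emph{adds} a node and edges and therefore can never decrease the number of neighbors that an existing node has outside a given set. Writing $\mathcal{V}' = \mathcal{V} \cup \{v_{\text{new}}\}$, I would fix an arbitrary pair of nonempty, disjoint subsets $\mathcal{S}_1, \mathcal{S}_2 \subseteq \mathcal{V}'$ and show that at least one of them is $r$-reachable in $\mathcal{G}'$.

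First I would dispose of the case $v_{\text{new}} \notin \mathcal{S}_1 \cup \mathcal{S}_2$. Here both sets lie in $\mathcal{V}$, so the $r$-robustness of $\mathcal{G}$ supplies a node $i$ in one of them, say $\mathcal{S}_1$, with at least $r$ neighbors in $\mathcal{V} \setminus \mathcal{S}_1$. Since $\mathcal{S}_1 \subseteq \mathcal{V}$ and $\mathcal{G}'$ retains every edge of $\mathcal{G}$, those neighbors still lie outside $\mathcal{S}_1$ in $\mathcal{G}'$, so $\mathcal{S}_1$ remains $r$-reachable.

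Next I would treat the case where $v_{\text{new}}$ belongs to one of the sets; by relabeling, assume $v_{\text{new}} \in \mathcal{S}_1$. If $\mathcal{S}_1 = \{v_{\text{new}}\}$, then every one of the $d_{v_{\text{new}}} \ge r$ neighbors of $v_{\text{new}}$ lies outside $\mathcal{S}_1$, so $\mathcal{S}_1$ is $r$-reachable and we are done. Otherwise set $\mathcal{S}_1' = \mathcal{S}_1 \setminus \{v_{\text{new}}\}$, a nonempty subset of $\mathcal{V}$ disjoint from $\mathcal{S}_2 \subseteq \mathcal{V}$. Applying the $r$-robustness of $\mathcal{G}$ to the pair $(\mathcal{S}_1', \mathcal{S}_2)$ yields a node $i$ in one of them with at least $r$ neighbors outside that set in $\mathcal{G}$. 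If $i \in \mathcal{S}_2$, its witnessing neighbors (all in $\mathcal{V}$, none equal to $v_{\text{new}}$) remain outside $\mathcal{S}_2$ in $\mathcal{G}'$. If $i \in \mathcal{S}_1'$, its $\ge r$ neighbors in $\mathcal{V} \setminus \mathcal{S}_1'$ are each distinct from $v_{\text{new}}$ and not in $\mathcal{S}_1'$, hence outside $\mathcal{S}_1 = \mathcal{S}_1' \cup \{v_{\text{new}}\}$, so $\mathcal{S}_1$ is $r$-reachable in $\mathcal{G}'$. This exhausts all cases.

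I do not anticipate a genuine obstacle: the argument is essentially a monotonicity-plus-case-split verification. The only point requiring care is the bookkeeping when $v_{\text{new}}$ shares its set with other nodes; there one must peel off $v_{\text{new}}$ and confirm that the remainder $\mathcal{S}_1'$ is still nonempty (which is exactly why the singleton $\mathcal{S}_1 = \{v_{\text{new}}\}$ must be handled separately, as Definition~\ref{def:r_robust} speaks only of nonempty sets), and confirm that no witnessing neighbor inherited from $\mathcal{G}$ can coincide with $v_{\text{new}}$ (it cannot, since $v_{\text{new}} \notin \mathcal{V}$). The hypothesis $d_{v_{\text{new}}} \ge r$ is invoked precisely once, in the singleton case.
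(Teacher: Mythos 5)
Your proof is correct and complete; the case analysis (new node absent from both sets, new node alone in its set, new node peeled off a larger set) together with the monotonicity observation that added edges cannot destroy $r$-reachability is exactly the standard argument for this result. Note that the paper itself states this theorem as a citation to \cite{ZhangSundaram2012, leblanc2013resilient} and gives no proof, so there is nothing in the text to compare against — your argument matches the one in those references.
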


The above theorem indicates that to build an $r$-robust graph with $n$ nodes (where $n \ge r$), we can start with an $r$-robust graph of order less than $n$ (such as a complete graph), and continually add new nodes with incoming edges from at least $r$ nodes in the existing graph.
The theorem does not specify {\it which} existing nodes should be chosen as neighbors.  When the nodes are selected with a probability proportional to the number of edges that they already have, this is known as {\it preferential-attachment} and leads to the formation of so-called {\it scale-free} networks \cite{Albert02}.  Specifically, the construction process in Theorem~\ref{thm:design robust} coincides with the {\it Barab\'asi-Albert (BA) model} \cite{Albert02}: start with a network of $r_0$ nodes and add new nodes to the network one at a time, where each new node connects to $r$ existing nodes chosen by the preferential-attachment mechanism.

\begin{theorem}
In the BA model, when the initial network is $r$-robust, then the generated scale-free network is $r$-connected (and has minimum degree at least $r$) if and only if the network is $r$-robust. 
\end{theorem}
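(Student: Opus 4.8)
The plan is to prove the equivalence by treating its two directions separately, observing that one is immediate from Lemma~\ref{lem:robust_connectivity} while the other is an iterated application of Theorem~\ref{thm:design robust}. As a preliminary, I would note that since the initial network is $r$-robust, Lemma~\ref{lem:robust_connectivity} guarantees it has minimum degree at least $r$ and hence contains at least $r+1$ nodes; thus $r_0 \ge r+1$, which ensures that every subsequently added node can in fact attach to $r$ distinct existing nodes under the preferential-attachment rule.

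For the ``only if'' direction---that $r$-connectedness (together with minimum degree at least $r$) implies $r$-robustness---my approach is to show that the generated network is \emph{always} $r$-robust, so that this implication holds regardless of the hypothesis. I would argue by induction on the number of added nodes. The base case is the initial network, which is $r$-robust by assumption. For the inductive step, I observe that each new node $v_{\text{new}}$ attaches to exactly $r$ existing nodes, so $d_{v_{\text{new}}} = r \ge r$; Theorem~\ref{thm:design robust} then guarantees that the enlarged graph remains $r$-robust. Iterating from the initial network up to the full network of $n$ nodes establishes that the generated network is $r$-robust.

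The ``if'' direction---that $r$-robustness implies $r$-connectedness and minimum degree at least $r$---is exactly the content of Lemma~\ref{lem:robust_connectivity} applied to the generated network, and so requires no further work. Combining the two directions yields the stated equivalence.

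Because the result follows so directly from the available machinery, I do not expect a genuine obstacle; the only point requiring care is verifying that the BA construction really matches the hypotheses of Theorem~\ref{thm:design robust} at every step---in particular, that each new node attaches to at least $r$ \emph{distinct} existing nodes (so its degree is at least $r$) and that enough nodes are present initially for this to be possible. Both are secured by the structure of the model and the bound $r_0 \ge r+1$ noted above. The substantive content of the theorem is therefore the observation that $r$-robustness, $r$-connectedness, and minimum degree $r$ all coincide for networks built by this process, in sharp contrast to general graphs (such as in Figure~\ref{fig:Counterexample}) where robustness can fall far below connectivity.
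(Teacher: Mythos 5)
Your proof is correct and follows essentially the same route as the paper: the ``if'' direction is Lemma~\ref{lem:robust_connectivity}, and the ``only if'' direction comes from iterating Theorem~\ref{thm:design robust} over the added nodes. The only difference is cosmetic: the paper separately handles the degenerate case in which some new node connects to fewer than $r$ existing nodes (observing that then neither property holds, so the equivalence is vacuous), whereas you rule that case out up front via the bound $r_0 \ge r+1$; either disposal is fine.
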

\begin{IEEEproof}
Note that in the BA model, if there exists some new node which connects to less than $r$ existing nodes, then the network will have minimum degree less than $r$, and so will be neither $r$-connected nor $r$-robust; on the other hand, if all of the new nodes connect to $r$ existing nodes, then by Theorem~\ref{thm:design robust}, the network will be $r$-robust and thus $r$-connected (and with minimum degree $r$).
\end{IEEEproof}

To the extent that the BA model is a plausible mechanism for the formation of complex networks, our analysis indicates that these networks will also facilitate dynamics such as resilient consensus, provided that $r$ is sufficiently large when the network is forming.

\section{Complexity of Determining Degree of Robustness in General Graphs}
\label{sec:complexity}

While we focused on random graphs in the previous sections, we now consider the problem of determining the extent to which any {\it given} graph is robust.  Recall that by Lemma~\ref{lem:1_robust_connectivity}, 1-robustness is equivalent to being 1-connected.
Since there exist polynomial time algorithms to determine graph connectivity~\cite{CLRS01}, 1-robustness can be checked in polynomial time. However, in the rest of this section we will show that finding whether general graphs are $r$-robust for $r\ge 2$ is $\coNP$-complete. This is done by showing that a problem closely related to the robustness problem is $\NP$-complete in general graphs. Before discussing the reduction we need to introduce the following concepts \cite{CLRS01}, and define the robustness and $r$-robustness problems formally. 

\begin{definition}[Cut and Cut-set]
\label{def:cut}
For a graph $\mathcal{G}=\{\mathcal{V},\mathcal{E}\}$, a \textbf{cut} $\mathcal{C}=(\mathcal{S},\mathcal{V}\backslash \mathcal{S})$ is defined as a partition of nodes of $\mathcal{G}$ into two nonempty subsets $\mathcal{S}\subset \mathcal{V}$ and $\mathcal{V}\backslash \mathcal{S}$. The \textbf{cut-set} of a cut $\mathcal{C}=(\mathcal{S},\mathcal{V}\backslash \mathcal{S})$ is defined as the subset of the edges of $\mathcal{G}$ with one endpoint in $\mathcal{S}$ and the other in $\mathcal{V}\backslash \mathcal{S}$.
\end{definition}

\begin{definition}[Robustness and $r$-Robustness Problems]
\label{def:r-robustness}
Given a graph $\mathcal{G}$, the \textbf{robustness problem} determines the largest value of $r$ such that $\mathcal{G}$ is $r$-robust. The \textbf{$r$-robustness problem} is the decision version of the robustness problem that determines whether graph $\mathcal{G}$ is $r$-robust for a given $r\in \mathbb{Z}_{\ge1}$.
\end{definition}

If a graph is not $r$-robust, then there exist two nonempty and disjoint subsets of nodes $\mathcal{A},\mathcal{B}$ such that all nodes in these sets have at most $r-1$ neighbors outside their containing sets. Note that nodes in set $\mathcal{X}=\mathcal{V}\backslash (\mathcal{A}\cup \mathcal{B})$ can have any number of neighbors outside $\mathcal{X}$.
There is no apparent way to certify that a graph is $r$-robust without checking all pairs of disjoint and nonempty subsets of nodes and showing that at least one set out of each pair is $r$-reachable.
This is intractable as the number of such subsets is exponential in the size of the input graph.
On the other hand, to certify that a graph is {\it not} $r$-robust, one only needs to provide a single pair of disjoint and nonempty subsets of nodes, of which neither set is $r$-reachable.
Therefore, in the $r$-robustness problem, the `No' instances (input graphs that are not $r$-robust) have certifications that can be checked in polynomial time, and so the $r$-robustness problem is in the complexity class $\coNP$~\cite{Arora'09}.

The complexity of the robustness problem is equivalent to the complexity of the $r$-robustness problem within a factor of $O(\log n)$ (as a binary search can be used for finding the largest value of $r$ for which the graph is $r$-robust). To prove the $\coNP$-completeness of the $r$-robustness problem, we instead show that the complement of the $r$-robustness problem, which we call the ``$\rho$-degree cut problem", is $\NP$-hard. Note that the complement of a decision problem is obtained by reversing the `Yes' and `No' answers of all input instances. In other words, if problems $\mathcal{P}_1$ and $\mathcal{P}_2$ are complements, then the output of $\mathcal{P}_1$ to an input instance is `Yes' if and only if the output of $\mathcal{P}_2$ to that instance is `No'. Therefore, the complement of a problem in $\NP$ is in $\coNP$, and vice versa. Hence, the $\rho$-degree cut problem, formally defined below, is in $\NP$.

\begin{definition}[$\rho$-Degree Cut]
\label{def:r-cut-deg}
Given a graph $\mathcal{G}=\{\mathcal{\mathcal{V}},\mathcal{E}\}$, a \textbf{$\rho$-degree cut} is a pair of nonempty and disjoint subsets of nodes $\mathcal{A},\mathcal{B}\subset \mathcal{V}$ such that each node in $\mathcal{A}$ (resp. $\mathcal{B}$) has at most $\rho$ neighbors outside $\mathcal{A}$ (resp. $\mathcal{B}$), where $\rho \in \mathbb{Z}_{\geq{0}}$.  The \textbf{$\rho$-degree cut problem} determines whether the graph has a $\rho$-degree cut.
\end{definition}

It can be shown that if a problem is $\NP$-hard, then its complement is $\coNP$-hard (see~\cite{Arora'09} for a detailed discussion on complexity classes $\NP$ and $\coNP$). Hence, instead of directly showing that the $r$-robustness problem is $\coNP$-complete, we show that its complement, i.e. the $\rho$-degree cut problem, is $\NP$-complete. Moreover, recall that the $r$-robustness problem is in $\coNP$; therefore, knowing that the $\rho$-degree cut problem is $\NP$-complete, it can be concluded that the $r$-robustness problem is $\coNP$-complete. To prove the $\NP$-completeness of the $\rho$-degree cut problem we first study the hardness of a relaxed version of this problem called the {\it relaxed-$\rho$-degree cut problem}, defined as follows.

\begin{definition}[Relaxed-$\rho$-Degree Cut]
\label{def:R-r-cut-deg}
A \textbf{relaxed-$\rho$-degree cut} in a given graph $\mathcal{G}$ is a cut $\mathcal{C}=(\mathcal{A},\mathcal{V}\backslash\mathcal{A})$ such that each node of the graph is incident to at most $\rho$ edges in the cut-set, where $\rho \in \mathbb{Z}_{\geq{0}}$.  The \textbf{relaxed-$\rho$-degree cut problem} determines whether there exists a relaxed-$\rho$-degree cut in the input graph.
\end{definition}

Note that the difference between a $\rho$-degree cut and a relaxed-$\rho$-degree cut is that in the latter, we are interested in finding two nonempty sets that are each at most $\rho$-reachable and that partition the nodes of the graph, whereas in the former the two sets simply have to be nonempty, at most $\rho$-reachable, and disjoint.  Both the $\rho$-degree cut problem and the relaxed-$\rho$-degree cut problem are in complexity class $\NP$, as they possess certifications for `Yes' instances that can be checked in polynomial time. 
The relaxed-1-degree cut problem is equivalent to a known problem called the ``matching-cut problem" in which the goal is to find whether there exists a cut in the graph that is also a matching, i.e., no two edges in the cut-set share an end-point~\cite{PatrignaniPizzonia'01}. 
In~\cite{PatrignaniPizzonia'01} it was shown that the matching-cut problem is $\NP$-complete via a reduction from NAE3SAT~\cite{Schaefer'78}. By the equivalence of the relaxed-1-degree cut problem and the matching-cut problem, this shows the $\NP$-completeness of the relaxed-1-degree cut problem as well. However, in order to prove the $\NP$-completeness of the 1-degree cut problem (and subsequently, the $\rho$-degree cut problem), we will need some alterations to the proof in~\cite{PatrignaniPizzonia'01}. We will thus start by modifying the reduction in~\cite{PatrignaniPizzonia'01} from NAE3SAT to the matching-cut problem (or relaxed-1-degree cut problem) and then extend that to prove the $\NP$-completeness of the 1-degree cut problem. We start by defining NAE3SAT formally~\cite{Schaefer'78}.

\begin{definition}[NAE3SAT]
\label{def:NAE3SAT}
For a set of clauses each containing three literals from a set of boolean variables in \emph{Conjunctive Normal Form} (CNF), \textbf{NAE3SAT} determines whether there exists a truth assignment of the variables so that each clause contains at least one `True' and one `False' literal.
\end{definition}

It was shown by Schaefer in~\cite{Schaefer'78} that NAE3SAT is $\NP$-complete. Here, we discuss a reduction from NAE3SAT to the relaxed-1-degree cut problem by constructing a graph $\mathcal{G}(\phi)$ for any given CNF formula $\phi$ such that $\mathcal{G}(\phi)$ has a relaxed-1-degree cut (i.e., $\mathcal{G}(\phi)$ has a cut where each node of the graph is incident to at most one edge in the cut-set) if and only if $\phi$ can be satisfied within the NAE3SAT constraints. 
Let formula $\phi$ consist of $m$ clauses $C_1,\ldots, C_m$, where each clause contains three literals from the set of variables $X=\{x_1,\ldots,x_t\}$. 
The construction of graph $\mathcal{G}(\phi)$ from a given CNF formula $\phi$ is as follows.

\begin{figure}[t]
\centering
\includegraphics[scale=0.4]{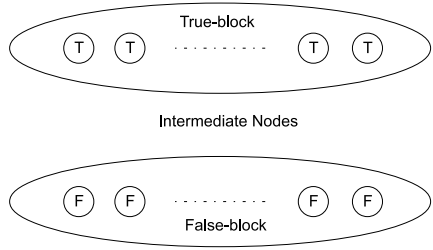}
\caption{The True-block and False-block in the construction of $\mathcal{G}(\phi)$. Each block is a complete subgraph with $4m+t$ nodes.}
\label{fig:TF-blocks}
\end{figure}

\begin{figure}[t]
\centering
\includegraphics[scale=0.45]{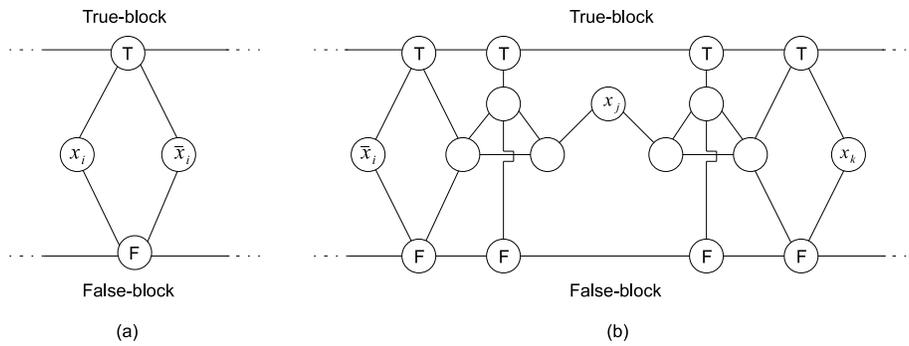}
\caption{Figure (a) demonstrates the variable-gadget for variable $x_i$, and (b) shows the clause-gadget for clause $\bar{x}_i \vee x_j \vee x_k$.}
\label{fig:var-clause-gadget}
\end{figure}

We first build two \emph{blocks}, where each block is a complete graph of $4m+t$ nodes. The upper and lower blocks are labeled the \emph{True-block} and \emph{False-block}, respectively, as illustrated in Figure~\ref{fig:TF-blocks}. We complete the construction of $\mathcal{G}(\phi)$ by adding subgraphs representing the variables and clauses of $\phi$ to these blocks in a carefully chosen way.  

\begin{figure}[t]
\centering
\includegraphics[scale=0.45]{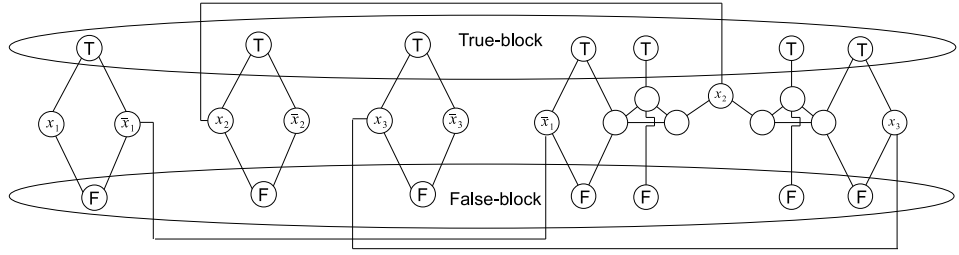}
\caption{The construction of clause $\bar{x}_1\vee x_2\vee x_3$ with each literal-node connected to its corresponding variable-node in the variable-gadgets.  The edges within the True and False-blocks have been omitted for clarity.}
\label{fig:var-clause}
\end{figure}

The subgraphs to be added to the blocks are of two types: (i) variable-gadgets, and (ii) clause-gadgets. A variable-gadget is incorporated for each variable $x_i\in X$. This gadget contains two nodes representing $x_i$ and $\bar{x}_i$ (the binary complement of $x_i$), each connected to the True and False-blocks as illustrated in Figure~\ref{fig:var-clause-gadget}-(a). Moreover, for each clause in $\phi$, a clause-gadget is constructed by connecting three nodes (each representing a literal of the clause) in addition to some extra nodes to the True and False-blocks as depicted in Figure~\ref{fig:var-clause-gadget}-(b). Finally, there are edges, called the {\it intermediate edges}, connecting each literal-node in each clause-gadget to its corresponding variable-node in the variable-gadgets. An example of $\mathcal{G}(\phi)$ for $\bar{x}_1\vee x_2\vee x_3$ is demonstrated in Figure~\ref{fig:var-clause}.

If graph $\mathcal{G}(\phi)$ has a relaxed-1-degree cut, then there exists a cut $\mathcal{C}=(\mathcal{A},\mathcal{V}\backslash\mathcal{A})$ that partitions the nodes of $\mathcal{G}(\phi)$ into two sets $\mathcal{A}$ and $\mathcal{B}=\mathcal{V}\backslash \mathcal{A}$ such that no node of the graph has more than one neighbor outside its set. In the following lemmas, we show that this cut $\mathcal{C}$ satisfies some useful properties (all of these lemmas assume that graph $\mathcal{G}(\phi)$ has a relaxed-1-degree cut and pertain to the cut $\mathcal{C}=(\mathcal{A},\mathcal{B})$ just described). Proofs of all subsequent lemmas and theorems in this section are given in the Appendix.

\begin{lemma}
\label{prop:cuts-TFblocks}
Let $\mathcal{T}$ (resp. $\mathcal{F}$) be the set of all nodes in the True-block (resp. False-block) of graph $\mathcal{G}(\phi)$. Then, $\mathcal{T}\subseteq \mathcal{A}$ or $\mathcal{T}\subseteq \mathcal{B}$ (resp. $\mathcal{F}\subseteq \mathcal{A}$ or $\mathcal{F}\subseteq \mathcal{B}$).
\end{lemma}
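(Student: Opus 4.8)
The plan is to argue by contradiction, exploiting the fact that each block is a complete subgraph on $4m+t$ nodes together with the defining matching-cut property of a relaxed-$1$-degree cut, namely that no node has more than one neighbor across the cut. I would first suppose, toward a contradiction, that the True-block $\mathcal{T}$ is \emph{not} contained entirely in one side, so that both $\mathcal{T} \cap \mathcal{A}$ and $\mathcal{T} \cap \mathcal{B}$ are nonempty.

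Next I would invoke completeness: since every pair of nodes in $\mathcal{T}$ is adjacent, any node $u \in \mathcal{T} \cap \mathcal{A}$ is adjacent to \emph{every} node of $\mathcal{T} \cap \mathcal{B}$, and each of these edges crosses the cut. The relaxed-$1$-degree condition forces $u$ to have at most one neighbor outside $\mathcal{A}$, and hence $|\mathcal{T} \cap \mathcal{B}| \le 1$. By the symmetric argument applied to any node of $\mathcal{T} \cap \mathcal{B}$, we also obtain $|\mathcal{T} \cap \mathcal{A}| \le 1$. Therefore $|\mathcal{T}| = |\mathcal{T} \cap \mathcal{A}| + |\mathcal{T} \cap \mathcal{B}| \le 2$.

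This yields the contradiction: since $\phi$ has $m \ge 1$ clauses, $|\mathcal{T}| = 4m + t \ge 4 > 2$. Hence $\mathcal{T}$ must lie entirely within $\mathcal{A}$ or entirely within $\mathcal{B}$. The identical argument applied to the False-block $\mathcal{F}$, which is likewise a complete subgraph on $4m+t$ nodes, gives $\mathcal{F} \subseteq \mathcal{A}$ or $\mathcal{F} \subseteq \mathcal{B}$, completing the proof.

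The argument is short and presents no substantive obstacle; the only point to verify is that the block size $4m+t$ genuinely exceeds $2$, which is immediate. The conceptual fact worth emphasizing is that a matching cut can never split a large clique: if a clique on $s$ nodes is partitioned into two nonempty parts of sizes $a$ and $s-a$, then each node in the first part has $s-a$ neighbors across the cut and each node in the second has $a$, so respecting the at-most-one-crossing-edge constraint forces both $a \le 1$ and $s-a \le 1$, i.e.\ $s \le 2$. This rigidity is precisely what makes the complete blocks useful as indivisible components in the reduction, and it is the structural property on which the subsequent lemmas will build.
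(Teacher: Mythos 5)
Your proof is correct and follows essentially the same approach as the paper's, which likewise observes that a relaxed-$1$-degree cut cannot split a complete block because some node would then have at least two neighbors across the cut. Your version merely spells out the counting (both parts would have size at most $1$, contradicting $4m+t>2$) in more detail than the paper's one-line argument.
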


By the above lemma, cut $\mathcal{C}$ cannot go through the True and False-blocks of $\mathcal{G}(\phi)$. We assign `True' values to the nodes in the variable and clause-gadgets that are connected to the True-block by $\mathcal{C}$ and `False' values to the nodes that are connected to the False-block by $\mathcal{C}$.

\begin{lemma}
\label{prop:cuts-var-clause-gadget}
Cut $\mathcal{C}=(\mathcal{A},\mathcal{B})$ has the following two properties:
\begin{enumerate}
  \item For each variable-gadget, cut $\mathcal{C}$ leaves the variable-node and its negation node in opposite sets, i.e., they have opposite truth assignments.
  \item For each clause-gadget, cut $\mathcal{C}$ leaves at least one literal-node in set $\mathcal{A}$ and one literal-node in $\mathcal{B}$, i.e., at least one literal-node is assigned `True' and one is assigned `False.'
\end{enumerate}
\end{lemma}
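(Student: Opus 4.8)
The plan is to reduce both claims to one structural fact about relaxed-$1$-degree cuts: since $\mathcal{C}=(\mathcal{A},\mathcal{B})$ places at most one cut-edge at every node, any node adjacent to two nodes lying on the same side of $\mathcal{C}$ must itself lie on that side; equivalently, a node in $\mathcal{A}$ has at most one neighbor in $\mathcal{B}$, and symmetrically. Before exploiting this, I would first pin down the blocks. By Lemma~\ref{prop:cuts-TFblocks}, each of $\mathcal{T}$ and $\mathcal{F}$ lies entirely on one side of $\mathcal{C}$, and I claim they lie on \emph{opposite} sides. Every variable-node, literal-node, and extra node of $\mathcal{G}(\phi)$ is adjacent to both blocks by construction, so if $\mathcal{T}$ and $\mathcal{F}$ were on the same side (say $\mathcal{A}$), then any such node placed in $\mathcal{B}$ would have at least two incident cut-edges (one to each block), which is impossible; hence every non-block node would be forced into $\mathcal{A}$, leaving $\mathcal{B}=\emptyset$ and contradicting that $\mathcal{C}$ is a genuine cut. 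Thus I may assume without loss of generality that $\mathcal{T}\subseteq\mathcal{A}$ and $\mathcal{F}\subseteq\mathcal{B}$.

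For the first property I would localize the argument to a single variable-gadget. The gadget is wired so that $x_i$ and $\bar{x}_i$ share a common neighbor $a$ in the True-block and a common neighbor $b$ in the False-block, and (thanks to the dedicated block nodes guaranteed by the block size) these block nodes have no further neighbors outside their own blocks. Since $a\in\mathcal{T}\subseteq\mathcal{A}$ and all of $a$'s block-edges stay inside $\mathcal{A}$, the matching-cut property forces at most one of $x_i,\bar{x}_i$ into $\mathcal{B}$; symmetrically, $b\in\mathcal{F}\subseteq\mathcal{B}$ forces at most one of $x_i,\bar{x}_i$ into $\mathcal{A}$. With only two nodes to place, these two constraints put exactly one of $x_i,\bar{x}_i$ on each side, so they receive opposite truth assignments. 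Note this argument leaves both orientations of $x_i$ available, as the reduction requires.

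The second property is where the real work lies. Assuming for contradiction that all three literal-nodes of some clause-gadget fall on the same side (say $\mathcal{A}$), I would trace $\mathcal{C}$ through the extra nodes and the clause-gadget's dedicated block connections (the ``$4m$'' block nodes reserved for clauses) to exhibit a single node carrying two cut-edges, contradicting the matching-cut property and hence forcing at least one literal-node into $\mathcal{A}$ and at least one into $\mathcal{B}$. The delicate point, and the main obstacle, is that the gadget must simultaneously (i) turn every monochromatic placement of the three literal-nodes into such a doubly-cut node, while (ii) admitting every not-all-equal placement without creating one; making both halves hold forces a careful count of how many literal-nodes each enforcing node may be adjacent to, which is exactly why the plain ``one node joined to all three literals'' construction fails and extra nodes are needed. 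The block-size bookkeeping $|\mathcal{T}|=|\mathcal{F}|=4m+t$ is what guarantees that each enforcing block node's neighborhood outside its block is exactly the intended small set, so the two-cut-edge count is exact rather than merely an upper bound. Carrying out this case analysis over all monochromatic configurations completes the proof of the NAE property and the lemma.
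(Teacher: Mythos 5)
Your overall strategy coincides with the paper's: first rule out the configuration in which both blocks land on the same side of $\mathcal{C}$, then exploit the matching-cut property (at most one cut-edge per node) locally inside each gadget. Your proof of the first property is essentially the paper's argument and is complete. But there are two gaps. The smaller one: your claim that \emph{every} non-block node of $\mathcal{G}(\phi)$ is adjacent to both blocks is false for the clause-gadget, whose three innermost extra nodes (labeled $4$, $5$, $6$ in Figure~\ref{fig:5clause-gadget-v2}) are not attached to the blocks at all. So placing them in $\mathcal{B}$ is not excluded by a direct ``two cut-edges into the blocks'' count, and your same-side-blocks contradiction does not immediately yield $\mathcal{B}=\emptyset$. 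The paper closes this with a cascade: nodes $1,2,3,7,8,9$ are forced into $\mathcal{A}$ by their block attachments, this forces $4$ and $6$ (each now adjacent to two nodes already in $\mathcal{A}$), and that forces $5$; only then is $\mathcal{B}$ empty.

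The more serious gap is the second property, which in your write-up is a plan rather than a proof. You assume all three literal-nodes lie in $\mathcal{A}$ and assert that tracing $\mathcal{C}$ through the extra nodes ``exhibits a single node carrying two cut-edges,'' while explicitly conceding that you cannot perform the case analysis without knowing the gadget's wiring. That case analysis is the content of the lemma; the reduction only works because the clause-gadget was engineered so that exactly this implication chain goes through. The paper's chain is short and concrete: the middle literal-node ($5$) must keep at least one of its extra-node neighbors ($4$ or $6$) on its own side; that extra node in turn forces a further extra node ($2$ or $8$) onto the same side; and that node, together with an outer literal-node ($1$ or $9$), gives a False-block node two neighbors in $\mathcal{A}$ --- contradicting the matching-cut property since the False-block lies in $\mathcal{B}$. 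It then verifies separately (Figure~\ref{fig:6clause-gadget}) that every not-all-equal placement admits a valid cut. Without exhibiting such a chain from the actual adjacencies of the nine gadget nodes, the NAE property remains unestablished.
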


\begin{lemma}
\label{prop:cuts-inter-edges}
All literal-nodes have the same truth values as their corresponding variable-nodes.
\end{lemma}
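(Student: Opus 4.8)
The plan is to show that no intermediate edge lies in the cut-set. Since an intermediate edge joins a literal-node to its corresponding variable-node, establishing this immediately forces the two endpoints onto the same side of $\mathcal{C}$, and hence (by the truth-assignment convention introduced just after Lemma~\ref{prop:cuts-TFblocks}) gives them the same truth value. The whole argument rides on the relaxed-$1$-degree property, namely that every node is incident to at most one edge of the cut-set.

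First I would establish that the True-block and the False-block lie on \emph{opposite} sides of the cut. By Lemma~\ref{prop:cuts-TFblocks} each block lies entirely in $\mathcal{A}$ or entirely in $\mathcal{B}$. Suppose, for contradiction, both blocks lay on the same side, say $\mathcal{A}$. Each node $x_i$ (and each $\bar{x}_i$) of a variable-gadget is joined to both blocks by construction, so if such a node were placed in $\mathcal{B}$ it would be incident to at least two cut edges, one to each block, violating the relaxed-$1$-degree property. Hence both $x_i$ and $\bar{x}_i$ would be forced into $\mathcal{A}$, contradicting part~(1) of Lemma~\ref{prop:cuts-var-clause-gadget}, which places them on opposite sides. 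Therefore the two blocks lie on opposite sides of $\mathcal{C}$.

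Next I would exploit the degree budget at the variable-nodes. Let $(\ell, v)$ be an intermediate edge, where $\ell$ is a literal-node in a clause-gadget and $v$ is its corresponding variable-node. Because $v$ is adjacent to both blocks and the blocks sit on opposite sides, the block lying opposite to $v$ contributes at least one cut edge incident to $v$; thus $v$ has already spent its single permitted cut edge. If the intermediate edge $(\ell, v)$ also belonged to the cut-set, then $v$ would be incident to at least two cut edges, contradicting the relaxed-$1$-degree property. Hence $(\ell, v)$ is not a cut edge, so $\ell$ and $v$ lie on the same side of $\mathcal{C}$.

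Finally, the truth-assignment convention labels each gadget node according to the block to which it is joined by a cut edge; since the two blocks are on opposite sides, a node of $\mathcal{A}$ and a node of $\mathcal{B}$ always receive opposite labels, so any two nodes on the same side receive the same label. Applying this to $\ell$ and $v$ yields the claim. I expect the main obstacle to be the first step, pinning down that the cut separates the two blocks, because every subsequent deduction depends on each variable-node having already committed its one allowed cut edge to a block connection; once that is in hand, the remaining steps are short consequences of the relaxed-$1$-degree constraint together with Lemma~\ref{prop:cuts-var-clause-gadget}.
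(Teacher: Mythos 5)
Your proof is correct and follows essentially the same route as the paper's: both pin down that the True- and False-blocks lie on opposite sides of $\mathcal{C}$, and then observe that each variable-node has already used its single allowed cut-edge on its connection to the opposite block, so the intermediate edge to its literal-node cannot also be excised. The only cosmetic difference is that you re-derive the opposite-sides fact from part (1) of Lemma~\ref{prop:cuts-var-clause-gadget}, whereas the paper simply cites the case analysis already carried out in that lemma's proof.
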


Using the properties stated in the above lemmas, we obtain the following result.
\begin{lemma}
\label{lem:r-r-r-hard}
The relaxed-1-degree cut problem is $\NP$-complete.
\end{lemma}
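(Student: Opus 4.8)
The plan is to establish the two standard components of $\NP$-completeness: membership in $\NP$ and $\NP$-hardness. Membership is immediate and was already observed above, since a relaxed-$1$-degree cut $\mathcal{C}=(\mathcal{A},\mathcal{V}\setminus\mathcal{A})$ is a polynomial-size certificate for a `Yes' instance, and checking that each node is incident to at most one cut-set edge takes polynomial time. The substance of the proof is therefore the $\NP$-hardness reduction from NAE3SAT, which amounts to proving that $\mathcal{G}(\phi)$ admits a relaxed-$1$-degree cut if and only if $\phi$ is satisfiable under the NAE3SAT constraints. First I would note that the construction of $\mathcal{G}(\phi)$ from $\phi$ is computable in polynomial time, as each block has $4m+t$ nodes and the number of variable- and clause-gadgets is linear in $t$ and $m$.

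For the forward direction (a relaxed-$1$-degree cut yields a satisfying assignment), I would simply assemble the three structural lemmas already established. Lemma~\ref{prop:cuts-TFblocks} guarantees that neither the True-block nor the False-block is split by $\mathcal{C}$, so that the truth value assigned to each gadget node (according to which block it is connected to across the cut) is well-defined. Lemma~\ref{prop:cuts-var-clause-gadget}(1) ensures $x_i$ and $\bar{x}_i$ receive opposite values so the induced variable assignment is consistent, Lemma~\ref{prop:cuts-inter-edges} ensures every literal-node agrees with its variable-node so the assignment propagates coherently through the intermediate edges, and Lemma~\ref{prop:cuts-var-clause-gadget}(2) places a literal of each clause-gadget on each side of the cut, which is precisely the NAE3SAT requirement that every clause contain both a `True' and a `False' literal.

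The reverse direction (a satisfying assignment yields a relaxed-$1$-degree cut) is where the explicit gadget bookkeeping resides, and I expect it to be the main obstacle, since the lemmas above do not apply in this direction. Given an NAE3SAT-satisfying assignment, I would place the True-block together with every node assigned `True' into $\mathcal{A}$, and the False-block together with every `False' node into $\mathcal{B}$, and then verify the relaxed-$1$-degree property by a case analysis over the node classes. The delicate point is that the gadgets of Figures~\ref{fig:TF-blocks}--\ref{fig:var-clause} must have been engineered so that, under any consistent assignment meeting the NAE3SAT constraints, each block node, variable-node, negation-node, literal-node, and extra clause-node is incident to at most one cut-set edge. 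I would check, for example, that a variable-node joined to both blocks contributes exactly the single edge to its opposite block, that each intermediate edge stays within one side (since a literal and its variable share a truth value), and that the extra nodes of each clause-gadget can be assigned so as to absorb the remaining cut edges without producing any node of cut-degree two. Completing this enumeration, together with checking that $\mathcal{A}$ and $\mathcal{B}$ are both nonempty, establishes the equivalence and hence $\NP$-hardness; combined with membership in $\NP$, the relaxed-$1$-degree cut problem is $\NP$-complete.
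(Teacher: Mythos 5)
Your proposal follows the paper's proof essentially verbatim: the forward direction assembles Lemmas~\ref{prop:cuts-TFblocks}, \ref{prop:cuts-var-clause-gadget} and \ref{prop:cuts-inter-edges} exactly as the paper does, and the reverse direction uses the same truth-value-to-side assignment, with the clause-gadget bookkeeping you defer to ``an enumeration'' being precisely what the paper records in Table~\ref{tab:6clause-gadget} and Figure~\ref{fig:6clause-gadget}. The approach is correct and matches the paper's.
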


Recall that the relaxed-1-degree cut problem is equivalent to the matching-cut problem that was shown to be $\NP$-complete in~\cite{PatrignaniPizzonia'01}. The difference between the proof we provided here and the proof in~\cite{PatrignaniPizzonia'01} is in the construction of the clause-gadgets; our construction will allow us to show the $\NP$-completeness of the more general 1-degree cut problem as follows.

We first construct a graph $\mathcal{H}(\phi)$ by taking three copies of $\mathcal{G}(\phi)$ and adding edges to form one complete subgraph on all nodes in the three True-blocks and  another complete subgraph on all nodes in the three False-blocks. We refer to each of these copies of $\mathcal{G}(\phi)$ used in building $\mathcal{H}(\phi)$ as a {\it box}.
Figure~\ref{fig:H-var-clause} illustrates $\mathcal{H}(\phi)$ using the graph $\mathcal{G}(\phi)$ shown in Figure~\ref{fig:var-clause} for $\phi=\bar{x}_1\vee x_2\vee x_3$. Using this construction we can now prove the following result.

\begin{theorem}
\label{thm:robustness-hard}
The 1-degree cut problem is $\NP$-complete.
\end{theorem}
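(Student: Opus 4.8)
The plan is to prove $\NP$-completeness by exhibiting a polynomial-time reduction from NAE3SAT; membership in $\NP$ was already noted (a pair of sets $\mathcal{A},\mathcal{B}$ serves as a polynomial-time checkable certificate). Using the graph $\mathcal{H}(\phi)$ built from three boxes (copies of $\mathcal{G}(\phi)$) with the three True-blocks merged into one clique and the three False-blocks merged into another, I would show that $\mathcal{H}(\phi)$ admits a $1$-degree cut if and only if $\phi$ is satisfiable under the NAE3SAT constraints.

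The forward direction is the easy one. Given a satisfying NAE3SAT assignment, in every box I place the True-block together with all literal/variable nodes assigned `True' into $\mathcal{A}$, and the False-block together with all `False' nodes into $\mathcal{B}$, exactly as in the relaxed-$1$-degree cut of a single $\mathcal{G}(\phi)$. Because all three True-blocks land in $\mathcal{A}$ and all three False-blocks land in $\mathcal{B}$, the newly added inter-box clique edges have both endpoints in the same set and contribute nothing to any node's out-of-set degree; hence the per-box bound of one out-of-set neighbor (guaranteed by the relaxed-cut argument behind Lemmas~\ref{prop:cuts-TFblocks}--\ref{prop:cuts-inter-edges}) is preserved globally, and $(\mathcal{A},\mathcal{B})$ is a valid $1$-degree cut.

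The substance is the converse. Starting from an arbitrary $1$-degree cut $(\mathcal{A},\mathcal{B})$ of $\mathcal{H}(\phi)$ -- where, crucially, a third part $\mathcal{X}=\mathcal{V}\setminus(\mathcal{A}\cup\mathcal{B})$ is now allowed -- I would first prove a \emph{clique-monochromaticity} lemma: in any clique, a node of $\mathcal{A}$ may have at most one clique-neighbor outside $\mathcal{A}$, so once the clique contains a single $\mathcal{A}$-node essentially all of its $3(4m+t)$ nodes must lie in $\mathcal{A}$ (at most one exception, and that exception must sit in $\mathcal{X}$); symmetrically for $\mathcal{B}$. Hence each merged clique is almost entirely contained in one of $\mathcal{A}$, $\mathcal{B}$, or is wholly in $\mathcal{X}$. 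Using that $\mathcal{A}$ and $\mathcal{B}$ are both nonempty, together with the tethering of every gadget node to both blocks (the feature that distinguishes my clause-gadgets from those of the matching-cut reduction), I would rule out the degenerate colorings (both cliques the same color, or either clique wholly in $\mathcal{X}$), leaving the True-clique dominated by one of $\mathcal{A},\mathcal{B}$ and the False-clique dominated by the other.

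The final step is a pigeonhole over the three boxes, which is precisely why three copies are used. Since the True-clique admits at most one out-of-class node, at least two of the three boxes have their entire True-block inside (say) $\mathcal{A}$; likewise at least two boxes have their entire False-block inside $\mathcal{B}$; as $2+2>3$, some single box $B^{\ast}$ has its whole True-block in $\mathcal{A}$ and its whole False-block in $\mathcal{B}$. Restricting the cut to $B^{\ast}$ then yields, after invoking Lemmas~\ref{prop:cuts-TFblocks}--\ref{prop:cuts-inter-edges} to force the remaining gadget nodes of $B^{\ast}$ onto the appropriate side, an honest relaxed-$1$-degree cut of that box with the two blocks on opposite sides, and the relaxed-case analysis reads off a valid NAE3SAT assignment for $\phi$. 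I expect the main obstacle to be this converse direction, and within it the two delicate points are (i) excluding the degenerate colorings where $\mathcal{A}$ or $\mathcal{B}$ hides entirely among the gadgets or in $\mathcal{X}$, and (ii) upgrading the clean box $B^{\ast}$ to a genuine two-set partition so that the earlier relaxed-cut lemmas apply verbatim; the three-box pigeonhole is the new device that makes the unrelaxed problem as hard as the relaxed one.
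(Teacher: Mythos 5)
Your proposal is correct and follows essentially the same route as the paper's own proof: the three-box construction with merged True/False cliques, the clique-monochromaticity observation (at most one exceptional node, necessarily in $\mathcal{X}$), the exclusion of degenerate colorings, the pigeonhole argument yielding a box whose True-block lies in $\mathcal{A}$ and False-block in $\mathcal{B}$, and the verification that no node of that box can lie in $\mathcal{X}$ so that the relaxed-cut lemmas apply. The two ``delicate points'' you flag are exactly the steps the paper elaborates in detail.
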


\begin{figure}[ht]
\centering
\includegraphics[scale=0.50]{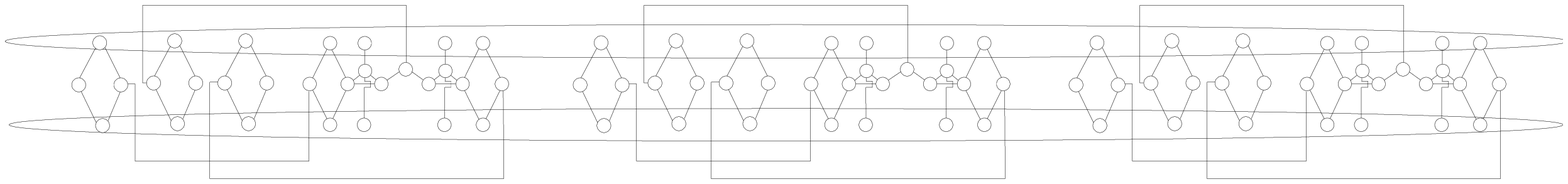}
\caption{The construction of $\mathcal{H}(\phi)$ from the graph $\mathcal{G}(\phi)$ depicted in Figure~\ref{fig:var-clause}.}
\label{fig:H-var-clause}
\end{figure}

The $\NP$-hardness of the 1-degree cut problem results in the $\coNP$-hardness of its complement problem, i.e., the 2-robustness problem. Since the 2-robustness problem is in $\coNP$, we can conclude that it is $\coNP$-complete. We now seek a stronger result, that is, showing that for any $r\in \mathbb{Z}_{\ge2}$, the $r$-robustness problem is $\coNP$-complete. This requires showing the $\NP$-completeness of the $\rho$-degree cut problem for any $\rho\in \mathbb{Z}_{\ge1}$, and to prove this, we first show that for any $\rho\in \mathbb{Z}_{\ge1}$, the relaxed-$\rho$-degree cut problem is $\NP$-complete. This is done in the following lemma by making certain modifications to the graph $\mathcal{G}(\phi)$ built for the NAE3SAT instance $\phi$ (e.g., Figure~\ref{fig:var-clause}).

\begin{lemma}
\label{lem:r-r-deg-cut}
For any $\rho\in \mathbb{Z}_{\ge1}$, the relaxed-$\rho$-degree cut problem is $\NP$-complete.
\end{lemma}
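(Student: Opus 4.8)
The plan is to reduce NAE3SAT to the relaxed-$\rho$-degree cut problem for each fixed $\rho \ge 1$, reusing the gadget graph $\mathcal{G}(\phi)$ and the structural analysis developed for the relaxed-$1$ case (Lemma~\ref{lem:r-r-r-hard}). Membership in $\NP$ is immediate, since a candidate cut serves as a certificate whose validity (both sides nonempty, and every node incident to at most $\rho$ cut-edges) is checkable in polynomial time; so the whole effort lies in the hardness direction. The idea is to build from $\mathcal{G}(\phi)$ a modified graph $\mathcal{G}_\rho(\phi)$ in which the crossing ``budget'' of every gadget node is raised from $1$ to exactly $\rho$, so that the entire argument of the relaxed-$1$ case goes through verbatim with $1$ replaced by $\rho$.

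Concretely, I would modify $\mathcal{G}(\phi)$ as follows. First, enlarge each of the True- and False-blocks to a complete graph on $\Theta(\rho m)$ nodes, chosen large enough that (i) no relaxed-$\rho$-degree cut can split such a clique --- any bipartition of a clique on $s$ nodes forces every node on the smaller side to have at least $\lceil s/2\rceil$ cut-edges, which exceeds $\rho$ once $s \ge 2\rho+1$ --- and (ii) each block contains at least $\rho$ ``private'' nodes reserved for each gadget node. Second, replace every edge that joins a gadget node (a variable-, literal-, or auxiliary clause-node) to a block by $\rho$ edges running to $\rho$ distinct private nodes of that block. The internal gadget edges and the intermediate edges are left as single edges. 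The construction is clearly polynomial in $|\phi|$ for fixed $\rho$.

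The correctness argument then proceeds by re-establishing the three structural lemmas (the analogues of Lemmas~\ref{prop:cuts-TFblocks}, \ref{prop:cuts-var-clause-gadget} and \ref{prop:cuts-inter-edges}) with threshold $\rho$. The key new invariant is that in any relaxed-$\rho$-degree cut of $\mathcal{G}_\rho(\phi)$, each block lies wholly on one side (by (i)), and every gadget node connected to both blocks spends its full budget of $\rho$ cut-edges on the $\rho$ connections to whichever block sits on the opposite side; consequently none of its remaining (internal or intermediate) edges may cross. This is exactly the budget-saturation that drove the relaxed-$1$ proof, so the same reasoning forces each variable-node and its negation to opposite sides, forces each literal-node to agree with its variable-node, and forces each clause to be split (the not-all-equal condition). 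Reading truth values off the side of each node containing the True-block then gives a satisfying NAE assignment, and conversely any NAE assignment yields a cut in which one checks directly that every node --- including block nodes, which by the private-node reservation meet at most one crossing gadget edge --- is incident to at most $\rho$ cut-edges.

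The step I expect to be the main obstacle is re-verifying the clause-gadget, i.e.\ the analogue of Lemma~\ref{prop:cuts-var-clause-gadget}(2): I must confirm that after thickening the block connections the auxiliary clause-nodes still enforce the not-all-equal condition and, equally importantly, that the inflation introduces no spurious relaxed-$\rho$-degree cuts --- in particular that enlarging the cliques and distributing the $\rho$ block-edges across distinct private nodes never lets a block node accumulate more than $\rho$ crossings or lets a budget-saturated gadget node escape the consistency constraints. This amounts to rerunning the delicate case analysis of the relaxed-$1$ gadget with the new budget, which is routine but must be carried out carefully.
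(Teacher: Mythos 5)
Your high-level strategy is the same as the paper's: inflate the gadget-to-block connections so that every node's effective crossing budget drops back to $1$, and then rerun the relaxed-$1$ analysis. But the specific construction you propose breaks the reduction, because you route the $\rho$ thickened edges to \emph{private} block nodes and you only thicken edges that already exist. The entire enforcement mechanism of the relaxed-$1$ gadgets lives in the \emph{block} nodes, not the gadget nodes: a variable-node and its negation are forced to opposite sides precisely because they \emph{share} a neighbor in each block, and that shared neighbor cannot absorb two crossing edges when its budget is $1$; the same shared-neighbor trick drives the not-all-equal constraint in the clause-gadget. Once you give each gadget node its own private block neighbors, each private block node sees at most one crossing gadget edge against a budget of $\rho$, so it constrains nothing. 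Concretely, for any formula $\phi$ the cut that puts the entire False-block on one side and everything else on the other is a valid relaxed-$\rho$-degree cut of your graph: each gadget node with block connections incurs exactly $\rho$ crossings to its private False-block nodes, each private False-block node incurs exactly one crossing, and nodes without block connections incur none. So your reduction answers `Yes' on every instance. A second, independent failure is that the gadget nodes with no original block edges (the middle literal-node and the two auxiliary nodes adjacent to it, labeled $4$, $5$, $6$ in Figure~\ref{fig:5clause-gadget-v2}) keep a full budget of $\rho$ rather than an effective budget of $1$; already for $\rho\ge 2$ the pair of sets $\{4,5,6\}$ versus everything else is a spurious relaxed-$\rho$-degree cut, since nodes $4$ and $6$ each have only two neighbors outside that set.

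The paper's construction avoids both problems by going in the opposite direction on each count: rather than privatizing, it \emph{keeps} every original (shared) gadget-to-block edge and instead attaches $\rho-1$ new cross-block neighbors to every original block node, so that when the two blocks land on opposite sides each original block node is already charged $\rho-1$ crossings and can afford only one more --- exactly reinstating the relaxed-$1$ constraint on the shared neighbors; and it attaches $\rho-1$ new neighbors in \emph{each} block to \emph{every} gadget node, including those with no original block edges, so that every gadget node is likewise pre-loaded with $\rho-1$ crossings and retains an effective budget of $1$ for its original edges. With those two changes your argument would go through; as written, the backward direction of your reduction (cut $\Rightarrow$ NAE assignment) is false.
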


Using the above lemma, we can now prove the following stronger result.

\begin{theorem}
\label{thm:r-deg-cut}
For any $\rho\in \mathbb{Z}_{\ge1}$, the $\rho$-degree cut problem is $\NP$-complete.
\end{theorem}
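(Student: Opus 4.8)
The plan is to establish $\NP$-completeness by mirroring the passage from the relaxed-$1$-degree cut problem (Lemma~\ref{lem:r-r-r-hard}) to the $1$-degree cut problem (Theorem~\ref{thm:robustness-hard}), but now starting from the relaxed-$\rho$-degree construction of Lemma~\ref{lem:r-r-deg-cut}. Membership in $\NP$ has already been argued: a pair $(\mathcal{A},\mathcal{B})$ of disjoint nonempty sets is a polynomial-size certificate whose $\rho$-degree property is checkable in polynomial time (Definition~\ref{def:r-cut-deg}). It therefore remains to prove $\NP$-hardness, which I would do by a polynomial reduction from NAE3SAT. Let $\mathcal{G}_\rho(\phi)$ denote the gadget graph built for a formula $\phi$ in the proof of Lemma~\ref{lem:r-r-deg-cut}, whose True- and False-blocks are complete subgraphs. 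Form $\mathcal{H}_\rho(\phi)$ by taking three disjoint copies of $\mathcal{G}_\rho(\phi)$ and adding the edges needed to merge the three True-blocks into one complete subgraph $\mathcal{K}_T$ and the three False-blocks into another complete subgraph $\mathcal{K}_F$, exactly as $\mathcal{H}(\phi)$ was formed in Theorem~\ref{thm:robustness-hard}. The block sizes in $\mathcal{G}_\rho(\phi)$ are taken large enough (a fixed polynomial in $m$, $t$, and the constant $\rho$) that $|\mathcal{K}_T|, |\mathcal{K}_F| > 2\rho$; since $\rho$ is fixed, this keeps $\mathcal{H}_\rho(\phi)$ of polynomial size. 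The claim to prove is that $\mathcal{H}_\rho(\phi)$ admits a $\rho$-degree cut if and only if $\phi$ is NAE-satisfiable.

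The forward direction is the easy one. Given a truth assignment satisfying the NAE3SAT constraints, the construction of Lemma~\ref{lem:r-r-deg-cut} yields a relaxed-$\rho$-degree cut of each copy that places its True-block on one side and its False-block on the other; applying the \emph{same} assignment to all three copies makes these choices consistent across the merged blocks, so the three per-copy cuts glue into a single bipartition $(\mathcal{A},\mathcal{V}\setminus\mathcal{A})$ of $\mathcal{H}_\rho(\phi)$ in which every node has at most $\rho$ neighbors across the cut. Since this is a genuine partition, the leftover set $\mathcal{X}=\mathcal{V}\setminus(\mathcal{A}\cup\mathcal{B})$ is empty, and we obtain in particular a $\rho$-degree cut.

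The reverse direction is where the work lies, and controlling the leftover set $\mathcal{X}$ is the main obstacle: unlike the relaxed problem, where the two sets partition the graph and directly induce a $2$-coloring, here a node can be parked in $\mathcal{X}$, which in principle provides an escape valve for producing a $\rho$-degree cut with no underlying satisfying assignment. Given any $\rho$-degree cut $(\mathcal{A},\mathcal{B})$ of $\mathcal{H}_\rho(\phi)$, I would first prove a structural lemma about a complete block $\mathcal{K}$ with $|\mathcal{K}|>2\rho$: if a node of $\mathcal{K}$ lies in $\mathcal{A}$, then its block-neighbors outside $\mathcal{A}$ number $|\mathcal{K}|-|\mathcal{A}\cap\mathcal{K}|$, which must be at most $\rho$, so $|\mathcal{A}\cap\mathcal{K}|\ge|\mathcal{K}|-\rho$; as the same bound applies to $\mathcal{B}$ and the two sets are disjoint, at most one of $\mathcal{A},\mathcal{B}$ meets $\mathcal{K}$, and that set contains all but at most $\rho$ nodes of $\mathcal{K}$, the stragglers lying in $\mathcal{X}$. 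Hence each merged block $\mathcal{K}_T$, $\mathcal{K}_F$ is, up to at most $\rho$ stray nodes, confined to a single one of $\mathcal{A}$, $\mathcal{B}$, or $\mathcal{X}$.

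The remaining step is to rule out the degenerate configurations (both blocks essentially in $\mathcal{X}$, or both blocks essentially on the same side), which I would do using the three-copy device exactly as in Theorem~\ref{thm:robustness-hard}: each such configuration is shown to force $\mathcal{A}$ or $\mathcal{B}$ to be empty, or to violate the $\rho$-degree bound on some gadget node that is heavily attached to the blocks. This leaves at least one box in which $\mathcal{K}_T$ is essentially contained in $\mathcal{A}$ and $\mathcal{K}_F$ essentially in $\mathcal{B}$. Within that honest box the generalized versions of Lemmas~\ref{prop:cuts-TFblocks}--\ref{prop:cuts-inter-edges} apply, and the key quantitative point is that $\mathcal{X}$ can now touch at most $\rho$ nodes of each block; since the $\rho$-degree gadgets were built with exactly $\rho$ units of slack, the gadget arguments survive this bounded corruption and still force each variable/negation-node pair onto opposite sides and each clause-gadget to straddle both sides, producing a valid NAE-satisfying assignment. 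Combining the two directions gives $\NP$-hardness, and with membership in $\NP$ we conclude that the $\rho$-degree cut problem is $\NP$-complete. As noted in the text, taking $r=\rho+1$ then shows that the $r$-robustness problem, being its complement, is $\coNP$-complete for every $r\ge 2$, completing the complexity analysis.
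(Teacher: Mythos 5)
Your overall architecture is the paper's: glue several copies of $\mathcal{G}_\rho(\phi)$ along their True- and False-blocks, show the merged blocks cannot be split between $\mathcal{A}$ and $\mathcal{B}$, locate one ``clean'' box, and reduce to the relaxed-$\rho$-degree cut problem there. The gap is in the number of copies and in how you handle the stray block nodes in $\mathcal{X}$. You take three copies, as in the $\rho=1$ case, but your own structural lemma shows that $\mathcal{K}_T$ may have up to $\rho$ nodes in $\mathcal{X}$ and likewise $\mathcal{K}_F$. For $\rho\ge 2$ these up-to-$2\rho$ strays can be spread over the three boxes so that \emph{every} box has a corrupted True-block or a corrupted False-block, and no box with $\mathcal{K}_T$-part entirely in $\mathcal{A}$ and $\mathcal{K}_F$-part entirely in $\mathcal{B}$ need exist. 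Your fallback claim --- that the gadget arguments ``survive this bounded corruption'' because the gadgets have $\rho$ units of slack --- is not justified and is the crux of the matter: every contradiction in the analysis of $\mathcal{G}_\rho(\phi)$ is extracted from the degree budget of one \emph{specific} block node (e.g.\ the False-block node adjacent to both $x_i$ and $\bar{x}_i$, which already spends $\rho-1$ of its budget on the cross-block edges added in the construction and so has exactly one unit of slack left). If that particular node happens to be one of the strays in $\mathcal{X}$, it carries no reachability constraint at all, the contradiction evaporates, and a single corrupted variable- or clause-gadget suffices to produce a $\rho$-degree cut that encodes no NAE-satisfying assignment. So the reverse direction of your reduction does not go through as written.

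The repair is quantitative rather than conceptual: take $2\rho+1$ copies of $\mathcal{G}_\rho(\phi)$ instead of three (this is what the paper does). Then $|\mathcal{K}_T\cap\mathcal{X}|\le\rho$ means at most $\rho$ of the $2\rho+1$ boxes can have any True-block node in $\mathcal{X}$, so at least $\rho+1$ boxes have their True-blocks entirely in $\mathcal{A}$; symmetrically at least $\rho+1$ boxes have their False-blocks entirely in $\mathcal{B}$, and since $(\rho+1)+(\rho+1)>2\rho+1$ the pigeonhole principle yields a box that is clean on both blocks. On that box the arguments of Lemma~\ref{lem:r-r-deg-cut} apply verbatim (every relevant block node is genuinely in $\mathcal{A}$ or $\mathcal{B}$ and has its full degree constraint), one then shows no gadget node of that box lies in $\mathcal{X}$, and the restriction of the cut to that box is a relaxed-$\rho$-degree cut of $\mathcal{G}_\rho(\phi)$. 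With that change, and your (correct) forward direction and $\NP$-membership argument, the proof is complete.
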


Knowing that the $\rho$-degree cut is $\NP$-hard for any $\rho\in \mathbb{Z}_{\ge1}$, we conclude that its complement problem, i.e., the $r$-robustness problem for any $r\in \mathbb{Z}_{\ge2}$ is $\coNP$-hard. Combining this with the fact that the $r$-robustness problem is in $\coNP$ gives the following result. 

\begin{corollary}
\label{cor:robustness-hard}
For any $r \in \mathbb{Z}_{\ge{2}}$, the $r$-robustness problem is $\coNP$-complete.
\end{corollary}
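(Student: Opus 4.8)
The plan is to derive this corollary directly from Theorem~\ref{thm:r-deg-cut} by exhibiting the $r$-robustness problem as the exact decision-complement of the $(r-1)$-degree cut problem, and then invoking the standard fact that the complement of an $\NP$-hard problem is $\coNP$-hard. The key observation to nail down first is the off-by-one correspondence between the two parameters: I would unpack Definition~\ref{def:r_robust} to see that $\mathcal{G}$ fails to be $r$-robust precisely when there is a pair of nonempty, disjoint sets $\mathcal{A},\mathcal{B}\subset\mathcal{V}$ of which neither is $r$-reachable. By Definition~\ref{def:r_reachable}, a set $\mathcal{S}$ being \emph{not} $r$-reachable means every node $i\in\mathcal{S}$ satisfies $|\mathcal{V}_i\setminus\mathcal{S}|\le r-1$, i.e.\ has at most $r-1$ neighbors outside $\mathcal{S}$. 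This is exactly the defining condition of an $(r-1)$-degree cut in Definition~\ref{def:r-cut-deg}. Hence ``$\mathcal{G}$ is not $r$-robust'' holds if and only if ``$\mathcal{G}$ has an $(r-1)$-degree cut,'' so the `Yes' instances of the $r$-robustness problem are precisely the `No' instances of the $(r-1)$-degree cut problem, which is exactly what it means for the two decision problems to be complements.

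With the correspondence established, the next step is to set $\rho=r-1$ and note that $r\ge 2$ gives $\rho\ge 1$, placing us squarely in the regime covered by Theorem~\ref{thm:r-deg-cut}. That theorem asserts the $\rho$-degree cut problem is $\NP$-complete, and in particular $\NP$-hard, for every $\rho\in\mathbb{Z}_{\ge1}$. Invoking the general principle (already cited via~\cite{Arora'09} earlier in the section) that the complement of an $\NP$-hard problem is $\coNP$-hard, I conclude that the $r$-robustness problem is $\coNP$-hard for every $r\ge 2$.

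Finally I would combine this hardness with the membership already argued in the text: the $r$-robustness problem lies in $\coNP$ because a `No' instance admits a short certificate, namely a single pair of nonempty disjoint sets neither of which is $r$-reachable, whose validity is checkable in polynomial time by inspecting each node's neighborhood. Hardness together with membership yields $\coNP$-completeness, completing the proof.

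I do not anticipate a genuine obstacle here, since all the difficulty has been front-loaded into Theorem~\ref{thm:r-deg-cut}; the corollary is essentially bookkeeping. The only points requiring care are the indexing shift $\rho=r-1$ (so that $r\ge2$ corresponds to $\rho\ge1$, matching the theorem's hypothesis) and stating the complement relationship in the correct direction, namely that the complement of an $\NP$-hard language is $\coNP$-hard rather than the reverse. Getting either of these backwards would break the argument, so those are the two places I would double-check.
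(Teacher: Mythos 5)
Your proposal is correct and follows exactly the paper's own route: identifying the $r$-robustness problem as the complement of the $(r-1)$-degree cut problem, invoking Theorem~\ref{thm:r-deg-cut} for $\rho=r-1\ge1$, using the fact that the complement of an $\NP$-hard problem is $\coNP$-hard, and combining this with membership in $\coNP$. The indexing shift and the direction of the complementation are handled just as in the paper, so there is nothing to add.
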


Having established the complexity of the robustness problem, we now turn to the problem of \emph{approximating} the degree of robustness of graphs. 
To show that a graph is not $r$-robust, one needs to find an $(r-1)$-degree cut in that graph. 
For any given graph $\mathcal{G}$, let $\mathrm{OPT}(\mathcal{G})$ be the smallest nonnegative integer $\rho$ such that $\mathcal{G}$ has a $\rho$-degree cut.  Suppose that we have a (polynomial-time) approximation algorithm $\mathrm{ALG}$ whose input is graph $\mathcal{G}$ and whose output $\mathrm{ALG}(\mathcal{G})$ guarantees that graph $\mathcal{G}$ has an $\mathrm{ALG}(\mathcal{G})$-degree cut.
Define the {\it approximation ratio} $\alpha$ of the algorithm to be such that $\mathrm{ALG}(\mathcal{G}) \le \alpha\mathrm{OPT}(\mathcal{G})$ for all graphs $\mathcal{G}$.  Observe that since the $\rho$-degree cut problem is $\NP$-complete, one cannot hope to reach an approximation ratio of $\alpha=1$, unless $\Poly=\NP$.  In the following lemma, we show that it is unlikely to find an approximation algorithm for this problem with approximation ratio less than 2.

\begin{lemma}
\label{lem:hardness-apx}
The $\rho$-degree cut problem is not approximable within any factor less than 2, unless $\Poly=\NP$.
\end{lemma}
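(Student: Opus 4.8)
The plan is to establish inapproximability through a gap (threshold) argument that exploits the integrality of degree-cut values together with the $\NP$-completeness of the 1-degree cut problem (Theorem~\ref{thm:robustness-hard}). The key structural facts I would record first are: (i) admitting a $\rho$-degree cut is monotone in $\rho$, since the same pair $(\mathcal{A},\mathcal{B})$ witnessing a $\rho$-degree cut also witnesses a $\rho'$-degree cut for every $\rho'\ge\rho$; (ii) for any graph with at least two nodes, $\mathrm{OPT}(\mathcal{G})$ is a well-defined finite integer, as the cut $\mathcal{A}=\{u\}$, $\mathcal{B}=\{v\}$ shows $\mathrm{OPT}(\mathcal{G})\le n-1$; and (iii) any valid output of an approximation algorithm satisfies $\mathrm{ALG}(\mathcal{G})\ge\mathrm{OPT}(\mathcal{G})$, because $\mathrm{OPT}(\mathcal{G})$ is by definition the smallest achievable degree-cut value and $\mathrm{ALG}(\mathcal{G})$ is a value for which a degree cut exists.

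Next, I would assume for contradiction that there is a polynomial-time algorithm $\mathrm{ALG}$ with approximation ratio $\alpha<2$, and show how to use it to decide the 1-degree cut problem in polynomial time. Observe that $\mathcal{G}$ has a 0-degree cut if and only if it is disconnected (each part must be a union of connected components with no crossing edges), which is checkable in polynomial time, and any disconnected graph is immediately a `Yes' instance of the 1-degree cut problem by monotonicity. It therefore suffices to treat connected graphs, for which $\mathrm{OPT}(\mathcal{G})\ge 1$, and to note that $\mathcal{G}$ has a 1-degree cut precisely when $\mathrm{OPT}(\mathcal{G})=1$.

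The crux is the gap created by the factor $\alpha<2$. On a connected instance I would run $\mathrm{ALG}(\mathcal{G})$ and answer `Yes' if and only if $\mathrm{ALG}(\mathcal{G})<2$. If $\mathrm{OPT}(\mathcal{G})=1$, then $\mathrm{ALG}(\mathcal{G})\le\alpha\cdot 1<2$, so, being an integer, $\mathrm{ALG}(\mathcal{G})=1<2$. If instead $\mathrm{OPT}(\mathcal{G})\ge 2$, then $\mathrm{ALG}(\mathcal{G})\ge\mathrm{OPT}(\mathcal{G})\ge 2$. Hence $\mathrm{ALG}(\mathcal{G})<2$ holds exactly when $\mathrm{OPT}(\mathcal{G})=1$, i.e., exactly when $\mathcal{G}$ has a 1-degree cut, yielding a polynomial-time decision procedure for the $\NP$-complete 1-degree cut problem and therefore forcing $\Poly=\NP$.

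The one point requiring care, and the main obstacle to a fully rigorous argument, is ensuring that the $\NP$-hardness of the 1-degree cut problem survives the restriction to connected graphs, so that the disconnected case we dispatch in polynomial time does not absorb all the hardness. I would settle this by the observation that if the 1-degree cut problem were solvable in polynomial time on connected graphs, it would be solvable on all graphs (test connectivity, answer `Yes' if disconnected, otherwise invoke the connected-graph procedure), contradicting Theorem~\ref{thm:robustness-hard}; equivalently, one notes that the construction $\mathcal{H}(\phi)$ underlying that theorem is connected. With this in hand, the gap argument above completes the proof, and since it applies to every constant $\alpha<2$, no approximation ratio strictly below $2$ is achievable unless $\Poly=\NP$.
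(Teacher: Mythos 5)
Your proposal is correct and follows essentially the same gap argument as the paper: assume an $\alpha<2$ approximation, use integrality to conclude $\mathrm{ALG}(\mathcal{G})=1$ exactly when a connected graph has a 1-degree cut, and contradict the $\NP$-hardness established in Theorem~\ref{thm:robustness-hard}. You additionally spell out two details the paper leaves implicit (dispatching the disconnected/$\mathrm{OPT}=0$ case and noting that the hardness instances $\mathcal{H}(\phi)$ are connected), which only strengthens the write-up.
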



\section{Summary}
\label{sec:summary}
In this paper, we studied a graph property known as robustness which plays a key role in certain dynamics such as resilient consensus, contagion and bootstrap percolation.
While it is $\coNP$-complete to determine the degree of robustness in general graphs, and one can construct worst-case networks with very large connectivity (and minimum degree) and low robustness, we showed that the notions of robustness and connectivity coincide in three common models for complex networks.
In Erd\H os-R\'enyi random graphs, we showed that $r$-connectivity (and minimum degree) and $r$-robustness share the same threshold function. In one-dimensional geometric graphs, we proved that if the nodes are sufficiently spread apart,  $r$-connectedness is equivalent to $r$-robustness (regardless of how the node locations are generated). In the BA model for preferential attachment networks, we showed that when the initial network is robust, connectivity (and minimum degree) and robustness are equivalent. These findings indicate that those networks possess structure that makes them conducive to the dynamics described above; the implication of this for other classes of dynamics is a promising direction for future research.


\begin{appendix}
\subsection{Proof of Lemma~\ref{prop:cuts-TFblocks}}

\begin{IEEEproof}
Since each block is a complete graph with more than three nodes, cut $\mathcal{C}=(\mathcal{A},\mathcal{B})$ cannot separate the nodes in the same block; otherwise, there exists a node in the block that has at least two neighbors outside its own set.
\end{IEEEproof}

\subsection{Proof of Lemma~\ref{prop:cuts-var-clause-gadget}}
\begin{IEEEproof}
By Lemma~\ref{prop:cuts-TFblocks}, there are only two cases to consider: (i) all nodes in both the True and False-blocks are in $\mathcal{A}$ (or in $\mathcal{B}$), and (ii) all nodes in the True-block are in $\mathcal{A}$ and all nodes in the False-block are in $\mathcal{B}$ (or vice versa).

In case (i), if there exists a node from a variable-gadget in set $\mathcal{B}$, then that node immediately has at least two neighbors in $\mathcal{A}$, contradicting the definition of cut $\mathcal{C}$ (see Figure~\ref{fig:var-clause-gadget}-(a)). Similarly, it can be argued as follows that no node of any clause-gadget can be in set $\mathcal{B}$.  Referring to Figure~\ref{fig:5clause-gadget-v2}, the nodes labeled 1, 2, 3, 7, 8, and 9 cannot be in $\mathcal{B}$ since they would then have at least two neighbors in $\mathcal{A}$. Since nodes 2, 3, 7, and 8 are in $\mathcal{A}$, nodes 4 and 6 cannot be in $\mathcal{B}$ either. Then node 5 should also be in $\mathcal{A}$.  Hence, the only possibility is that all nodes in variable-gadgets and clause-gadgets are in $\mathcal{A}$. This makes $\mathcal{B}$ empty and violates the definition of cut $\mathcal{C}$. Thus, case (i) cannot hold and it only remains to study case (ii).

\begin{figure}[h]
\centering
\includegraphics[scale=0.5]{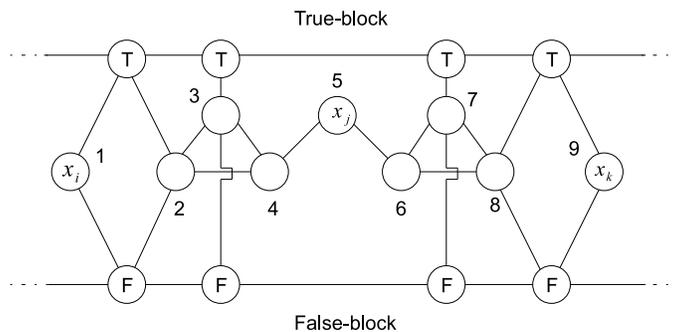}
\caption{All nine nodes in a clause-gadget with labels.}
\label{fig:5clause-gadget-v2}
\end{figure}

\begin{figure}[h]
\centering
\includegraphics[scale=0.5]{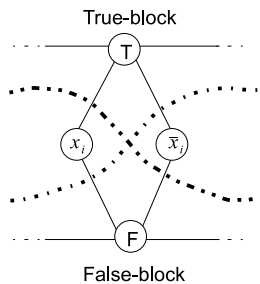}
\caption{The only two possible cuts through a variable-gadget resulting in all nodes having at most one neighbor on opposite sides of the cut.}
\label{fig:cut-var-gadget}
\end{figure}

In case (ii), if both nodes of a variable-gadget are in the same set (say $\mathcal{A}$), then a node from the False-block in set $\mathcal{B}$ has two neighbors in $\mathcal{A}$ (as seen in Figure~\ref{fig:var-clause-gadget}-(a)). This contradicts the definition of cut $\mathcal{C}$. The only possible cuts through variable-gadgets for this case are shown in Figure~\ref{fig:cut-var-gadget}, which leave any variable-node and its negation node on opposite sides of $\mathcal{C}$ and thus concludes the first property in the lemma.

\begin{figure}[h]
\centering
\includegraphics[scale=0.33]{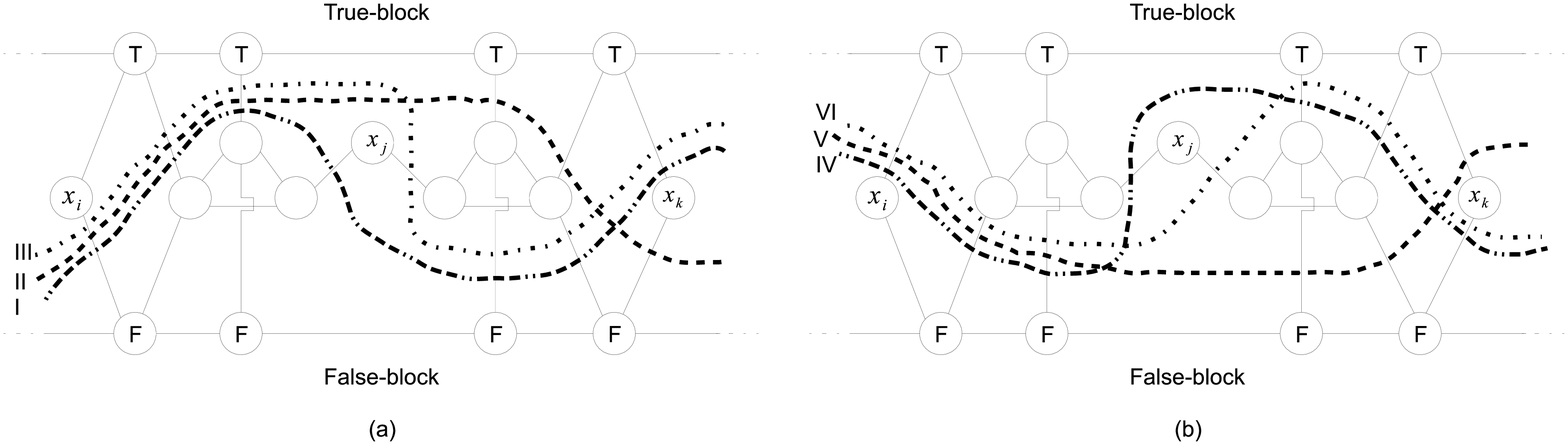}
\caption{The six allowed cuts through the clause-gadget shown in Figure \ref{fig:var-clause-gadget}-(b) that result in two 1-reachable but not 2-reachable sets.}
\label{fig:6clause-gadget}
\end{figure}

Moreover, in case (ii) suppose all three literal-nodes of a clause-gadget are in set $\mathcal{A}$ (the case that all three literal-nodes of a clause-gadget are in $\mathcal{B}$ can be handled via identical arguments). Since node 5 (in Figure~\ref{fig:5clause-gadget-v2}) is in set $\mathcal{A}$, then to respect the properties of cut $\mathcal{C}$, at least one of its neighbors should also lie in $\mathcal{A}$. Without loss of generality due to symmetry, assume that node 4 is in $\mathcal{A}$. This implies that node 2 is also in $\mathcal{A}$; otherwise, cut $\mathcal{C}$ results in a 2-reachable set. Finally, since both nodes 1 and 2 are in $\mathcal{A}$, the node in the False-block connected to both of them has two neighbors in $\mathcal{A}$. This contradicts the definition of $\mathcal{C}$; thus $\mathcal{C}$ cannot leave all literal-nodes in a clause-gadget in the same set. The only possible cuts through clause-gadgets for this case are the ones illustrated in Figure~\ref{fig:6clause-gadget}. It can be seen that none of these six cuts leaves all three literal-nodes of a clause-gadget on one side of the cut.  Hence the second property in the lemma also holds.
\end{IEEEproof}

\subsection{Proof of Lemma~\ref{prop:cuts-inter-edges}}
\begin{IEEEproof}
First, note that a literal-node has the same truth value as its corresponding variable-node if and only if they lie on the same side of cut $\mathcal{C}=(\mathcal{A},\mathcal{B})$. In the proof of Lemma~\ref{prop:cuts-var-clause-gadget}, it was shown that the only possible case for the True and False-blocks is the case that all nodes in the True-block are in $\mathcal{A}$ and all nodes in the False-block are in $\mathcal{B}$ (or vice versa).

In this case, assume that there exists a variable-node $x$ in set $\mathcal{A}$ such that its corresponding literal-node lies in set $\mathcal{B}$ (the case that the variable-node is in $\mathcal{B}$ and the corresponding literal-node is in $\mathcal{A}$ can be analyzed similarly). Then, node $x$ has the following two neighbors in set $\mathcal{B}$: its corresponding literal-node and the node in the False-block that it is connected to (see Figure~\ref{fig:var-clause}).
This contradicts the fact that $\mathcal{A}$ is not 2-reachable and therefore is not possible.
\end{IEEEproof}

\subsection{Proof of Lemma~\ref{lem:r-r-r-hard}}
\begin{IEEEproof}
We prove this claim by a reduction from NAE3SAT. Specifically, we show that graph $\mathcal{G}(\phi)$ has a relaxed-1-degree cut if and only if $\phi$ has a solution within the NAE3SAT constraints.

Suppose that $\mathcal{G}(\phi)$ has a relaxed-1-degree cut $\mathcal{C}=(\mathcal{A},\mathcal{V}\backslash \mathcal{A})$. 
By the first part of Lemma~\ref{prop:cuts-var-clause-gadget}, cut $\mathcal{C}$ has to go through the edges of each variable-gadget as depicted in Figure~\ref{fig:cut-var-gadget} and leaves each variable-node and its negation node on opposite sides of the cut, thereby specifying their truth assignments. Also, by the second part of Lemma~\ref{prop:cuts-var-clause-gadget}, the clause-gadgets are cut by $\mathcal{C}$ according to one of the six cases illustrated in Figure~\ref{fig:6clause-gadget}, which results in having at least one `True' and one `False' literal-node in each clause-gadget. Furthermore, note that by Lemma~\ref{prop:cuts-inter-edges}, the intermediate edges are never cut by $\mathcal{C}$. Hence, all the literal-nodes corresponding to the same variable-node are left in the same set as that variable-node and the negated literal-nodes are in the other set. Consequently, if $\mathcal{G}(\phi)$ has a relaxed-1-degree cut, then $\phi$ is satisfiable within the NAE3SAT constraints.

\begin{table}
\centering
    \begin{tabular}{|c|c|c|c|}
      \hline
      $x_i$ & $x_j$ & $x_k$ & Cut \\
      \hline
      T & T & T & no cut \\
      T & T & F & I \\
      T & F & T & II \\
      T & F & F & III \\
      F & F & T & IV \\
      F & T & F & V \\
      F & T & T & VI \\
      F & F & F & no cut \\
      \hline
    \end{tabular}
    \caption{The truth assignments corresponding to different cuts in a clause-gadget demonstrated in Figure~\ref{fig:6clause-gadget}.}
    \label{tab:6clause-gadget}
\end{table}

On the other hand, if $\phi$ has a solution under the NAE3SAT constraints, then a cut $\mathcal{C}=(\mathcal{A},\mathcal{V}\backslash \mathcal{A})$ can be found in $\mathcal{G}(\phi)$ such that (i) each variable-gadget is cut so that the variable-node and its negation node are connected to the blocks labeled with their truth values, and (ii) each clause-gadget is cut according to its truth assignment as illustrated in Table~\ref{tab:6clause-gadget}. It can be easily observed that using this cut, no node of graph $\mathcal{G}(\phi)$ is incident with more than one edge of the cut-set and hence $\mathcal{G}(\phi)$ has a relaxed-1-degree cut.
This proof of the $\NP$-hardness of the relaxed-1-degree cut problem, together with the fact that this problem is in $\NP$, shows that the relaxed-1-degree cut problem is $\NP$-complete.
\end{IEEEproof}


\subsection{Proof of Theorem~\ref{thm:robustness-hard}}

\begin{IEEEproof}
We show that the 1-degree cut problem is $\NP$-hard by showing that $\mathcal{H}(\phi)$ has a 1-degree cut if and only if $\mathcal{G}(\phi)$ has a relaxed-1-degree cut for any instance $\phi$ of NAE3SAT. It can be easily seen that if $\mathcal{G}(\phi)$ has a relaxed-1-degree cut then $\mathcal{H}(\phi)$ also has a relaxed-1-degree cut (e.g., simply replicate the cut in $\mathcal{G}(\phi)$ for each box in $\mathcal{H}(\phi)$) and thus a 1-degree cut. It only remains to show if $\mathcal{H}(\phi)$ has a 1-degree cut then $\mathcal{G}(\phi)$ has a relaxed-1-degree cut. Assume that sets $\mathcal{A},\mathcal{B}$ and $\mathcal{X}$ partition the nodes of $\mathcal{H}(\phi)$ such that (i) $\mathcal{A}$ and $\mathcal{B}$ are nonempty, and (ii) each node in $\mathcal{A}$ and $\mathcal{B}$ has at most one neighbor outside its own set (i.e., $\mathcal{A}$, $\mathcal{B}$ and $\mathcal{X}$ specify a 1-degree cut).

First, for any clique in $\mathcal{H}(\phi)$ with at least three nodes, the fact that $\mathcal{A}$ and $\mathcal{B}$ are not 2-reachable implies the following two properties:
\begin{enumerate}
  \item Set $\mathcal{X}$ can only contain zero, one, or all nodes of the clique, and
  \item If a node of the clique is in $\mathcal{A}$ (resp. $\mathcal{B}$), then no node of that clique is in $\mathcal{B}$ (resp. $\mathcal{A}$).
\end{enumerate}

Let $\mathcal{T}$ and $\mathcal{F}$ denote the set of all nodes in the True and False-blocks of graph $\mathcal{H}(\phi)$. Several different scenarios can take place for sets $\mathcal{T}$ and $\mathcal{F}$ with respect to sets $\mathcal{A},\mathcal{B}$ and $\mathcal{X}$. First, consider the case that both $\mathcal{T}$ and $\mathcal{F}$ are subsets of $\mathcal{A}$ (the case that both $\mathcal{T}$ and $\mathcal{F}$ are subsets of $\mathcal{B}$ can be analyzed similarly). Since each box of $\mathcal{H}(\phi)$ is isomorphic to $\mathcal{G}(\phi)$, by the same argument as in the proof of Lemma~\ref{prop:cuts-var-clause-gadget} this scenario is not possible as it would leave set $\mathcal{B}$ empty. Now, by property (2) stated above and without loss of generality due to symmetry, assume that $\mathcal{T}\subseteq \mathcal{A}\cup \mathcal{X}$ and $\mathcal{F}\subseteq \mathcal{B}\cup \mathcal{X}$.
If $\mathcal{T} \subseteq \mathcal{X}$ or $\mathcal{F} \subseteq \mathcal{X}$, the same argument as above yields that $\mathcal{A}$ or $\mathcal{B}$ would be empty, respectively.  Therefore, by property (1) above, $|\mathcal{T}\cap \mathcal{X}|\le 1$ and $|\mathcal{F}\cap \mathcal{X}|\le 1$.  Consequently, there exist at least two boxes in $\mathcal{H}(\phi)$ whose True-blocks are subsets of $\mathcal{A}$, and at least two boxes whose False-blocks are subsets of $\mathcal{B}$.  By the pigeonhole principle, there exists a box in $\mathcal{H}(\phi)$, denoted by $\mathcal{G}'(\phi)$, such that its True-block is a subset of $\mathcal{A}$ and its False-block is a subset of $\mathcal{B}$.  We show that no node of $\mathcal{G}'(\phi)$ can be in set $\mathcal{X}$.

\begin{figure}[ht]
\centering
\includegraphics[scale=0.5]{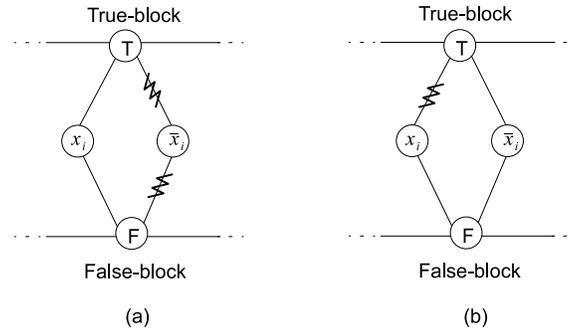}
\caption{Here, the True-block is a subset of $\mathcal{A}$, the False-block is a subset of $\mathcal{B}$ and the variable-node corresponding to $\bar{x}_i$ is in $\mathcal{X}$. Figure (a) shows the edges connected to node $\bar{x}_i$ that are cut. In Figure (b), without loss of generality, it is assumed that node $x_i$ is either in $\mathcal{B}$ or $\mathcal{X}$ and its incident edges with the other endpoints in $\mathcal{A}$ are marked. It can be seen that the node in the True-block has two adjacent nodes outside $\mathcal{A}$.}
\label{fig:noX}
\end{figure}

Suppose that there exists a node in a variable-gadget in $\mathcal{G}'(\phi)$ that lies in $\mathcal{X}$. Then the two nodes in the True and False-blocks connected to that node have a neighbor outside their containing sets, i.e., a neighbor in $\mathcal{X}$ (Figure~\ref{fig:noX}-(a)). Moreover, the other variable-node in that variable-gadget is in $\mathcal{A},\mathcal{B}$ or $\mathcal{X}$. In all these cases this node is in a different set from at least one of its two neighbors in the True and False-blocks (Figure~\ref{fig:noX}-(b)). Hence, there exists at least one node in $\mathcal{A}$ or $\mathcal{B}$ that has two neighbors outside $\mathcal{A}$ or $\mathcal{B}$, respectively, contradicting the fact that we are considering a $1$-degree cut. Therefore, no node in the variable-gadgets can be in set $\mathcal{X}$.

Furthermore, observe that since the True-block is a subset of $\mathcal{A}$ and the False-block is a subset of $\mathcal{B}$, then each variable-node has at least one neighbor outside its containing set. Since it was assumed that each node in $\mathcal{A}$ and $\mathcal{B}$ has at most one neighbor outside its set, it follows that all other neighbors of a variable-node should lie in the same set as that node. Therefore, in $\mathcal{G}'(\phi)$, the endpoints of all intermediate edges, i.e., the edges connecting literal-nodes in the clause-gadgets to the corresponding variable-nodes, lie in the same sets. This, in combination with the fact that none of the variable-nodes in $\mathcal{G}'(\phi)$ are in $\mathcal{X}$ shows that no literal-node in any clause-gadget of $\mathcal{G}'(\phi)$ is in $\mathcal{X}$. It only remains to show that the non-literal-nodes in the clause-gadgets of $\mathcal{G}'(\phi)$, i.e., nodes labeled 2, 3, 4, 6, 7 and 8 as in Figure~\ref{fig:5clause-gadget-v2}, are not in $\mathcal{X}$ either. By the same argument as for variable-nodes, nodes labeled 2 and 8 cannot lie in $\mathcal{X}$. Also, note that since the True and False-blocks of $\mathcal{G}'(\phi)$ are subsets of $\mathcal{A}$ and $\mathcal{B}$, respectively, and node 2 (resp. node 8) is either in $\mathcal{A}$ or $\mathcal{B}$, then one of the edges connecting node 2 (resp. node 8) to the True and False-blocks is excised. Therefore, all its other neighbors, i.e., nodes 3 and 4 (resp. nodes 6 and 7), should lie in the same set as node 2 (resp. node 8). As a result, nodes 3, 4, 6 and 7 cannot be in $\mathcal{X}$. Consequently, no node of $\mathcal{G}'(\phi)$ lies in $\mathcal{X}$.

We have thus shown that if $\mathcal{H}(\phi)$ has a 1-degree cut then $\mathcal{G}'(\phi)$ has a relaxed-1-degree cut. Since $\mathcal{G}(\phi)$ and $\mathcal{G}'(\phi)$ are isomorphic, graph $\mathcal{G}(\phi)$ also has a relaxed-1-degree cut.
Consequently, $\mathcal{H}(\phi)$ has a 1-degree cut if and only if $\mathcal{G}(\phi)$ has a relaxed-1-degree cut for any NAE3SAT instance $\phi$.
Hence, the 1-degree cut problem is $\NP$-hard, and thus $\NP$-complete by virtue of being in $\NP$.
\end{IEEEproof}

\subsection{Proof of Lemma~\ref{lem:r-r-deg-cut}}

\begin{IEEEproof}
Recall that in order to show that the relaxed-1-degree cut problem is $\NP$-hard, for any NAE3SAT instance $\phi$ we constructed a graph $\mathcal{G}(\phi)=\{\mathcal{V},\mathcal{E}\}$ such that $\mathcal{G}(\phi)$ had a relaxed-1-degree cut if and only if $\phi$ was satisfiable within the NAE3SAT constraints. Here, for any $\rho\in \mathbb{Z}_{\ge 1}$, we make two modifications to $\mathcal{G}(\phi)$ to construct graph $\mathcal{G}_{\rho}(\phi)=\{\mathcal{V}_{\rho},\mathcal{E}_{\rho}\}$ so that $\mathcal{G}_{\rho}(\phi)$ has a relaxed-$\rho$-degree cut if and only if the original graph $\mathcal{G}(\phi)$ has a relaxed-1-degree cut. We can then conclude that the relaxed-$\rho$-degree cut problem is $\NP$-hard and thus $\NP$-complete by virtue of being in $\NP$.

The two modifications to graph $\mathcal{G}(\phi)$ are: (i) a modification to the nodes in the True and False-blocks of $\mathcal{G}(\phi)$, and (ii) a modification to the other nodes of $\mathcal{G}(\phi)$, i.e. nodes in the variable and clause-gadgets. In modification (i), for each node $v$ in the True-block (resp. False-block) of $\mathcal{G}(\phi)$, we add $\rho-1$ nodes in the False-block (resp. True-block) and connect them to $v$. Thus, this step adds a total of $2(\rho-1)(4m+t)$ nodes to the graph. In modification (ii), for each node $u$ in the variable and clause-gadgets of $\mathcal{G}(\phi)$ that is not in the True or False-blocks, we add $\rho-1$ nodes in each of the True and False-blocks and connect $u$ to them. This step adds a total of $2(\rho-1)(9m+2t)$ nodes to the graph. The new graph is called $\mathcal{G}_{\rho}(\phi)$. Note that the nodes added to the True and False-blocks of $\mathcal{G}(\phi)$ to construct $\mathcal{G}_{\rho}(\phi)$ are connected to all other nodes in those blocks and hence the True and False-blocks of $\mathcal{G}_{\rho}(\phi)$ are complete subgraphs (each containing $(4m+t)\rho + (9m+2t)(\rho-1)$ nodes).  Figure~\ref{fig:G_2-phi} demonstrates $\mathcal{G}_{2}(\phi)$ for the graph $\mathcal{G}(\phi)$ shown in Figure~\ref{fig:var-clause}.

\begin{figure}[ht]
\centering
\includegraphics[scale=0.3]{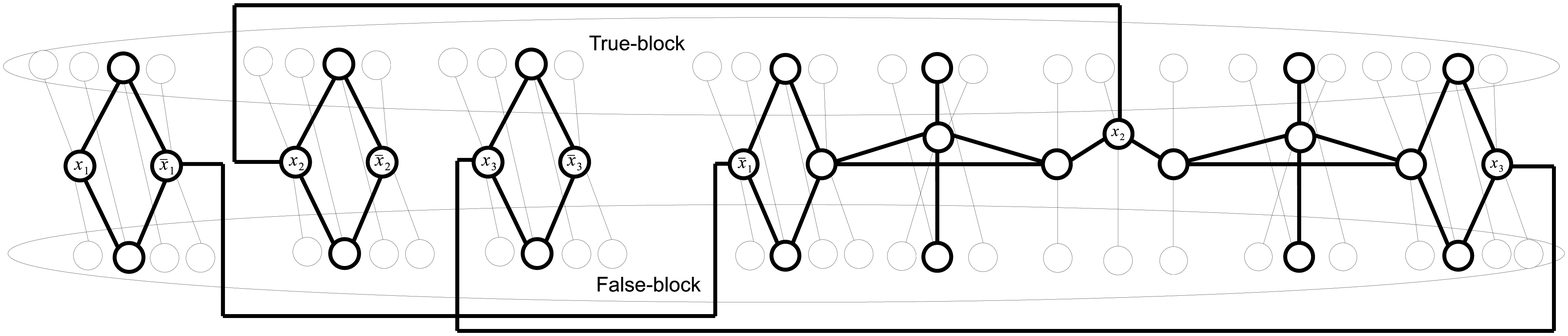}
\caption{Graph $\mathcal{G}_{2}(\phi)$ constructed from the graph $\mathcal{G}(\phi)$ demonstrated in Figure~\ref{fig:var-clause}. The highlighted nodes and edges are the ones that exist in $\mathcal{G}(\phi)$ as well.}
\label{fig:G_2-phi}
\end{figure}

It is easy to observe that if $\mathcal{G}(\phi)$ has a relaxed-1-degree cut $\mathcal{C}=(\mathcal{A}',\mathcal{B}')$, then $\mathcal{G}_{\rho}(\phi)$ has a relaxed-$\rho$-degree cut $\mathcal{C}_{\rho}=(\mathcal{A},\mathcal{B})$, by assigning the nodes corresponding to those in $\mathcal{A}'$ and $\mathcal{B}'$ in $\mathcal{G}(\phi)$ to sets $\mathcal{A}$ and $\mathcal{B}$ in $\mathcal{G}_{\rho}(\phi)$, respectively. Furthermore, each newly added node in the True and False-blocks of $\mathcal{G}_{\rho}(\phi)$ should be assigned to the same set as the rest of the nodes in its block (recall that by Lemma~\ref{prop:cuts-TFblocks}, in the relaxed-1-degree cut problem all the nodes in each of the True and False-blocks belong to the same subset of nodes). So, it remains to prove the converse statement.

If graph $\mathcal{G}_{\rho}(\phi)$ has a relaxed-$\rho$-degree cut, then there exists a cut $\mathcal{C}_{\rho}=(\mathcal{A},\mathcal{V}_{\rho}\backslash\mathcal{A})$ that partitions the nodes of $\mathcal{G}_{\rho}(\phi)$ into two nonempty sets $\mathcal{A}$ and $\mathcal{B}=\mathcal{V}_{\rho}\backslash \mathcal{A}$ such that no node of the graph has more than $\rho$ neighbors outside its containing set.   Note that since the True (resp. False) block is a clique with more than $2\rho + 1$ nodes for $\rho \in \mathbb{Z}_{\ge 1}$, the relaxed-$\rho$-degree cut cannot separate the nodes in the True (resp. False) block.

Next, note that the True and False-blocks of $\mathcal{G}_{\rho}(\phi)$ cannot both lie in set $\mathcal{A}$ (or similarly $\mathcal{B}$).  For if they did, it can be easily checked that except for the nodes labeled 4, 5 and 6 in Figure~\ref{fig:5clause-gadget-v2}, if any other node of $\mathcal{G}_{\rho}(\phi)$ lies in $\mathcal{B}$, then it has at least $2\rho$ neighbors outside its containing set in the True and False-blocks and hence violates the definition of cut $\mathcal{C}_{\rho}$. Now nodes labeled 4 and 6 cannot lie in $\mathcal{B}$, since they have $2\rho$ neighbors in set $\mathcal{A}$. Consequently, node 5 should also lie in $\mathcal{A}$ which makes $\mathcal{B}$ empty and hence is not allowed.

Without loss of generality, we now assume that the True-block of $\mathcal{G}_{\rho}(\phi)$ is a subset of
$\mathcal{A}$ and its False-block is a subset of $\mathcal{B}$. In the following, we prove that in this case, cut $\mathcal{C}_{\rho}$ has three properties.

First, similar to the relaxed-1-degree cut problem, both a variable-node and its negation node in a variable-gadget cannot lie in set $\mathcal{A}$ (resp. $\mathcal{B}$); otherwise, the node in the False-block (resp. True-block) connected to both these nodes has $\rho+1$ neighbors outside its set.

Second, since a variable-node and its negation node in a variable-gadget lie in different sets, each of these nodes are incident with $\rho$ edges in the cut-set. Therefore, no other edge connected to these nodes can be excised by cut $\mathcal{C}_{\rho}$. In particular, the intermediate edges connecting literal-nodes in clause-gadgets to their corresponding variable-nodes should be left uncut.

Third, we argue that the three literal-nodes in a clause-gadget cannot be in the same subset of nodes, say $\mathcal{A}$. Suppose that the literal-nodes corresponding to nodes labeled 1 and 9 in Figure~\ref{fig:5clause-gadget-v2} are in set $\mathcal{A}$. Due to the same argument as for variable-gadgets, the nodes that share a neighbor in the True and False-blocks with these nodes, i.e., the nodes labeled 2 and 8 in Figure~\ref{fig:5clause-gadget-v2}, should lie in $\mathcal{B}$. Then nodes labeled 3, 4, 6 and 7 in Figure~\ref{fig:5clause-gadget-v2} should also be in $\mathcal{B}$. Now if node 5 in Figure~\ref{fig:5clause-gadget-v2} lies in $\mathcal{A}$, then it has at least $\rho+1$ neighbors outside its containing set, which is not admissible.

The above three properties force $\mathcal{C}_{\rho}$ to partition the nodes of $\mathcal{G}_{\rho}(\phi)$ into two sets $\mathcal{A}$ and $\mathcal{B}$ such that sets $\mathcal{A}'=\mathcal{V}\cap \mathcal{A}$ and $\mathcal{B}'=\mathcal{V}\cap \mathcal{B}$ are nonempty and partition the nodes of $\mathcal{G}(\phi)$ into one of the forms shown in Figures~\ref{fig:cut-var-gadget} and~\ref{fig:6clause-gadget}, without cutting through the True or False-blocks of $\mathcal{G}(\phi)$. Therefore, if graph $\mathcal{G}_{\rho}(\phi)$ has a relaxed-$\rho$-degree cut then $\mathcal{G}(\phi)$ has a relaxed-1-degree cut and hence the relaxed-$\rho$-degree cut problem is $\NP$-hard.
\end{IEEEproof}


\subsection{Proof of Theorem~\ref{thm:r-deg-cut}}

\begin{IEEEproof}
As we did in the $\NP$-hardness proof of the 1-degree cut problem, for the $\rho$-degree cut problem we build graph $\mathcal{H}_{\rho}(\phi)$ by taking $2\rho+1$ copies of $\mathcal{G}_{\rho}(\phi)$ and adding edges to construct complete subgraphs on all the nodes in the True and False-blocks of these $2\rho+1$ copies. Again, each copy of $\mathcal{G}_{\rho}(\phi)$ in $\mathcal{H}_{\rho}(\phi)$ is called a box.

We prove that the $\rho$-degree cut problem is $\NP$-hard by showing that $\mathcal{H}_{\rho}(\phi)$ has a $\rho$-degree cut if and only if $\mathcal{G}_{\rho}(\phi)$ has a relaxed-$\rho$-degree cut. It is not hard to see that if $\mathcal{G}_{\rho}(\phi)$ has a relaxed-$\rho$-degree cut, then $\mathcal{H}_{\rho}(\phi)$ has a $\rho$-degree cut and hence it remains to show the converse statement also holds.

Assume that $\mathcal{H}_{\rho}(\phi)$ has a $\rho$-degree cut, and that sets $\mathcal{A}$, $\mathcal{B}$ and $\mathcal{X}$ partition the nodes of $\mathcal{H}_{\rho}(\phi)$ with $\mathcal{A}$ and $\mathcal{B}$ nonempty and at most $\rho$-reachable.  
It can be easily observed that the following generalized version of the two properties in the proof of Theorem~\ref{thm:robustness-hard} holds for cliques with at least $2\rho+1$ nodes:

\begin{enumerate}
  \item Set $\mathcal{X}$ can contain up to $\rho$ nodes or all nodes of the clique, and
  \item If a node of the clique is in $\mathcal{A}$ (resp. $\mathcal{B}$), then no node of that clique is in $\mathcal{B}$ (resp. $\mathcal{A}$).
\end{enumerate}

We denote the True and False-blocks of $\mathcal{H}_{\rho}(\phi)$ by $\mathcal{T}$ and $\mathcal{F}$, respectively. As in the proof of Theorem~\ref{thm:robustness-hard}, it can be argued that the only possible cases for $\mathcal{T}$ and $\mathcal{F}$ are $\mathcal{T}\subseteq \mathcal{A}\cup \mathcal{X}$ and $\mathcal{F}\subseteq \mathcal{B}\cup \mathcal{X}$, or $\mathcal{T}\subseteq \mathcal{B}\cup \mathcal{X}$ and $\mathcal{F}\subseteq \mathcal{A}\cup \mathcal{X}$. The analysis of the latter case is removed due to symmetry. The True-block cannot be a subset of $\mathcal{X}$, as it would leave set $\mathcal{A}$ empty.  Similarly, the False-block cannot be a subset of $\mathcal{X}$, as it would leave set $\mathcal{B}$ empty.  Thus by property (1), $|\mathcal{T}\cap \mathcal{X}|\le \rho$ and so there exist $\rho+1$ boxes in $\mathcal{H}_{\rho}(\phi)$ whose True-blocks are subsets of $\mathcal{A}$. The same argument holds for the False-blocks of the boxes in $\mathcal{H}_{\rho}(\phi)$. Consequently, there is a box in $\mathcal{H}_{\rho}(\phi)$ whose True and False-blocks are subsets of $\mathcal{A}$ and $\mathcal{B}$, respectively. This box is denoted by $\mathcal{G}'_{\rho}(\phi)$.  We show that no node in $\mathcal{G}'_{\rho}(\phi)$ can lie in set $\mathcal{X}$.

Suppose that a node in a variable-gadget of $\mathcal{G}'_{\rho}(\phi)$ lies in $\mathcal{X}$. Regardless of the set the other node in that variable-gadget is in, either the node in the True-block or the one in the False-block to which both these variable-nodes are connected has $\rho+1$ neighbors outside its set, which is not allowed (see Figure~\ref{fig:noX-G_2} for an illustration when $\rho = 2$).

\begin{figure}[ht]
\centering
\includegraphics[scale=0.45]{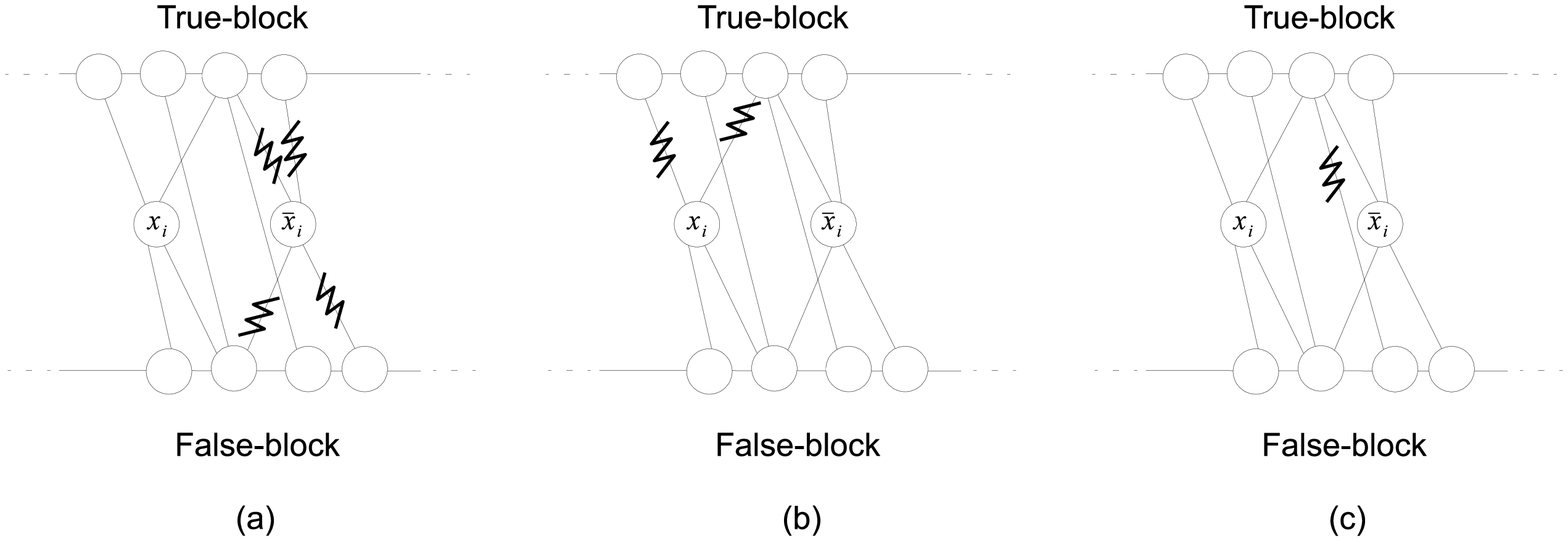}
\caption{The True and False-blocks are subsets of $\mathcal{A}$ and $\mathcal{B}$, respectively, and the variable-node corresponding to $\bar{x}_i$ is in $\mathcal{X}$. Figure (a) shows the edges incident to node $\bar{x}_i$ that are excised. In Figure (b), without loss of generality, it is assumed that node $x_i$ is either in set $\mathcal{B}$ or $\mathcal{X}$ and the edges that connect the nodes in $\mathcal{A}$ to it are depicted. Figure (c) demonstrates another edge that connects a node in $\mathcal{A}$ that is in the True-block to a node in $\mathcal{B}$ that is in the False-block. It can be seen that the node in the True-block connected to both the nodes in the variable-gadget has three neighbors outside its set.}
\label{fig:noX-G_2}
\end{figure}

Moreover, since the True and False-blocks of $\mathcal{G}'_{\rho}(\phi)$ are subsets of $\mathcal{A}$ and $\mathcal{B}$, respectively, each variable-node has $\rho$ neighbors outside its set and hence cannot afford to be connected to another node outside its set. Therefore, the intermediate edges connecting literal-nodes to their corresponding variable-nodes cannot be excised. This in combination with the fact that no variable-node is in $\mathcal{X}$ shows that no literal-node in any clause-gadget is in $\mathcal{X}$. Also, by the same argument provided for variable-gadgets, it can be shown that the nodes labeled 2 and 8 in Figure~\ref{fig:5clause-gadget-v2} cannot lie in set $\mathcal{X}$. Moreover, the nodes labeled 2, 3 and 4 in Figure~\ref{fig:5clause-gadget-v2} (resp. nodes labeled 6, 7 and 8) should lie in the same set; otherwise, node 2 (resp. node 8) has more than $\rho$ neighbors outside its set. Thus, no node in any clause-gadget can be in $\mathcal{X}$. As a result, no node in $\mathcal{G}'_{\rho}(\phi)$ is in $X$.

Thus, if $\mathcal{H}_{\rho}(\phi)$ has a $\rho$-degree cut, then $\mathcal{G}'_{\rho}(\phi)$ and hence $\mathcal{G}_{\rho}(\phi)$ has a relaxed-$\rho$-degree cut, due to the fact that $\mathcal{G}'_{\rho}(\phi)$ and $\mathcal{G}_{\rho}(\phi)$ are isomorphic. 
Therefore, the converse statement is proved which concludes the $\NP$-hardness of the $\rho$-degree cut problem.  Since the $\rho$-degree cut problem is in $\NP$, it can be concluded that it is $\NP$-complete.
\end{IEEEproof}

\subsection{Proof of Lemma~\ref{lem:hardness-apx}}

\begin{IEEEproof}
Assume that some polynomial-time algorithm $\mathrm{ALG}$ provides an approximation ratio $1< \alpha < 2$ for all graphs $\mathcal{G}$.  Consider the set of connected graphs that have a 1-degree cut; on any graph $\mathcal{G}$ from this set, the output of the algorithm must satisfy $\mathrm{ALG}(\mathcal{G}) \le \alpha\mathrm{OPT}(\mathcal{G}) < 2$, i.e., $\mathrm{ALG}(\mathcal{G}) = 1$.  Thus, given any connected graph $\mathcal{G}$, the algorithm would output $\mathrm{ALG}(\mathcal{G}) = 1$ if and only if the graph has a $1$-degree cut, contradicting the fact that this is an $\NP$-hard problem (as shown in Theorem~\ref{thm:robustness-hard}).  Therefore, an approximation ratio less than 2 is not obtainable for the $\rho$-degree cut problem unless $\Poly=\NP$.
\end{IEEEproof}

\begin{remark}
Note that this proof holds for any integer-valued optimization problem whose decision version is $\NP$-hard with parameter equal to $1$.
\end{remark}

\end{appendix}

\bibliographystyle{siam}
\bibliography{refs}
\end{document}